\newtheorem{thm}{Theorem}
\newtheorem{lem}{Lemma}
\newtheorem{coro}{Corollary}
\newcommand{\argmin}{\operatornamewithlimits{arg\,min}}
\newcommand{\argmax}{\operatornamewithlimits{arg\,max}}
\newcommand{\zap}[1]{}
\newcommand{\Bin}{\mathrm{Bin}}
\newcommand{\PB}{\mathrm{PB}}
\newcommand{\Support}{\mathrm{supp}}
\def\eqd{\,{\buildrel d \over =}\,}
\def\rightarrowd{\,{\buildrel \mathcal L \over \longrightarrow}\,}
\begin{document}
\title{Probabilistic Group Testing under Sum Observations: \\
A Parallelizable 2-Approximation for Entropy Loss}
\author[1]{Weidong Han}
\author[2]{Purnima Rajan}
\author[3]{Peter I. Frazier}
\author[4]{Bruno M. Jedynak}
\affil[1]{Department of Operations Research and Financial Engineering, Princeton University}
\affil[2]{Department of Computer Science, Johns Hopkins University}
\affil[3]{School of Operations Research and Information Engineering, Cornell University}
\affil[4]{Department of Mathematics and Statistics, Portland State University}
\renewcommand\Authands{ and }
\date{\today}
\maketitle


\begin{abstract}
We consider the problem of group testing with sum observations and noiseless answers, in which we aim to locate multiple objects by querying the number of objects in each of a sequence of chosen sets. We study a probabilistic setting with entropy loss, in which we assume a joint Bayesian prior density on the locations of the objects and seek to choose the sets queried to minimize the expected entropy of the Bayesian posterior distribution after a fixed number of questions. 
We present a new non-adaptive policy, called the dyadic policy, show it is optimal among non-adaptive policies, and is within a factor of two of optimal among adaptive policies.  This policy is quick to compute, its nonadaptive nature makes it easy to parallelize, and our bounds show it performs well even when compared with adaptive policies.
We also study an adaptive greedy policy, which maximizes the one-step expected reduction in entropy, and show that it performs at least as well as the dyadic policy, offering greater query efficiency but reduced parallelism.
Numerical experiments demonstrate that both procedures outperform a divide-and-conquer benchmark policy from the literature, called sequential bifurcation, and show how these procedures may be applied in a stylized computer vision problem. 
\end{abstract}

\section{Introduction}

We consider the following set-guessing problem, which is similar to classical group testing \cite{Du2000}, but differs in the form of the observations. Let $\Omega=\mathbb R$ be the real line and $\theta =(\theta_1,\dots,\theta_k)\in \Omega^k$ be a vector containing the unknown locations of $k$ objects, where $k \geq 1$ is known. One can sequentially choose subsets $A_1, A_2, \dots$ of $\Omega$, query the number of objects in each set, and obtain a series of noiseless answers $X_1,X_2,\dots$. Our goal is to devise a method for choosing the questions that allows us to find $\theta$ as accurately as possible, given a finite budget of questions. We work in a Bayesian setting, and use the entropy of the posterior distribution on $\theta$ to measure accuracy.

We consider both adaptive policies, i.e., policies that choose the next question $A_n$ based on previous answers, and non-adaptive policies, i.e., policies that choose all questions in advance.  Adaptive policies promise to better localize the objects within the given query budget, by adapting later questions to provide more useful information, but non-adaptive policies offer easy parallelization because all questions may be asked simultaneously.

In this paper, we present two policies: a new non-adaptive policy, called the dyadic policy, which splits the search domain into successively finer partitions; and an adaptive policy, called the greedy policy, which chooses questions to maximize the one-step expected reduction in entropy. We make the following contributions:

We show that the dyadic policy achieves an information-theoretic lower bound on the expected entropy reduction achievable by a non-adaptive policy, showing it is optimal among non-adaptive policies.  We also show that the dyadic policy's performance is within a factor of two of a lower bound on the entropy reduction under any policy, adaptive or non-adaptive.  Moreover, this non-adaptive policy is easy to compute and provides a posterior distribution that supports fast computation. Specifically, after $N$ questions and answers, the dyadic policy allows for explicitly computing the expected number of objects within each element of a partition of $\Omega^k$ into $2^N$ bins which can be used in a second stage of querying.   
We also further characterize the entropy of the posterior under this policy providing an explicit expression for its expected value and its asymptotic variance, and by showing that it is asymptotically Normally distributed.

We also consider the greedy policy, and show its performance is at least as good as that of the dyadic policy, and in some cases is strictly better.  Thus, this policy offers improved query efficiency, though it does not support parallelization and requires substantially more computation than the dyadic policy, making it the more appropriate choice for applications that do not allow asking questions in parallel, and for which questions are substantially more expensive than computation.

We also compare these policies against benchmarks and show that they offer substantial performance benefits over the previous state-of-the-art.


\subsection{Literature Review}
The previous literature on similar problems can be classified into two groups: those that consider a single object ($k=1$); and those that consider multiple objects ($k\ge 1$).

Among single-object versions of this problem, the earliest
is the R\'enyi-Ulam game \cite{Ulam1976,Renyi,berlekamp64}.  In this game, one person (the responder) thinks of a number between one and one million and another person (the questioner) chooses a sequence of subsets to query in order to find this number.  The responder can answer either YES or NO and is allowed to lie a given number of times.

Variations of the Renyi-Ulam game have been considered in \cite{Marini2005}. Among these variations, the following continuous probabilistic version, first studied in \cite{Pelc1989}, is similar to the problem we consider: The responser thinks of a number $\theta \in[0,1]$ and the questioner aims to find a set $A\subset [0,1]$ with measure less than $\epsilon$ such that $\theta \in A$ with probability at least $q$. In addition, the responser lies with probability no more than $p$. Whether the questioner can win this game based on the error probability $p$ is analyzed and searching algorithms using $O(\log \frac{1}{\epsilon})$ queries are provided.


Among previous work on the single-object problem, perhaps the closest to the current work is \cite{JAP}, which considered a Bayesian setting and used the entropy of the posterior distribution to measure of accuracy, as we do here.  It considered two policies, a greedy policy called probabilistic bisection, which was originally proposed in \cite{Horstein63} and further studied in \cite{castro2009active,WaeberFH13}, and the dyadic policy.  \cite{TsiligkaridisSH13} generalized the probabilistic bisection policy to multiple questioners. Here, we generalize both policies to multiple objects.

Our work contrasts with this previous work on the single-object problem by considering multiple objects.

The previous literature includes work on three multiple-object problems:
the Group Testing problem \cite{Du2000,Stinson2000,Eppstein2007,Harvey2007,Porat2008,li2014group};
the subset-guessing game associated with the Random Chemistry algorithm \cite{Kauffman1996,Buzas2013};
and the Guessing Secret game \cite{Chung2001}. We denote the collection of objects by $S$. 
In the Group Testing problem, questions are of the form: \emph{Is $A\cap S \neq \emptyset$?}
In the subset-guessing game associated with the Random Chemistry algorithm,
questions are of the form \emph{Is $S \subset A$?}.
In the Guessing Secret game, when queried with a set $A$, the responder chooses an element from $S$ according to any rule that he likes, and tells the questioner whether this chosen element is in $A$.  The chosen element itself is not revealed, and may change after each question.  Thus, the answer is $1$ when $S\subset A$, $0$ when $A\cap S=\emptyset$, and can be either $0$ or $1$ otherwise.

Our work contrasts with this previous work by considering a problem where the answer provided by the responser is not binary but instead counts the number of objects in the queried set.

Our use of the (differential) entropy as a measure of quality in localizing objects follows a similar use of entropy in other sequential interrogation problems,
including optimization of continuous functions \cite{villemonteix2009informational},
online learning \cite{russo2014learning},
and adaptive compressed sensing \cite{braun2014info}.
In this literature and here, the differential entropy is of direct interest as a measure of concentration of the posterior probability. Indeed, it is the logarithm of the volume of the smallest set containing ``most of the probability", see \cite{Cover2006} p.246. In our setting, when the prior distribution over each object's location is of Uniform distribution, the posterior distribution is also Uniform and the differential entropy is exactly the logarithm of the volume of the support set of the posterior density.
When the querying process discussed here is followed by a second stage involving a different querying process with different kinds of question and answers (as it is in each of the examples discussed below) the differential entropy may be considered as a surrogate for the time complexity required in this second stage.

\subsection{Applications}


The problem we consider, or slight variants of it, appear in three applications discussed below: 
heavy hitter detection in network traffic stream, 
screening for important input factors in complex simulators,
and fast object localization in Computer Vision.
They also appear in searching for auto-catalytic sets of molecules \cite{Kauffman1996}, and searching for collections of multiple contingencies leading to cascading power failures in models of electrical networks \cite{Eppstein2012}.

In each of the three applications discussed below, objects' locations are discrete rather than continuous.  The policies we present, which result from an analysis considering differential entropy and a continuous prior, may still be used profitably even when objects' locations are known to lie on a finite subset $\Omega'$ of $\Omega$, as long as the granularity of the questions asked does not become finer than $\Omega'$.

In heavy hitter detection \cite{wang2014group}, we operate a router within a computer network, and wish to detect a (presumably small) number of source IP addresses that are generating traffic through our network exceeding a given limit on packet rate.  These source IP addresses are called ``heavy hitters''.
Although we could, in theory, keep an ever-expanding list of all source IP addresses with associated packet counts, this would require a prohibitive amount of memory.
Instead, one can choose a set of IP addresses $A$, and count how many packets fall into that set over a short time period\footnote{Our framework allows general $A$, while in practice, the set $A$ should be of a form that allows easily checking whether a packet resides within it, for example, by having the set $A$ consist of all source IP addresses simultaneously satisfying a collection of conditions on individual bits within the address.  The dyadic policy that we construct has this form when the prior is uniform, and the number of allowed queries is below a threshold.}.  By comparing this number to the limit on packet rate, one can obtain information about the number of heavy hitters (which are our objects $\theta_i$) with source addresses within $A$.
By sequentially, or simultaneously, querying several sets $A$, one can obtain a low-entropy posterior distribution on the locations of all heavy hitters.  One can then follow this first stage of queries on sets $A$ by a second confirmatory stage of queries on individual IP addresses that the first stage revealed were likely to be being heavy hitters.


In screening for important factors in complex simulators \cite{screening_WSC_2012,BettonvilKleijnen1997}, we wish to determine which of a large number of input parameters have a significant effect on the output of a computer model.  A factor model models each input to the simulator as a factor taking one of two values, ``on'' or ``off'', and models the output of the simulator as approximately linear in the factors, with unknown coefficients that multiply each of the ``on'' factors to produce the output.
The ``important'' factors are those with nonzero coefficients, and these are the objects $\theta_i$ we seek to identify.  To identify them, we may choose a set of factors $A$ to turn on, and observe the output of the simulator, which gives us information about the number of important factors in the queried set.  By sequentially, or simultaneously, querying several sets $A$, we may obtain a low-entropy posterior distribution on the identity of the important factors.  We can then individually query those factors believed to be important in a second confirmatory stage.

In computer vision applications, 
we may wish to localize object instances in images and video streams. 
Examples include detecting faces in images~\cite{AliFleHasFua12},
finding quasars in astronomical data~\cite{Mor09}, 
and counting synapses in electron microscopy volumes~\cite{MerRodAloSchDef09}.
To support this, high-performing but computationally expensive classifiers exist that can localize object instances accurately. While one way to localize each instance would be to run such a classifier at each and every pixel to assess whether an instance was centered at that pixel, this would be computationally intractable for large images or video sequences.  Instead, one can divide the image into various sub-regions $A$, and use a computation to count how many instances fall in that region. 
Critically, counting the number of instances in a region is substantially faster than running the classifier at every pixel in that region, see ~\cite{lempitsky2010learning,idrees2013multi,barinova2012detection}.
Using a low-entropy posterior distribution obtained from these queries, one can compute the expected number of objects, among $k$, at each pixel. We can then run our expensive classifier in a second confirmatory stage at those pixels where an object instance has been identified as being likely to reside.
We illustrate our policies on a substantially simplified version of this problem in Section~\ref{sec:dyadic_numerical}.

Now, in Section~\ref{sec:formulation}, we state the problem more formally, and summarize our main results.


\section{Problem Formulation and Summary of Main Results}
\label{sec:formulation}

Let $\theta=(\theta_{1},\dots,\theta_{k})$ be a random vector taking values in $\mathbb R^k$. $\theta_i$ represents the location of the $i$th object of interest, $i=1,\dots,k$.  We assume that $\theta_1,\dots,\theta_k$ are i.i.d. with density $f_0$, and joint density $p_0(\theta)=\prod_{i=1}^k f_0(\theta_i)$.  We assume $f_0$ is absolutely continuous with respect to the Lebesgue measure and has finite differential entropy, which is defined in \eqref{eq:diff_entropy}. We refer to $p_0$ as the Bayesian prior probability distribution on $\theta$.  We will ask a series of $N>0$ questions to locate $\theta_{1},\dots,\theta_{k}$, where each question takes the form of a subset of $\mathbb R$, and the answer to this question is the number of objects in this subset.  More precisely, for each $n\in\{1,2,\ldots,N\}$, the $n^{th}$ question is $A_n\subset\mathbb R$ and its answer is
\begin{equation}
\label{eq:answer}
X_n=\mathbbm 1_{A_n}(\theta_1) +\dots+ \mathbbm 1_{A_n}(\theta_k),
\end{equation}
where $\mathbbm 1_A$ is the indicator function of the set $A$.
Unless otherwise stated, our choice of the set $A_n$ may depend upon the answers to all previous questions, and upon some initial randomization through a uniform random variable $Z$ on $[0,1]$ chosen independently of $\theta$.
Thus, the set $A_n$ is random, through its dependence on $Z$, and the answers to previous questions.

We call a rule for choosing the questions $A_n$ a {\it policy}.  
Formally, we define a policy $\pi$ to be a sequence $\pi=(\pi_1,\ldots,\pi_N)$, where $\pi_n$ is a Borel-measurable subset of $[0,1] \times \{0,1,\ldots,k\}^{n-1} \times \mathbb R$. We denote the collection of all such policies by $\Pi$. 
With a policy $\pi$ specified, the choice of $A_n$ is then $A_n = \left\{ t \in \mathbb R : (Z,X_{1:n-1},t) \in \pi_n\right\}$, so that specifying $\pi_n$ implicitly specifies a rule for choosing $A_n$ based on the random seed $Z$ and the history $X_{1:n-1}$.
Here, we have used the notation $X_{a:b}$ for any natural numbers $a$ and $b$ to indicate the sequence $(X_a,\ldots,X_b)$ if $a\le b$, and the empty sequence if $a>b$.  We define $\theta_{a:b}$ and $A_{a:b}$ similarly.
The distribution of $A_n$ thus implicitly depends on $\pi$.  When we wish to highlight this dependence, we will use the notation $P^\pi$ and $E^\pi$ to indicate probability and expectation respectively.  However, when the policy being studied is clear, we will simply use $P$ and $E$ .

This definition of $\Pi$ allows the choice of question to depend upon previous answers, and when we wish to emphasize this fact we will refer to $\Pi$ as the set of {\it adaptive} policies.
We also define the set of {\it non-adaptive policies} $\Pi_N \subset \Pi$ to be those under which each $A_n$ depends only on the random seed $Z$, i.e., for a fixed $Z=z$, the questions $A_{1:N}$ are deterministic. 
From a formal point of view, note that the set of adaptive policies includes the set of non-adaptive policies as a special case.
Figure~\ref{fig:Bayesian network} illustrates, as a Bayesian network, the dependence structure of the random variables in our problem under an adaptive policy, and under a non-adaptive policy.  

\begin{figure}[H]
\centering
\minipage{0.4\textwidth}
  \includegraphics[width=\linewidth]{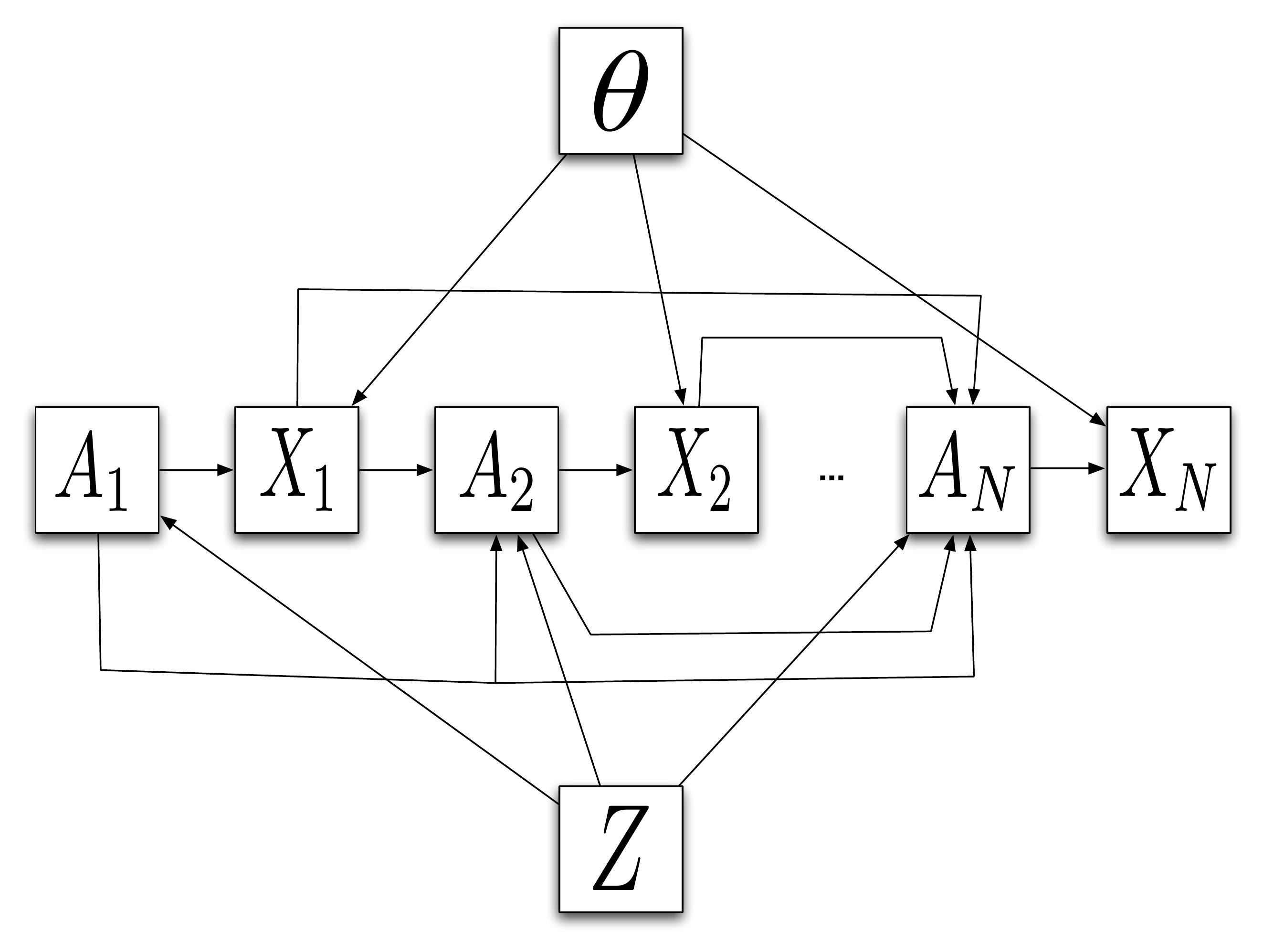}
\endminipage\hfill
\minipage{0.4\textwidth}
  \includegraphics[width=\linewidth]{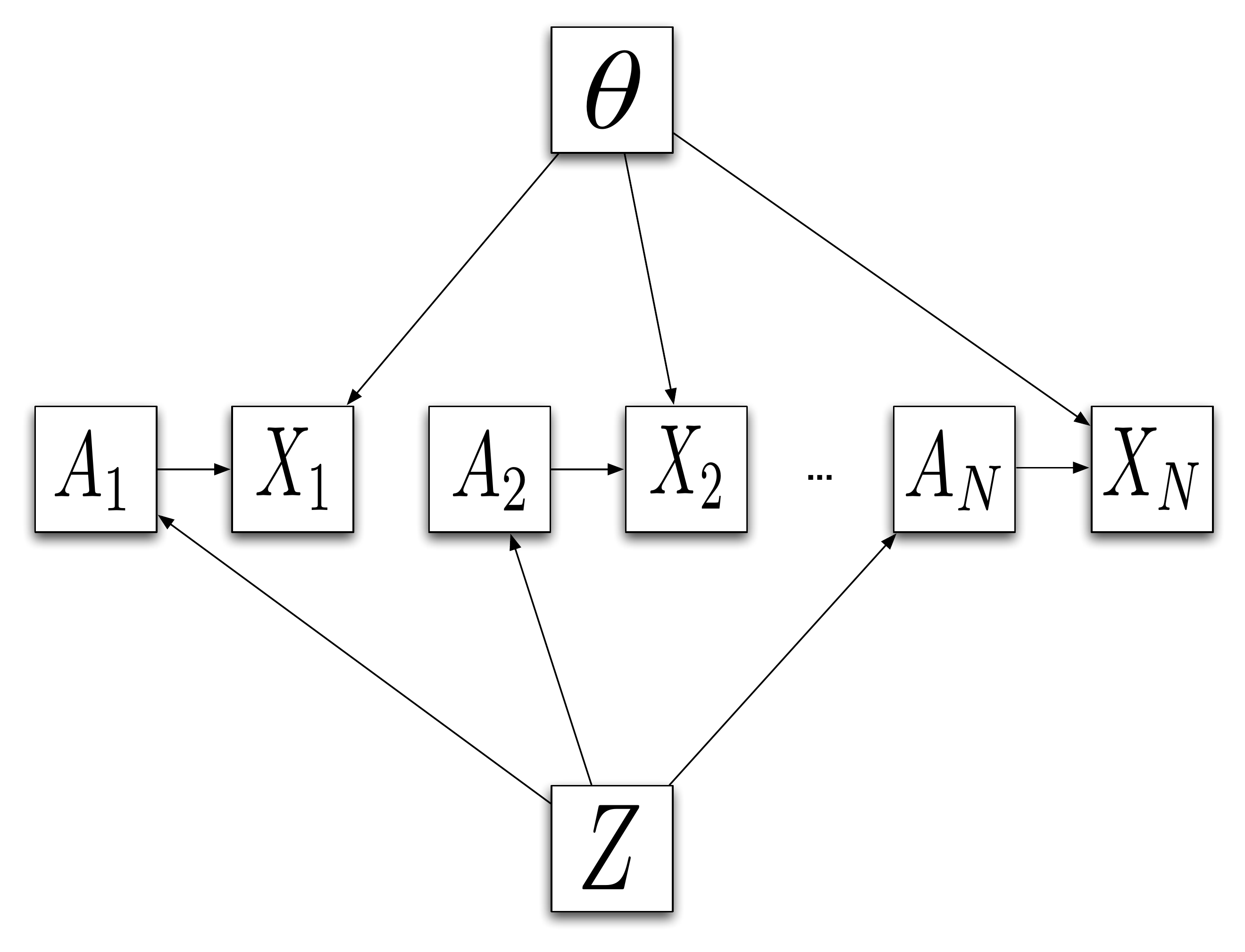}
\endminipage\hfill
\caption{Bayesian network representation of our model under an adaptive policy (left), and a non-adaptive policy (right).}
\label{fig:Bayesian network}
\end{figure}

We refer to the posterior probability distribution on $\theta$ after $n$ questions as $p_n$, so $p_n$ is the conditional distribution of $\theta$ given the past history $B_n=\{Z,A_{1:n},X_{1:n}\}$. The dependence on $Z$ arises because $A_n$ may depend on $Z$, in addition to $X_{1:n-1}$.  
Equivalently, under any fixed policy $\pi$, $p_n$ is the conditional distribution of $\theta$ given $B_n$.
This posterior $p_n$ can be computed using Bayes rule:
$p_n(u)$ is proportional to $p_0(u)$ over the set
$\left\{u \in \mathbb R^k : X_m = \sum_{i=1}^k \mathbbm 1_{A_m}(u_i),\  1\le m\le n\right\}$,
and $0$ outside.


After we exhaust our budget of $N$ questions, we will measure the quality of what we have learned via the differential entropy $H(p_N)$ of the final posterior distribution $p_N$ on $\theta$,
\begin{equation}
\label{eq:diff_entropy}
H(p_N) =-E[\log p_N]=- \int_{\mathbb R^k} p_N(u_{1:k}) \log(p_N(u_{1:k}))\, du_{1:k}.
\end{equation}

Throughout this paper, we use $``\log"$ to denote the logarithm to base 2. We let $H_0 = H(p_0)$, and we assume $-\infty<H(p_0)<+\infty$.
The posterior distribution $p_N$, as well as its entropy $H(p_N)$, are random for $N>0$, as they depend on $X_{1:N}$ and $Z$. Thus, we measure the quality of a policy $\pi\in\Pi$ when given $N$ questions using the \emph{rate of reduction in expected entropy}
\begin{equation}
\label{eq:ROR}
  R(\pi,N) = \frac{H_0-E^\pi[H(p_N)]}{N}.
\end{equation}
This rate is the average number of bits learned per question.

Our goal in this paper is to characterize the solution to the optimization problem
\begin{equation}
  \label{eq:optimal-policy}
\sup_\pi R(\pi,N).
\end{equation}
with $\pi \in \Pi$ or $\pi \in \Pi_N$. Any policy that attains this supremum is called {\it optimal}. According to this definition, an optimal policy may not exist.

While \eqref{eq:optimal-policy} can be formulated as a partially observable Markov decision process \cite{FrWeorDp}, and can be solved, in principle, via dynamic programming, the state space of this dynamic program is the space of posterior distributions over $\theta$, and the extreme size of this space prevents solving this dynamic program through brute-force computation.
Thus, we must characterize optimal policies using other means.

We define two policies, the dyadic policy, which is non-adaptive, and the greedy policy, which is adaptive.  (More precisely, the greedy is a {\it class} of policies, as its definitions allow certain decisions to be made arbitrarily.) We will see below that the dyadic policy attains the supremum in \eqref{eq:optimal-policy} for $\pi\in\Pi_N$, and thus is optimal among non-adaptive policies.  We will also see that its performance comes within a factor of two of the supremum for $\pi\in\Pi$, showing that it is a two-approximation among adaptive policies.  We will also see below that the greedy policy performs at least as well as the dyadic policy, and so is also a two-approximation among adaptive policies.


To define the {\it dyadic policy}, let us recall that the quantile function of $\theta_1$ is
\begin{equation}
\label{eq:quantile}
Q(p) = \inf \left\{u \in \mathbb R: p \leq F_0(u)\right\},
\end{equation}
where $F_0$ is the cumulative distribution function of $\theta_1$, corresponding to its density $f_0$. The dyadic policy consists in choosing at step $n\geq 1$ the set
\begin{equation}
\label{eq:dyadic_brief}
A_n = \left(\bigcup_{j=1}^{2^{n-1}}\left(Q\left(\frac{2j-1}{2^n}\right),Q\left(\frac{2j}{2^n}\right)\right]\right)\bigcap \,\Support(f_0),
\end{equation}
where $\Support(f_0)$ is the support of $f_0$, i.e., the set of values $u \in \mathbb R$ for which $f_0(u)>0$.
For example, when $f_0$ is uniform over $(0,1]$, the dyadic policy is the one in which
  the first question is $A_1=\left(\frac12,1\right]$,
  the second question is $A_2=\left(\frac14,\frac12\right]\cup\left(\frac34,1\right]$, 
and each subsequent question is obtained by subdividing $(0,1]$ into $2^{n}$ equally sized subsets, and including every second subset.  A further illustration of the dyadic question sets $A_n$ is provided in Figure \ref{fig:dyadic} in Section \ref{sec:dyadic}.
This definition of the dyadic policy generalizes a definition provided in \cite{JAP} for single objects.

We define a {\it greedy policy} to be any policy that chooses each of its questions
to minimize the expected entropy of the posterior distribution one step forward in time,
\begin{equation}
  \label{eq:greedy}
A_n \in \argmin_A E[H(p_n)|p_{n-1},A_n=A], \text{for all $n=1,2,\dots,N$,}
\end{equation}
where the argmin is taken over all Borel-measurable subsets of $\mathbb R$.
We show in Section~\ref{sec:greedy} that this argmin exists.

We are now ready to present our main results:
\begin{equation}
\label{eq:main}
\log(k+1) \ge \sup_{\pi\in\Pi} R(\pi,N)
\ge R(\pi_G,N)
\ge R(\pi_D,N)
=H_k=\sup_{\pi\in\Pi_N} R(\pi,N)\geq \frac{1}{2}\log(k+1),
\end{equation}
where $\pi_G$ is a greedy policy, $\pi_D$ is a dyadic policy, and 
\begin{equation}
H_k=H\left(\Bin\left(k,\frac{1}{2}\right)\right),\end{equation}
is the entropy of a Binomial distribution $\Bin(k,\frac{1}{2})$.  

The first inequality in \eqref{eq:main} is an information theoretic inequality (easily proved in Section \ref{sec:bound}). The second  inequality is trivial since an optimal adaptive policy is at least as good as any other policy. The third inequality comes from a detailed computation of the posterior distribution $p_N$ of $\theta$ after observing $N$ answers for any possible sequence of $N$ questions (see Section~\ref{sec:greedy_value}).  Additionally, we show that this inequality cannot be reversed, by presenting a special case in which there is a greedy policy whose performance is strictly better than that of the dyadic policy (see Section \ref{sec:comparison}).  The first equality comes from the characterization of the posterior distribution $p_N$ in the special case of the dyadic policy (see Section~\ref{sec:dyadic_value}). The last equality is an information theoretic inequality which exploits the conditional independence structure of non-adaptive policies. It is proven in Section \ref{sec:bound}. The last inequality, proven in Section \ref{sec:dyadic_value}, shows that the rate of an optimal non adaptive policy is no less than half the rate of an optimal adaptive policy.

The power of these results is illustrated by Figure~\ref{fig:numQs},
which shows, as a function of the number of objects $k$, the number of questions required to reduce the expected entropy of the posterior on their locations by 20 bits per object.
The figure shows the number of questions needed under the dyadic policy (solid line);
under two benchmark policies described below, Benchmark 1 and Benchmark 2 (dotted, and dash-dotted lines); and a lower bound on the number needed under the optimal adaptive policy (dashed line, and left-most expression in \eqref{eq:main}).
By \eqref{eq:main}, we know that the number of extra questions required by using either the dyadic or the greedy, instead of the adaptive optimal policy, is bounded above by the distance between the solid and dashed lines.

\begin{figure}[tb]
\centering
\includegraphics[scale=0.45]{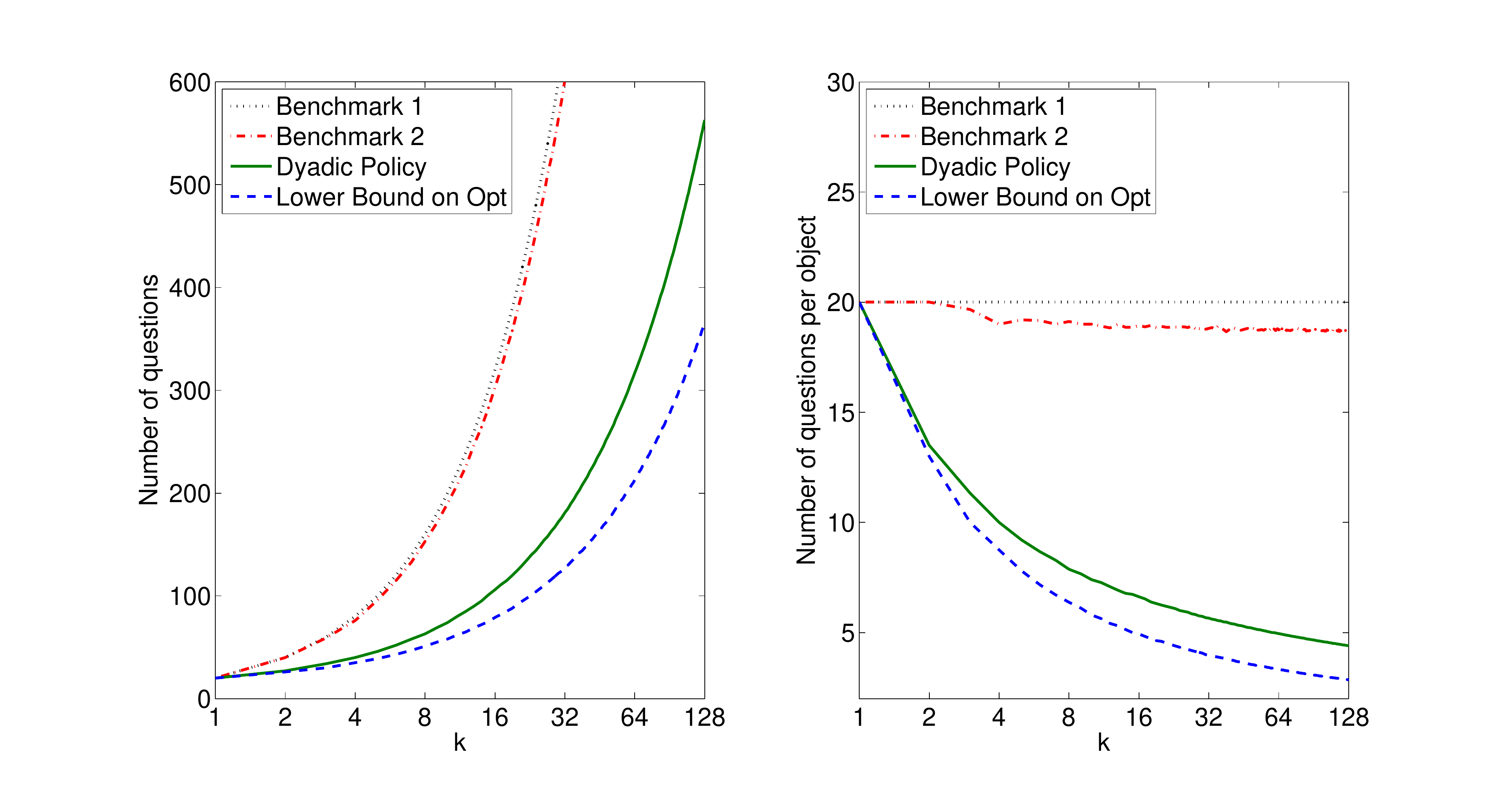}
\caption{Number of questions needed to reduce the entropy by 20 bits per object under two benchmark policies and the dyadic policy, and a lower bound on the number under the optimal adaptive policy. The two graphs show the total number of questions (left) and number of questions per object (right). The dyadic policy significantly outperforms both benchmarks and its performance is relatively close to the lower bound on an optimal adaptive policy's performance.  The performance of the greedy policy is between that of the dyadic and the lower bound.}
\label{fig:numQs}
\end{figure}

Benchmark 1 identifies each object individually, using an optimal single-object strategy.  It first asks questions to localize the first object $\theta_1$, reducing the entropy of our posterior distribution on that object's location by 20 bits. This requires 20 questions, and can be achieved, for example, by a bisection policy, \cite{Horstein63}. It then uses the same strategy to localize each subsequent second object, requiring 20 questions per object. Implementing such a policy would require the ability to ask questions about whether or not a single specified object (e.g., object $\theta_1$) resides in a queried set, rather than the number of objects in that set.  While this ability is not included in our formal model, Benchmark 1 nevertheless provides a useful comparison. The total number of questions required under this policy to achieve 20 bits of entropy reduction per object is $20k$.

Benchmark 2 is adapted from the sequential bifurcation policy of \cite{BettonvilKleijnen1997}.  While \cite{BettonvilKleijnen1997} considered an application setting somewhat different from the problem that we consider here (screening for discrete event simulation), we were able to modify their policy to allow it to be used in our setting.  A detailed description of the modified policy is provided in Appendix~\ref{sec:bifurcation}.  It makes full use of the ability to ask questions about multiple objects simultaneously, and improves slightly over Benchmark 1.  We view this policy as the best previously proposed policy from the literature for solving the problem that we consider.

The figure shows that a substantial saving over both benchmarks is possible through the dyadic or greedy policy.  For example, for $k=2^4=16$ objects, Benchmark 1 and Benchmark 2 require 320 and 304 questions respectively.  In contrast, the dyadic policy requires 106 questions, which is nearly 3 times smaller than required by the benchmarks. Furthermore, \eqref{eq:main} shows that the greedy policy performs at least as well as the dyadic policy.  Thus, localizing objects' locations jointly can be much more efficient than localizing them one-at-a-time, and the dyadic and greedy policies are implementable policies that can achieve much of the potential efficiency gains.  

The figure also shows, again at $k=2^4=16$ objects, that the optimal policy requires at least 80 questions, while the dyadic and greedy require no more than 106 questions, and so are within a factor of 1.325 of optimal.
 This is remarkable, when we
compare how little is lost when going from the hard-to-compute optimal policy to the easily computed dyadic policy, with how much is gained by going to the dyadic from one of the two benchmark policies considered.
Our results also show that this multiplicative factor is never worse than 2.

The dyadic policy can be computed extremely quickly, and can even be
pre-computed, as the questions asked do not depend on the answers to previous
questions.  This makes it convenient in settings
where multiple questions can be asked simultaneously, e.g., in a parallel or
distributed computing environment.
The greedy policy requires more computational effort than the dyadic policy, but is still substantially easier to compute than the optimal policy, and provides performance at least as good as that of the dyadic policy, as shown by \eqref{eq:main}, and sometimes strictly better, as will be shown in Section~\ref{sec:comparison}.


We see in the figure that the dyadic policy's rate and the rate of the optimal policy come together at $k=1$.  This can also be seen directly from our theoretical results.  When $k=1$, the left-hand and right-hand sides of \eqref{eq:main} are equal, since $\Bin\left(k,\frac{1}{2}\right)$ becomes a $\mathrm{Bernoulli}(\frac12)$ random variable, whose entropy is $\log(2)=1$.  This shows, when $k=1$, that the rate of expected entropy reduction under the dyadic is the same as the upper bound on this rate under the optimal policy, which in turn shows that both dyadic and greedy policies are optimal, and the upper bound is tight. When $k=1$, the well-known bisection policy is a greedy policy, and the dyadic is also greedy, i.e., satisfies \eqref{eq:greedy}.

We begin our analysis in Section~\ref{sec:bound}, by justifying the left-most inequality in \eqref{eq:main}.
We then provide an explicit expression for the posterior distribution in Section~\ref{sec:posterior}, which is used in later analysis.  We analyze the dyadic policy in Section~\ref{sec:dyadic}, and the greedy policy in Section~\ref{sec:greedy}.  
We illustrate the use of our policies on a stylized problem inspired by computer vision applications in Section~\ref{sec:dyadic_numerical}. Finally, we offer concluding remarks in Section~\ref{sec:conclusion}.

\section{Upper Bounds on the Rate of Reduction in Expected Entropy}
\label{sec:bound}

In this section, below in Theorem~\ref{t:thm1}, we prove the first inequality in \eqref{eq:main}, which is an easy upper bound on the reduction in expected entropy for a fixed number of questions and answers under an adaptive policy.  This bound is obtained from the fact that the answer to each question is a number in $\{0,1,\dots,k\}$, and so cannot provide more than $\log(k +1)$ bits. We also prove in Theorem~\ref{t:thm1} that the upper bound cannot be achieved for $k>1$.  

Then, we provide a complementary upper bound for non-adaptive policies in Theorem \ref{thm:nonadaptive_opt}, which we later show in Section~\ref{sec:dyadic_value} is matched by the dyadic policy, showing that it is optimal among non-adaptive policies.




\begin{thm}
\label{t:thm1}
\begin{equation}
\label{eq:thm1}
\sup_{\pi\in\Pi} R(\pi,N) \le \log(k+1).
\end{equation}
Moreover, when $k>1$, this inequality is strict.
\end{thm}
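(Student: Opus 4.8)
The plan is to rewrite the expected entropy reduction as a conditional mutual information and then exploit that each answer lives in a $(k+1)$-element set. Under any policy $\pi\in\Pi$ the question sets $A_{1:N}$ are deterministic functions of the seed $Z$ and the past answers $X_{1:N-1}$, so conditioning on the full history $B_N=\{Z,A_{1:N},X_{1:N}\}$ is the same as conditioning on $(Z,X_{1:N})$. Hence the realized posterior $p_N$ is the conditional density of $\theta$ given $(Z,X_{1:N})$, and taking expectations shows that $E^\pi[H(p_N)]$ equals the conditional differential entropy of $\theta$ given $(Z,X_{1:N})$. Since $Z$ is independent of $\theta$, $H_0$ equals the differential entropy of $\theta$ given $Z$, so that
\begin{equation*}
H_0-E^\pi[H(p_N)] = I(\theta;X_{1:N}\mid Z),
\end{equation*}
the conditional mutual information between $\theta$ and the answers given $Z$. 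I would then invoke the standard symmetry of mutual information to rewrite this as $H(X_{1:N}\mid Z)-H(X_{1:N}\mid\theta,Z)$, where these are now \emph{discrete} entropies. The crucial observation is that, given $\theta$ and $Z$, the entire trajectory $A_1,X_1,A_2,X_2,\dots$ is determined recursively and deterministically through \eqref{eq:answer}, so $H(X_{1:N}\mid\theta,Z)=0$ and therefore $H_0-E^\pi[H(p_N)]=H(X_{1:N}\mid Z)$.

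Next I would apply the chain rule $H(X_{1:N}\mid Z)=\sum_{n=1}^N H(X_n\mid X_{1:n-1},Z)$. Each $X_n$ takes values in $\{0,1,\dots,k\}$, a set of cardinality $k+1$, so every conditional entropy $H(X_n\mid X_{1:n-1},Z)$ is bounded above by $\log(k+1)$, the entropy of the uniform distribution on $k+1$ points. Summing and dividing by $N$ gives $R(\pi,N)\le\log(k+1)$ for every $\pi\in\Pi$, and taking the supremum proves \eqref{eq:thm1}.

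For the strict inequality when $k>1$, I would show that at least one term in the chain-rule sum is bounded away from $\log(k+1)$ uniformly over policies, and the natural choice is the first. Since $A_1$ depends only on $Z$, conditionally on $Z=z$ the coordinates $\theta_1,\dots,\theta_k$ remain i.i.d.\ and each lands in $A_1$ with probability $q(z)=\int_{A_1}f_0$, so $X_1\mid\{Z=z\}\sim\Bin(k,q(z))$. A $\Bin(k,q)$ law is never the uniform law on $\{0,\dots,k\}$ when $k\ge 2$: equality of the $j=0$ and $j=k$ masses forces $q=\tfrac12$, and then the masses are proportional to the binomial coefficients $\binom{k}{j}$, which are not all equal once $k\ge 2$. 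Hence $H(\Bin(k,q))<\log(k+1)$ for every $q\in[0,1]$, and since $q\mapsto H(\Bin(k,q))$ is continuous on the compact interval $[0,1]$, its maximum $c_k:=\max_{q\in[0,1]}H(\Bin(k,q))$ satisfies $c_k<\log(k+1)$. Consequently $H(X_1\mid Z)\le c_k$, and
\begin{equation*}
\sup_{\pi\in\Pi}R(\pi,N)\le\frac{c_k+(N-1)\log(k+1)}{N}=\log(k+1)-\frac{\log(k+1)-c_k}{N}<\log(k+1),
\end{equation*}
where the bound is uniform in $\pi$ precisely because $c_k$ does not depend on the policy.

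The only delicate point is the passage to mutual information in the first step: one must check that the symmetry identity and the vanishing of $H(X_{1:N}\mid\theta,Z)$ remain valid despite $\theta$ being continuous with possibly unbounded differential entropy. The finiteness of $H_0$ together with the boundedness of the discrete vector $X_{1:N}$ keeps every quantity finite and makes these standard identities (cf.\ \cite{Cover2006}) applicable. For the strict part, the genuine subtlety is establishing the \emph{uniform} gap $c_k<\log(k+1)$ via continuity and compactness, rather than mere pointwise strictness, since it is this uniformity that upgrades the per-policy strict inequality to strictness of the supremum.
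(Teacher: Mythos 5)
Your proof is correct, and for the main inequality it follows essentially the same route as the paper: both arguments reduce the expected entropy drop to an information quantity, use the fact that the questions $A_{1:N}$ are deterministic functions of $(Z,X_{1:N-1})$ to strip them out of the history, observe that the answers are deterministic given $(\theta,Z)$, and then bound $H(X_{1:N}\mid Z)$ by $N\log(k+1)$ since each $X_n$ takes at most $k+1$ values. The paper phrases this as $I(\theta;B_N)=H(B_N)-H(B_N\mid\theta)$ with the $H(Z)$ terms cancelling, while you condition on $Z$ throughout and work with $I(\theta;X_{1:N}\mid Z)$ directly; this is a cosmetic reorganization, not a different idea. Where you genuinely depart from the paper is the strictness claim, and there your version is actually \emph{stronger}. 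The paper shows that for each fixed policy $H(X_1\mid Z)=H(\Bin(k,p))<\log(k+1)$, and concludes that ``no policy can achieve the upper bound.'' That is a per-policy strict inequality; by itself it does not rule out $\sup_{\pi}R(\pi,N)=\log(k+1)$, since a supremum of quantities each strictly below a constant can still equal that constant. Your compactness argument closes exactly this gap: the bound $H(X_1\mid Z)\le c_k:=\max_{q\in[0,1]}H(\Bin(k,q))$ is uniform over policies because $c_k$ depends only on $k$, and continuity of $q\mapsto H(\Bin(k,q))$ on the compact interval $[0,1]$ together with pointwise strictness gives $c_k<\log(k+1)$, hence
\begin{equation*}
\sup_{\pi\in\Pi}R(\pi,N)\le\frac{c_k+(N-1)\log(k+1)}{N}<\log(k+1).
\end{equation*}
So your write-up proves the literal statement ``the inequality \eqref{eq:thm1} is strict,'' whereas the paper's proof, read strictly, establishes only the weaker non-attainment statement; this uniform-gap refinement is the one substantive improvement in your proposal.
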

\begin{proof}
According to the definition of rate of reduction in expected entropy in \eqref{eq:ROR}, in order to prove \eqref{eq:thm1}, we need to prove that under any valid policy,
\begin{equation}
\label{eq:thm1_0}
E[H(p_N)] \geq H_0- \log(k+1)N.
\end{equation}
Recall that $H(p_N)$ is the entropy of the posterior distribution of $\theta$, which is random through its dependence on the past history $B_N=\{X_{1:N},A_{1:N},Z\}$. Thus, $E[H(p_N)]=H(\theta|B_N)$. Furthermore, using information theoretic arguments, we have
\begin{equation}\label{eq:thm1_1}
H(\theta|B_N)=H(\theta)-I(\theta;B_N)=H_0-(H(B_N)-H(B_N|\theta))
\end{equation}
Moreover,
\begin{equation}\label{eq:thm1_2}\begin{split}
H(B_N)&=H(X_{1:N},A_{1:N},Z)\\
&=H(A_{1:N}|X_{1:N},Z)+H(X_{1:N}|Z)+H(Z)\\
&=H(X_{1:N}|Z)+H(Z)\\
&\le \sum_{n=1}^N H(X_n) + H(Z)\\
&\le \log(k+1)N + H(Z),
\end{split}\end{equation}
where $H(A_{1:N}|X_{1:N},Z)=0$ because the information contained in the random seed $Z$ and the answers $X_{1:N}$ completely determines the questions $A_{1:N}$. Recall that for all $n=1,2,\dots,N$, $X_n$ is a discrete random variable with $k+1$ possible outcomes, namely $0,1,\dots,k$. The maximum possible value for the entropy $H(X_n)$ is $\log(k+1)$, obtained when each outcome of $X_n$ has the same probability $\frac{1}{k+1}$, i.e. $H(X_n)\leq \log(k+1)$.

On the other hand,
\begin{equation}\label{eq:thm1_3}\begin{split}
H(B_N|\theta)&=H(X_{1:N},A_{1:N},Z|\theta)\\
&=H(A_{1:N}|X_{1:N},Z,\theta)+H(X_{1:N}|Z,\theta)+H(Z|\theta)\\
&=H(Z),
\end{split}\end{equation}
where $H(A_{1:N}|X_{1:N},Z,\theta)=0$ for the same reason as above, and $H(X_{1:N}|Z,\theta)=0$ because the information contained in $\theta$ completely determines $X_{1:N}$. Also, $H(Z|\theta)=H(Z)$ because the random seed $Z$ is assumed to be independent of the objects $\theta$.

Plugging \eqref{eq:thm1_2} and \eqref{eq:thm1_3} back into \eqref{eq:thm1_1}, we obtain the desired result \eqref{eq:thm1_0}.

We now prove that the inequality \eqref{eq:thm1} is strict when $k>1$, i.e. when there is more than one object. Consider any fixed $Z=z$, which specifies the questions set $A_1$. Recall from \eqref{eq:answer} that $X_1=\mathbbm 1_{A_1}(\theta_1)+\dots+\mathbbm 1_{A_1}(\theta_k)$ and that $\theta_1,\dots,\theta_k$ are independent. As a consequence, $X_1 \mid Z=z\sim \Bin(k,p)$, where $p=\int_{A_1}f_0(u)\,du$. Therefore, $H(X_1| Z=z)=H\left(\Bin(k,p)\right)<\log(k+1)$ when $k>1$, implying $H(X_1|Z)<\log(k+1)$. Thus, $H(B_N)=H(X_{1:N}|Z)+H(Z)<\log(k+1)N + H(Z)$, so that there is no policy that can achieve the upper bound.
\end{proof}

Now, we provide an upper bound on the rate of reduction in expected entropy for all non-adaptive policies.

\begin{thm}
\label{thm:nonadaptive_opt}
Under any non-adaptive policy $\pi\in\Pi_N$, we have 
\begin{equation}\label{eq:nonadaptive_opt}
R(\pi,N) \le H\left(\Bin\left(k,\frac 1 2\right)\right).
\end{equation}
\end{thm}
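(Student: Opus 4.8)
The plan is to reduce the statement to a per-question bound and then invoke a maximization property of the binomial entropy. First I would reuse the information-theoretic identity already established inside the proof of Theorem~\ref{t:thm1}: for any policy, $E[H(p_N)] = H(\theta \mid B_N) = H_0 - I(\theta; B_N)$, and $I(\theta; B_N) = H(B_N) - H(B_N\mid\theta) = H(X_{1:N}\mid Z)$. Substituting into the definition \eqref{eq:ROR} gives the clean starting point
\[
R(\pi, N) = \frac{H(X_{1:N}\mid Z)}{N},
\]
valid for adaptive and non-adaptive policies alike, so the entire task becomes bounding $H(X_{1:N}\mid Z)$ by $N\,H(\Bin(k,\tfrac12))$. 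I would then split this joint conditional entropy by the chain rule and discard the dependence among answers, since conditioning cannot increase entropy:
\[
H(X_{1:N}\mid Z) = \sum_{n=1}^N H(X_n\mid X_{1:n-1}, Z) \le \sum_{n=1}^N H(X_n\mid Z).
\]

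This is the point where non-adaptivity does the essential work. Under a non-adaptive policy, fixing $Z=z$ makes $A_n$ deterministic, so $X_n = \sum_{i=1}^k \mathbbm 1_{A_n}(\theta_i)$ is a sum of $k$ i.i.d.\ Bernoulli trials and $X_n \mid \{Z=z\} \sim \Bin(k, p_n(z))$ with $p_n(z) = \int_{A_n} f_0(u)\,du$. (For an adaptive policy this step fails, because $A_n$ would still depend on the random earlier answers even after conditioning on $Z$, which is exactly why the sharper rate is special to $\Pi_N$.) Hence $H(X_n\mid Z=z) = H(\Bin(k, p_n(z)))$; averaging over $Z$, summing over $n$, and dividing by $N$ then yields the claim, \emph{provided} each term is at most $H(\Bin(k,\tfrac12))$.

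The crux, and the step I expect to be the main obstacle, is therefore the purely one-dimensional fact that $g(p) := H(\Bin(k,p))$ is maximized over $p\in[0,1]$ at $p=\tfrac12$. The symmetry $g(p)=g(1-p)$ already makes $p=\tfrac12$ a stationary point; to promote it to a global maximum I would differentiate, using $\frac{d}{dp}P_j = P_j\,\frac{j-kp}{p(1-p)}$ and $\sum_j \frac{d}{dp}P_j = 0$ (where $P_j = \binom{k}{j}p^j(1-p)^{k-j}$), to obtain, up to a positive constant, the compact representation
\[
g'(p) = -\frac{1}{p(1-p)}\,\mathrm{Cov}_{\,j\sim\Bin(k,p)}\!\left(j,\ \log P_j\right).
\]
The goal is then to show this covariance is negative for $p<\tfrac12$ and positive for $p>\tfrac12$, giving unimodality with peak at $\tfrac12$; writing $\log P_j = \log\binom{k}{j} + j\log\frac{p}{1-p} + k\log(1-p)$ isolates a term $\log\frac{p}{1-p}\cdot\mathrm{Var}(j)$ of the correct sign, and the delicate part is controlling $\mathrm{Cov}(j,\log\binom{k}{j})$ against it. An alternative I would keep in reserve is to prove $g$ is concave on $[0,1]$ (so that symmetry alone forces the maximizer to be $\tfrac12$), or simply to invoke the known maximality of binomial entropy at $p=\tfrac12$. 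Once this lemma is in hand, $H(X_n\mid Z=z)\le H(\Bin(k,\tfrac12))$ holds for every $z$ and every $n$, and chaining it through the two displays above completes the proof.
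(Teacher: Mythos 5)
Your proposal is correct and follows essentially the same route as the paper: both arguments reduce the rate to a sum of per-question conditional entropies (you via the identity $I(\theta;B_N)=H(X_{1:N}\mid Z)$ from Theorem~\ref{t:thm1} plus the chain rule and the fact that conditioning reduces entropy; the paper by bounding $H(B_N)-H(B_N\mid\theta)$ directly, with the $H(A_n\mid Z)$ terms canceling), then use non-adaptivity in exactly the same way to conclude each answer is conditionally $\Bin(k,p)$ given the seed, and finally invoke $\sup_{p}H(\Bin(k,p))=H\left(\Bin\left(k,\frac12\right)\right)$. On that last point, the paper also asserts the maximality of the binomial entropy at $p=\frac12$ without proof, so your fallback of citing it as a known fact (rather than completing the covariance computation, which is indeed the delicate part) leaves you at the same level of rigor as the paper's own proof.
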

\begin{proof}
To prove the claim \eqref{eq:nonadaptive_opt}, it suffices to prove that under any non-adaptive policy,
\begin{equation}
I(\theta; B_N))\le H\left(\Bin\left(k,\frac 1 2\right)\right)N.
\end{equation}
First of all, the relation between mutual information and entropy gives
\begin{equation}
\label{eq:dyadic_opt_0}
I(\theta;B_N)=I(\theta; (A_{1:N},X_{1:N},Z))=H(A_{1:N},X_{1:N},Z)-H(A_{1:N},X_{1:N},Z|\theta).
\end{equation}
For the first term, we have
\begin{equation}
\label{eq:dyadic_opt_1}
H(A_{1:N},X_{1:N},Z)\le \sum_{n=1}^N H(A_n,X_n|Z)+H(Z) =\sum_{n=1}^N(H(X_n|A_n,Z)+H(A_n|Z))+H(Z).
\end{equation}
For the second term, we have
\begin{equation}
H(A_{1:N},X_{1:N},Z|\theta)=H(X_{1:N}|A_{1:N},Z,\theta)+H(A_{1:N}|Z,\theta)+H(Z|\theta).
\end{equation}
Furthermore, $H(X_{1:N}|A_{1:N},Z,\theta)=0$ since the information contained in $\theta$ and $A_{1:N}$ completely determines $X_{1:N}$. Also, $H(A_{1:N}|Z,\theta)=\sum_{n=1}^N H(A_n|Z,\theta)=\sum_{n=1}^N H(A_n|Z)$, since we can see from Figure \ref{fig:Bayesian network} that $A_1,\dots,A_N$ are conditional independent given $Z,\theta$, and each $A_n$ is independent of $\theta$ conditional on $Z$ as $Z$ is the only parent of $A_n$ in the directed acyclic graph. In addition, $H(Z|\theta)=H(Z)$ since the random seed $Z$ is assumed to be independent of the object $\theta$. Hence, we have
\begin{equation}
\label{eq:dyadic_opt_2}
H(A_{1:N},X_{1:N},Z|\theta)=\sum_{n=1}^N H(A_n|Z)+H(Z).
\end{equation}
Combining \eqref{eq:dyadic_opt_0}, \eqref{eq:dyadic_opt_1} and \eqref{eq:dyadic_opt_2} yields
\begin{equation}
\begin{split}
I(\theta; (A_{1:N},X_{1:N},Z))&\le\sum_{j=1}^N(H(X_n|A_n,Z)+H(A_n|Z))+H(Z)-\sum_{n=1}^N H(A_n|Z)-H(Z)\\
&=\sum_{n=1}^N H(X_n|A_n,Z).
\end{split}
\end{equation}
Recall that by definition, $X_n=\sum_{i=1}^k \mathbbm 1_{A_n}(\theta_i)$, which is a sum of i.i.d. Bernoulli random variables. Hence, for each fixed $A_n=a_n$ and $Z=z$, we have $(X_n|A_n=a_n,Z=z)\sim\Bin(k,P(\theta_1\in a_n))$. Therefore,
\begin{equation}
H(X_n|A_n,Z)\le \sup_{a_n,z} H(X_n|A_n=a_n,Z=z)\le \sup_{p\in [0,1]} H(\Bin(k,p))=H\left(\Bin\left(k,\frac 12\right)\right).
\end{equation}
The claim of the theorem follows.
\end{proof}

\section{Explicit Characterization of the Posterior Distribution}
\label{sec:posterior}

In this section, we first introduce in Section \ref{sec:post_object} some notation to characterize the joint location of objects and provide an example to illustrate these notations. We then derive an explicit formula for the posterior distribution on the locations of the objects.
In Section \ref{sec:post_answer}, we compute the conditional distribution of the next answer $X_n$ given previous answers $X_{1:n-1}$, which we will use later to analyze the rate of a policy.

\subsection{The Posterior Distribution of the Objects}
\label{sec:post_object}
Consider a fixed $n$, where $1 \leq n \leq N$.
For each binary sequence of length $n$, $s=\{s_1,\ldots,s_n\}$, let
\begin{equation}
\label{eq:Cs}
C_s = \left(\bigcap_{1\le j\le n;s_j=1}A_j\right) \bigcap \left(\bigcap_{1\le j\le n;s_j=0} A_j^c\right)\bigcap \Support(f_0).
\end{equation}

The collection $\{C_s:C_s\neq\emptyset, s \in \{0,1\}^n\}$ is a partition of the support of $f_0$.
A history of $n$ questions provides information on which sets $C_s$ contain which objects among $\theta_{1:k}$.

We will think of a sequence of binary sequences $s^{(1)},\dots,s^{(k)}$
as a sequence of codewords indicating the sets in which each of the objects $\theta_{1:k}$ reside,
i.e, indicating that $\theta_1$ is in $C_{s^{(1)}}$, $\theta_2$ is in $C_{s^{(2)}}$, etc.
We may consider each binary sequence $s^{(1)},\dots,s^{(k)}$ to be a column vector,
and place them into an $n\times k$ binary matrix, $\mathcal S$.  This binary matrix then codes the location of all $k$ objects, and is a codeword for their joint location.

Moreover, to characterize the location of the random vector $\theta=(\theta_{1:k})$ in terms of its codeword ${\mathcal S}$, define $C_{\mathcal S}\subset \mathbb R^k$ to be the Cartesian product
\begin{equation}
C_{\mathcal S}=C_{s^{(1)}}\times\dots\times C_{s^{(k)}}.
\end{equation}

To be consistent with an answer $X_j$, we must have exactly $X_j$ objects located in the question set $A_j$ for each $1\le j\le n$.  This can be described in terms of a constraint on the matrix $\mathcal S$ as $s_j^{(1)}+\dots+s_j^{(k)}=X_j$, i.e., that the sum of the $j^{th}$ row in the matrix $\mathcal S$ is $X_j$.
Thus, after observing the answers to the questions $X_{1:n}=x_{1:n}$,
the set of all possible joint codewords describing $\theta_{1:k}$ is
\begin{equation}
\label{eq:collection}
E_n=\{\mathcal S|s^{(1)},\dots,s^{(k)}\in \{0,1\}^n, C_{s^{(1)}},\dots,C_{s^{(k)}}\neq\emptyset, s_j^{(1)}+\dots+s_j^{(k)}=x_j, \text{for all $ 1\leq j\leq n$}\}.
\end{equation}

To illustrate the previous construction, and also to provide the foundation for a later analysis in Section~\ref{sec:comparison} showing the greedy policy is strictly better than the dyadic policy in some settings, we provide two examples of the posterior distribution, arising from two different responses to the same sequence of questions.

Suppose $\theta_1, \theta_2$ are two objects located in (0,1] with a uniform prior distribution $f_0$. Let $A_1$ and $A_2$ be the first two questions of the dyadic policy, so $A_1 = \left(\frac{1}{2},1\right]$ and $A_2 = \left(\frac{1}{4},\frac{1}{2}\right] \cup \left(\frac{3}{4},1\right]$.  Then consider two possibilities for the answers to these questions:

\paragraph{Example 1:} Suppose $X_1=0$ and $X_2=2$. According to \eqref{eq:collection}, there is only one matrix $\mathcal S$ in the collection $E_2$, which has $s^{(1)}=s^{(2)}=(0,1)^T$.  Thus $E_2=\{\mathcal S_1\}$ where
\begin{align}
\label{eq:list_matrix1}
\mathcal S_1=
\begin{pmatrix}
0 & 0\\
1 & 1
\end{pmatrix}.
\end{align}
We can observe that
$p_2(u_{1:2})=16$ when $u_{1:2}$ is in $\left(\frac{1}{4},\frac{1}{2}\right]\times\left(\frac{1}{4},\frac{1}{2}\right]$,
and $0$ otherwise.

\paragraph{Example 2:} Suppose $X_1=1$ and $X_2=1$. According to \eqref{eq:collection}, there are four matrices in the collection $E_2=\{\mathcal S_1, \mathcal S_2, \mathcal S_3, \mathcal S_4\}$,
\begin{align}
\label{eq:list_matrix2}
\mathcal S_1=
\begin{pmatrix}
0 & 1\\
0 & 1
\end{pmatrix},
\mathcal S_2=
\begin{pmatrix}
0 & 1\\
1 & 0
\end{pmatrix},
\mathcal S_3=
\begin{pmatrix}
1 & 0\\
0 & 1
\end{pmatrix},
\mathcal S_4=
\begin{pmatrix}
1 & 0\\
1 & 0
\end{pmatrix}.
\end{align}
We can observe that the posterior distribution has density $p_2(u_{1:2})=4$ when
$u_{1:2}$ is in
$\left(0,\frac{1}{4}\right]\times \left(\frac{3}{4},1\right]$ or
$\left(\frac{1}{4},\frac{1}{2}\right]\times \left(\frac{1}{2},\frac{3}{4}\right]$ or
$\left(\frac{1}{2},\frac{3}{4}\right]\times \left(\frac{1}{4},\frac{1}{2}\right]$ or
$\left(\frac{3}{4},1\right]\times \left(0,\frac{1}{4}\right]$,
and is $0$ otherwise.

\medskip
All possible joint locations of $\theta_1, \theta_2$ in the two examples above are shown in Figure \ref{fig:eg_matrix}.

\begin{figure}[H]
\centering
\includegraphics[scale=0.55]{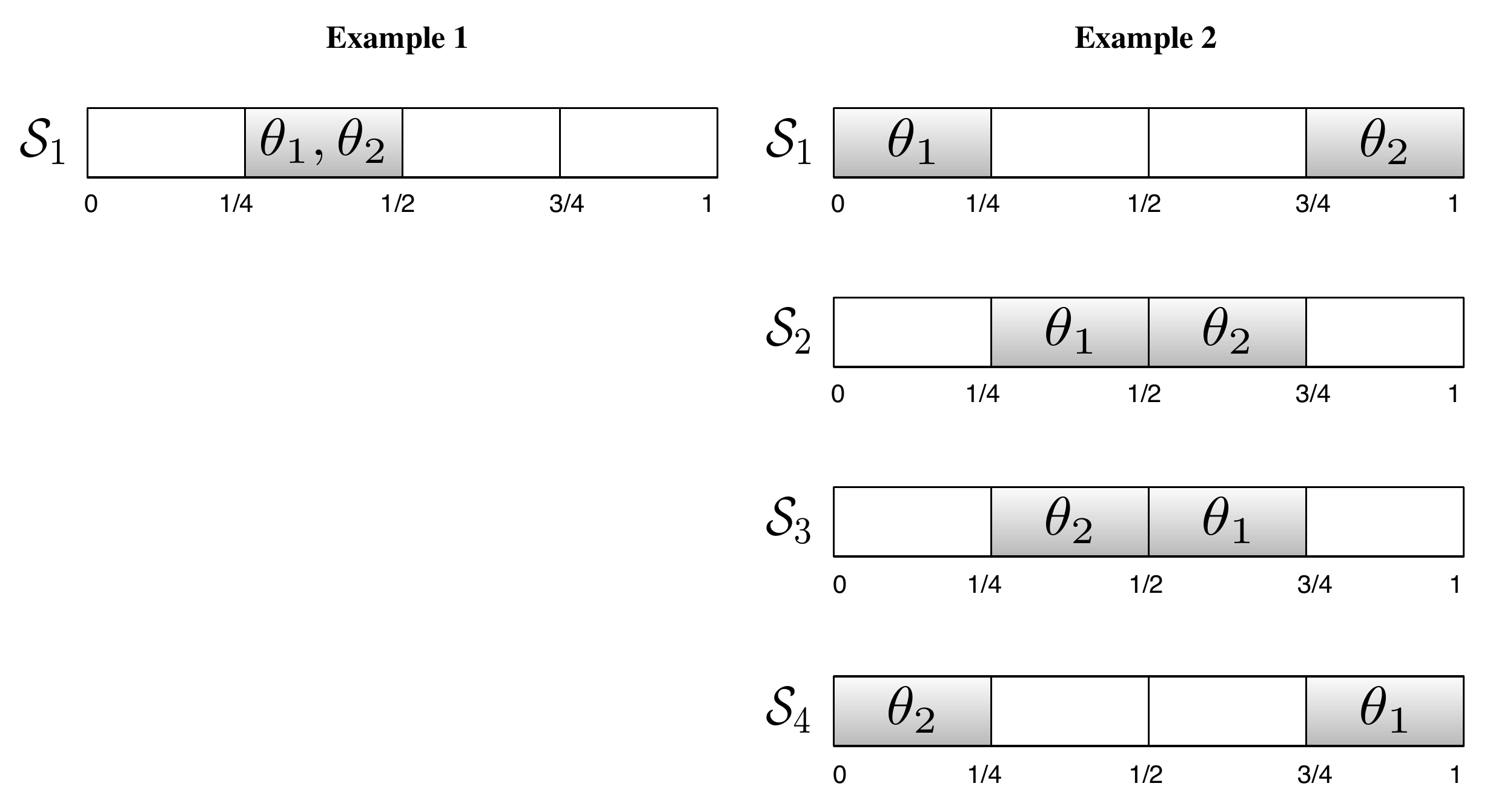}
\caption{Illustration of the locations of the two objects $\theta_1, \theta_2$ specified by each matrix given in \eqref{eq:list_matrix1} and \eqref{eq:list_matrix2}. The dark subsets mark the location of the objects $\theta_1, \theta_2$.}
\label{fig:eg_matrix}
\end{figure}

Given this notation, we observe the following lemma:
\begin {lem}
\label{lem:historyk}
Let a policy $\pi$ and the random seed $Z=z$ be fixed. Then, for each $x_{1:n}$,
the event $\{X_{1:n}=x_{1:n}\}$ can be rewritten
\begin{equation}
\label{eq:historyk}
\{X_{1:n}=x_{1:n}\}
=\left\{\theta \in \bigcup\limits_{\mathcal S\in E_n} C_{\mathcal{S}}\right\},
\end{equation}
where we recall that $E_n$ depends on $x_{1:n}$.
Moreover for any $\mathcal S, \mathcal T\in E_n$ with $\mathcal S\not= \mathcal T$, the two sets $C_{\mathcal S}$ and $C_{\mathcal T}$ are disjoint.
\end{lem}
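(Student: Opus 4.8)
The plan is to prove Lemma~\ref{lem:historyk} in two parts: first the set equality~\eqref{eq:historyk}, and then the disjointness of the cells $C_{\mathcal S}$. Throughout, the policy $\pi$ and seed $Z=z$ are fixed, so the questions $A_{1:n}$ are deterministic, and hence every $C_s$ defined in~\eqref{eq:Cs} is a fixed subset of $\Support(f_0)$.

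For the set equality, I would argue by double inclusion, but really both directions follow by unwinding definitions. The key observation is that the collection $\{C_s : C_s \neq \emptyset,\, s \in \{0,1\}^n\}$ partitions $\Support(f_0)$: each point $u \in \Support(f_0)$ lies in exactly one $C_s$, namely the one whose coordinates record the membership pattern $s_j = \mathbbm 1_{A_j}(u)$ for $1 \le j \le n$. Applying this coordinatewise to $\theta = (\theta_1,\dots,\theta_k)$, almost surely each $\theta_i \in \Support(f_0)$, so there is a unique binary codeword $s^{(i)}$ with $\theta_i \in C_{s^{(i)}}$, and these assemble into a unique matrix $\mathcal S$ with $\theta \in C_{\mathcal S}$. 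The crucial link is that, by the definition of the answer in~\eqref{eq:answer}, $X_j = \sum_{i=1}^k \mathbbm 1_{A_j}(\theta_i) = \sum_{i=1}^k s^{(i)}_j$ equals the $j$th row sum of $\mathcal S$. Therefore $\{X_{1:n} = x_{1:n}\}$ holds if and only if the unique matrix $\mathcal S$ associated to $\theta$ has all its row sums equal to $x_{1:n}$, which is exactly the membership condition $\mathcal S \in E_n$ from~\eqref{eq:collection} (the nonemptiness constraints on the $C_{s^{(i)}}$ are automatically satisfied since each actually contains $\theta_i$). This gives $\{X_{1:n}=x_{1:n}\} = \{\theta \in \bigcup_{\mathcal S \in E_n} C_{\mathcal S}\}$.

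For the disjointness claim, suppose $\mathcal S, \mathcal T \in E_n$ with $\mathcal S \neq \mathcal T$. Then they differ in some column $i$, i.e.\ $s^{(i)} \neq t^{(i)}$ as binary sequences, so they differ in some coordinate $j$. Since $C_{s^{(i)}}$ and $C_{t^{(i)}}$ are distinct cells of the partition of $\Support(f_0)$ described above (distinct because their codewords differ), they are disjoint; indeed, in coordinate $j$ one cell is contained in $A_j$ and the other in $A_j^c$. Because $C_{\mathcal S}$ and $C_{\mathcal T}$ are Cartesian products whose $i$th factors are these disjoint sets, the products themselves are disjoint in $\mathbb R^k$.

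The argument is essentially bookkeeping, so there is no deep obstacle; the main point requiring care is the first, foundational claim that $\{C_s\}$ partitions $\Support(f_0)$, on which both parts of the lemma rest. I would state and justify this partition property explicitly up front, noting that it is precisely the map $u \mapsto (\mathbbm 1_{A_1}(u),\dots,\mathbbm 1_{A_n}(u))$ that assigns each point its codeword, and that nonempty cells are exactly the fibers of this map. One minor subtlety worth flagging is that the identities hold up to a $p_0$-null set (the boundaries where coordinates of $\theta$ might land outside $\Support(f_0)$), which is harmless for the posterior-distribution computations these sets feed into.
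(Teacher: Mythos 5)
Your proposal is correct and follows essentially the same route as the paper's proof: both arguments rest on the fact that the cells $C_s$ partition $\Support(f_0)$ so that $\theta$ determines a unique codeword matrix, that the answers $X_j$ are exactly the row sums of that matrix, and that distinct matrices differ in some column, forcing the Cartesian products $C_{\mathcal S}$ and $C_{\mathcal T}$ to be disjoint. Your version is slightly more explicit than the paper's (stating the partition property up front and flagging the almost-sure caveat that $\theta$ lies in $\Support(f_0)^k$), but the underlying argument is the same.
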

\begin{proof}
  Clearly, according to the definition of $E_n$ in \eqref{eq:collection}, when $\theta \in \bigcup_{\mathcal S\in E_n}C_{\mathcal{S}}$, the answers that we observe must satisfy $X_{1:n}=x_{1:n}$. On the other hand, suppose $\theta_{1:k}\not\in\bigcup_{\mathcal S\in E_n}C_{\mathcal{S}}$.  Then $\theta_{1:k}$ belongs to some nonempty set $C_\mathcal S$ where $\mathcal S\not\in E_n$. Hence, there exists $j$, $1\le j\le n$, such that $s_j^{(1)}+\dots+s_j^{(k)}\neq x_j$, which implies that the answer to the question $A_j$ is $X_j=s_j^{(1)}+\dots+s_j^{(k)}\neq x_j$. This proves \eqref{eq:historyk}.

  Now, for any $\mathcal S\not= \mathcal T$, there exists $i$ with $1\leq i\leq k$ such that $s^{(i)} \not= t^{(i)}$. This implies that $C_{s^{(i)}}$ and $C_{t^{(i)}}$ are disjoint and the last assertion follows.
\end{proof}

At this point, the explicit characterization of the posterior distribution is immediate and we have the following lemma.
\begin{lem}
\label{lem:productk}
\begin{equation}
\label{eq:productk}
p_n(u_{1:k}) = \frac{p_0(u_{1:k})}{p_0\left(\bigcup\limits_{\mathcal S \in E_n} C_{\mathcal S}\right)} \text{, for $u_{1:k} \in \bigcup\limits_{\mathcal S \in E_n} C_{\mathcal S}$},
\end{equation}
and $p_n(u_{1:k}) = 0$ for $u_{1:k} \notin \bigcup\limits_{\mathcal S \in E_n} C_{\mathcal S}$.
Here, for any measurable set $A$, $p_0(A)$ denotes the integral $\int_A p_0(u_{1:k})\,du_{1:k}$.
Moreover,
\begin{equation}
p_0\left(\bigcup\limits_{\mathcal S \in E_n} C_{\mathcal S}\right)=\sum\limits_{\mathcal S\in E_n} p_0(C_{\mathcal S})= \sum\limits_{\mathcal S\in E_n} f_0(C_{s^{(1)}})\dots f_0(C_{s^{(k)}}),
\end{equation}
where $f_0(C_{s^{(i)}})$ denotes the integral $\int_{C_{s^{(i)}}} f_0(u)\,du$.
\end{lem}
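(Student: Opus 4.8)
The plan is to obtain \eqref{eq:productk} directly from the Bayes-rule characterization of the posterior already recorded in Section~\ref{sec:formulation}, combined with Lemma~\ref{lem:historyk}, and then to read the product form of the normalizing constant off the product structure of both the prior $p_0$ and the sets $C_{\mathcal S}$. Since the two ingredients needed are already in hand, the argument is essentially bookkeeping.

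First I would recall that, with the policy $\pi$ and seed $Z=z$ fixed, $p_n$ is by definition the conditional density of $\theta$ given the event $\{X_{1:n}=x_{1:n}\}$. For any event of the form $\{\theta\in G_0\}$ with $p_0(G_0)>0$, the elementary conditional-density formula gives the conditional density $u_{1:k}\mapsto p_0(u_{1:k})\,\mathbbm 1_{\{u_{1:k}\in G_0\}}/p_0(G_0)$; that is, the posterior is the prior restricted to $G_0$ and renormalized. Lemma~\ref{lem:historyk} identifies the conditioning set explicitly as $\{X_{1:n}=x_{1:n}\}=\{\theta\in\bigcup_{\mathcal S\in E_n}C_{\mathcal S}\}$. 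Substituting $G_0=\bigcup_{\mathcal S\in E_n}C_{\mathcal S}$ into the conditioning formula then yields both \eqref{eq:productk} and the vanishing of $p_n$ off this set. Because the answers $x_{1:n}$ have actually been observed, this event has positive prior probability, so the normalizing constant $p_0(\bigcup_{\mathcal S\in E_n}C_{\mathcal S})$ is strictly positive and the ratio is well defined.

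For the ``moreover'' claim I would invoke the two remaining structural facts. Lemma~\ref{lem:historyk} also asserts that the sets $\{C_{\mathcal S}:\mathcal S\in E_n\}$ are pairwise disjoint, so finite additivity of the measure $p_0$ gives $p_0(\bigcup_{\mathcal S\in E_n}C_{\mathcal S})=\sum_{\mathcal S\in E_n}p_0(C_{\mathcal S})$. Finally, since the prior factorizes as $p_0(\theta)=\prod_{i=1}^k f_0(\theta_i)$ and each $C_{\mathcal S}=C_{s^{(1)}}\times\dots\times C_{s^{(k)}}$ is a Cartesian product, Tonelli's theorem factorizes the integral, $p_0(C_{\mathcal S})=\prod_{i=1}^k\int_{C_{s^{(i)}}}f_0(u)\,du=\prod_{i=1}^k f_0(C_{s^{(i)}})$. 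Chaining these two displays gives the stated product form.

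I do not expect a genuine obstacle, since both lemmas needed are already available. The only point requiring care is the identification that conditioning on the discrete observation $\{X_{1:n}=x_{1:n}\}$ is the same as conditioning on the geometric event $\{\theta\in\bigcup_{\mathcal S\in E_n}C_{\mathcal S}\}$, together with the positivity of its prior probability — and this is precisely what Lemma~\ref{lem:historyk} supplies. Everything else reduces to the conditional-density formula plus disjointness and the product structure of $p_0$.
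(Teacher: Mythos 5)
Your proposal is correct and follows exactly the route the paper intends: the paper states this lemma without proof, calling it ``immediate'' from the Bayes-rule characterization of $p_n$ in Section~\ref{sec:formulation} together with Lemma~\ref{lem:historyk}, which is precisely the combination you use (restriction-and-renormalization on the event $\{\theta\in\bigcup_{\mathcal S\in E_n}C_{\mathcal S}\}$, disjointness for additivity, and the product structure of $p_0$ and $C_{\mathcal S}$ for the factorized normalizing constant). Your write-up simply makes explicit the bookkeeping the paper leaves to the reader, including the worthwhile remark that the conditioning event has positive prior mass.
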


\subsection{The Posterior Predictive Distribution of $X_{n+1}$}
\label{sec:post_answer}
We now provide an explicit form for the posterior predictive distribution of $X_{n+1}$, i.e., its conditional distribution given the history $X_{1:n}$ and the external source of randomness in the policy $Z$.  This is useful because Lemma \ref{lem:Eentropy_onestep} in the appendix shows that the expected entropy $E[H(p_N)]$ can be computed using the conditional entropy of $X_{n+1}$ given $B_n=(Z,A_{1:n},X_{1:n})$. We use this in Sections~\ref{sec:dyadic_value} and~\ref{sec:greedy_value} to compute the expected entropy for the dyadic and greedy policies respectively.

For $n=0$, we have demonstrated in the proof of Theorem~\ref{t:thm1} that $X_1$ follows the binomial distribution $\Bin(k,f_0(A_1))$ given $Z$.

Now, consider any $n\in\{1,2,\dots,N-1\}$, and any fixed history $b_n=(z,a_{1:n},x_{1:n})$.
Using the equality \eqref{eq:historyk} presented in Lemma \ref{lem:historyk} we have,
\begin{equation}
\begin{split}
&P(X_{n+1}=x|B_n=b_n) \\
&= \sum\limits_{\mathcal S \in E_n} P(X_{n+1}=x,\theta \in C_{\mathcal S}|B_n=b_n) \\
&= \sum\limits_{\mathcal S \in E_n} P(X_{n+1}=x|\theta \in C_{\mathcal S},B_n=b_n)P(\theta \in C_{\mathcal S}|B_n=b_n).
\end{split}
\label{eq:predictive1}
\end{equation}
Now, since for any $\mathcal S \in E_n$, $\{\theta \in C_{\mathcal S},Z=z\} \subset \{B_n=b_n\}$ according to Lemma \ref{lem:historyk}, we can simplify:
\begin{equation}
  P(X_{n+1}=x|\theta \in C_{\mathcal S},B_n=b_n) =P(X_{n+1}=x|\theta \in C_{\mathcal S}, Z=z).
\end{equation}

Also, using Lemma \ref{lem:productk}, we obtain
\begin{equation}
  \label{eq:predictive2}
  P(\theta \in C_{\mathcal S}|B_n=b_n) = \frac{f_0(C_{s^{(1)}})\dots f_0(C_{s^{(k)}})}{\sum\limits_{\mathcal S \in E_n}f_0(C_{s^{(1)}})\dots f_0(C_{s^{(k)}})}.
\end{equation}
Finally, according to \eqref{eq:answer}, $X_{n+1}$ is the sum of $k$ Bernoulli random variables $\mathbbm 1_{A_{n+1}}(\theta_1),\dots,\mathbbm1_{A_{n+1}}(\theta_k)$. Given the event $\{\theta \in C_{\mathcal S},Z=z\}$, these $k$ Bernouili r.v's are conditionally independent with respective parameters $q_1=\frac{f_0(A_{n+1} \cap C_{s^{(1)}})}{f_0(C_{s^{(1)}})},\dots, q_k=\frac{f_0(A_{n+1} \cap C_{s^{(k)}})}{f_0(C_{s^{(k)}})}$. This conditional independence can be verified as follows.
Consider any fixed binary vector $w\in\{0,1\}^k$.  For each $i=1,\ldots,k$, let $D_i$ be equal to $A_{n+1}$ if $w_i=1$ and its complement $A_{n+1}^c$ if $w_i=0$.  Then,
\begin{equation}
\begin{split}
  &P(\mathbbm 1_{A_{n+1}}(\theta_i) = w_i,\  i=1,\ldots,k | \theta \in C_{\mathcal S}, Z=z)
  = P(\theta_i \in D_i,\ i=1,\ldots, k |\theta \in C_{\mathcal S}, Z=z) \\
  &=\frac{p_0(D_1\cap C_{s^{(1)}} \times \cdots \times D_k \cap C_{s^{(k)}})}{p_0(C_\mathcal S)}
  =\frac{\prod_{i=1}^k f_0(D_i\cap C_{s^{(i)}})}{\prod_{i=1}^k f_0(C_{s^{(i)}}))}
  =\prod_{i=1}^k \frac{f_0(D_i\cap C_{s^{(i)}})}{f_0(C_{s^{(i)}}))} \\
  &=\prod_{i=1}^k P(\theta_i \in D_i |\theta \in C_{\mathcal S}, Z=z)
  =\prod_{i=1}^k P(\mathbbm 1_{A_{n+1}}(\theta_i) = w_i | \theta \in C_{\mathcal S}, Z=z).
\end{split}
\end{equation}

Using the fact that $X_{n+1}$ is the sum of $k$ conditionally independent Bernoulli random variables given $\theta \in C_\mathcal S$ and $Z=z$, we may provide an explicit probability mass function. When $q_1=\dots =q_k$, $X_{n+1}$ is conditionally $\Bin(k,q_1)$ given $\theta \in C_\mathcal S$ and $Z=z$. In general, let $W_1,\ldots,W_n$ be $n$ independent discrete random variables with $W_i \sim \mathrm{Bernoulli}(q_i)$, where $q_1,\dots,q_n$ are any real numbers in [0,1]. The distribution of $Y=W_1+\ldots+W_n$ is called \emph{Poisson Binomial} distribution, which was first studied by S. D. Poisson in \cite{Poisson1837}. We denote the distribution of $Y$ by $\PB(q_1,\dots,q_n)$ and its probability mass function $P(Y=y) = f_{\PB}(y;q_1,\ldots,q_n)$ is given by
\begin{equation}
  \label{eq:PB_def}
  f_{\PB}(y;q_1,\ldots,q_n)=\sum_{w_{1:n} \in \{\{0,1\}^n|w_1+\dots+w_n=y\}} \prod_{j=1}^n q_j^{w_j}(1-q_j)^{1-w_j},\end{equation}
and has mean and variance given by
\begin{equation}
\begin{split}
E[Y] &=q_1+\ldots+q_n,\\
Var[Y] &=q_1(1-q_1)+\ldots+q_n(1-q_n).
\end{split}
\end{equation}

Using this definition of the Poisson Binomial distribution, the conditional distribution of $X_{n+1}$ given $\theta \in C_\mathcal S$ and $Z=z$ is $\PB(q_1,\dots,q_n)$.

Finally, putting together equations \eqref{eq:predictive1}, \eqref{eq:predictive2},
and the fact that $X_{n+1}$ is conditionally $\PB(q_1,\ldots,q_n)$ given $\theta \in C_\mathcal S$ and $Z=z$
provides the following characterization of the conditional probability mass function of $X_{n+1}$ given $B_n=(Z,A_{1:n},X_{1:n})=b_n$.
\begin{thm}
\label{thm:postY}
For $n=0$, given $\{B_0=b_0\}=\{Z=z\}$, $X_1 \sim \Bin(k,f_0(A_1))$. For $n=1,2,\dots, N-1$, given $B_n=(Z,A_{1:n},X_{1:n})=b_n$, $X_{n+1}$ is a mixture of Poisson Binomial distributions with probability mass function:
\begin{equation}
\label{eq:postY}
\begin{array}{cl}
&P(X_{n+1}=x | B_n=b_n) \\
=& \mathlarger{\sum_{\mathcal S \in E_n}} \frac{f_0(C_{s^{(1)}})\dots f_0(C_{s^{(k)}})}{\sum\limits_{\mathcal T \in E_n}f_0(C_{t^{(1)}})\dots f_0(C_{t^{(k)}})}f_{\PB}\left(x,q_1=\frac{f_0(A_{n+1} \cap C_{s^{(1)}})}{f_0(C_{s^{(1)}})},\dots, q_k=\frac{f_0(A_{n+1} \cap C_{s^{(k)}})}{f_0(C_{s^{(k)}})}\right).
\end{array}
\end{equation}
\end{thm}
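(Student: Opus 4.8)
The plan is to compute the conditional pmf of $X_{n+1}$ by conditioning on which cell of the partition $\{C_{\mathcal S} : \mathcal S \in E_n\}$ the vector $\theta$ occupies, and then identifying the conditional law of $X_{n+1}$ within each cell. The $n=0$ case is immediate from the proof of Theorem~\ref{t:thm1}, where $X_1 \mid Z \sim \Bin(k, f_0(A_1))$, so I focus on $n \ge 1$.

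First I would invoke Lemma~\ref{lem:historyk}, which shows that for fixed $Z=z$ the event $\{X_{1:n}=x_{1:n}\}$ coincides with $\{\theta \in \bigcup_{\mathcal S \in E_n} C_{\mathcal S}\}$ and that the cells $C_{\mathcal S}$ are pairwise disjoint. The law of total probability over this partition then yields the decomposition \eqref{eq:predictive1}. Because $\{\theta \in C_{\mathcal S}, Z=z\} \subseteq \{B_n = b_n\}$, conditioning on the finer event $\{\theta \in C_{\mathcal S}, B_n = b_n\}$ collapses to conditioning on $\{\theta \in C_{\mathcal S}, Z=z\}$, so the second factor in each summand of \eqref{eq:predictive1} is $P(\theta \in C_{\mathcal S} \mid B_n = b_n)$, which Lemma~\ref{lem:productk} evaluates as the normalized product weight in \eqref{eq:predictive2}.

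The substantive step, and the one I expect to be the main obstacle, is identifying the conditional distribution of $X_{n+1}$ given $\{\theta \in C_{\mathcal S}, Z=z\}$. Here I would use that $X_{n+1} = \sum_{i=1}^k \mathbbm 1_{A_{n+1}}(\theta_i)$ is a sum of $k$ Bernoulli indicators, and that on the product set $C_{\mathcal S} = C_{s^{(1)}} \times \cdots \times C_{s^{(k)}}$ the prior factorizes. Fixing any target pattern $w \in \{0,1\}^k$ and writing $D_i = A_{n+1}$ when $w_i=1$ and $D_i = A_{n+1}^c$ when $w_i=0$, the conditional probability that the indicators match $w$ factors into $\prod_{i=1}^k f_0(D_i \cap C_{s^{(i)}})/f_0(C_{s^{(i)}})$, establishing that the $k$ indicators are conditionally independent with success probabilities $q_i = f_0(A_{n+1} \cap C_{s^{(i)}})/f_0(C_{s^{(i)}})$. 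Consequently $X_{n+1}$ is conditionally $\PB(q_1,\ldots,q_k)$ given $\{\theta \in C_{\mathcal S}, Z=z\}$, with pmf $f_{\PB}$ from \eqref{eq:PB_def}.

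Finally, I would substitute the mixture weights \eqref{eq:predictive2} and the Poisson Binomial pmf into \eqref{eq:predictive1} to obtain \eqref{eq:postY}. The only care required is bookkeeping: any cell $C_{\mathcal S}$ for which some factor $f_0(C_{s^{(i)}})$ vanishes receives zero posterior weight in \eqref{eq:predictive2} and may be discarded, so the ratios $q_i$ are well defined on the cells that contribute; and the degenerate case $q_1 = \cdots = q_k$ correctly reduces $\PB$ to $\Bin(k,q_1)$. Since every ingredient has already been assembled in the surrounding derivation, the proof is essentially a matter of combining \eqref{eq:predictive1}, \eqref{eq:predictive2}, and the conditional-independence computation.
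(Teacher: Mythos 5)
Your proposal is correct and follows essentially the same route as the paper's own derivation: the decomposition via Lemma~\ref{lem:historyk}, the collapse of conditioning to $\{\theta \in C_{\mathcal S}, Z=z\}$, the weights from Lemma~\ref{lem:productk}, and the conditional-independence computation establishing the $\PB(q_1,\ldots,q_k)$ law are exactly the steps in Section~\ref{sec:post_answer}. Your added remark about discarding cells with vanishing mass is a harmless refinement the paper leaves implicit.
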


\section{The Dyadic Policy for Localizing Multiple Objects}
\label{sec:dyadic}

We now present the first policy of interest: the \emph{dyadic policy}. This policy is easy to implement, and is non-adaptive, allowing its use in parallel computing environments.  The description of the dyadic policy will be given in Section \ref{sec:dyadic_description}. In Section \ref{sec:dyadic_value}, we will derive the rate of this policy and show that it is optimal among all non-adaptive policies, which is the last equality in our main results \eqref{eq:main}. Finally, asymptotic normality of $H(p_N)$ under the dyadic policy will be provided in Section \ref{sec:dyadic_conv}.
\subsection{Description of the dyadic policy}
\label{sec:dyadic_description}
The definition of the dyadic policy is given in \eqref{eq:dyadic_brief}. In this section, we provide an iterative construction of this policy, introducing notation which will be useful later on.

First, we partition the support of $f_0$ into two subsets, $A_{1,0}$ and $A_{1,1}$:
\begin{subequations}
\begin{align}
A_{1,0}&= \left(Q\left(0\right),Q\left(\frac{1}{2}\right)\right] \cap \Support(f_0), \\
A_{1,1}&= \left(Q\left(\frac{1}{2}\right),Q\left(1\right)\right] \cap \Support(f_0),
\end{align}
\end{subequations}
where $Q$, as defined in \eqref{eq:quantile}, denotes the quantile function. With this partition, the question asked at time 1 is
\begin{equation}
A_1=A_{1,1}.
\end{equation}
Then we adopt a similar procedure recursively for each $n=1,\dots, N-1$ to partition $A_{n,j}$ into two subsets, $A_{n+1,2j}$ and $A_{n+1,2j+1}$ and then construct the question from these partitions. For $j=0,\dots,2^n-1$, define
\begin{subequations}
\begin{align}
A_{n+1,2j}&=\left(Q\left(\frac{2j}{2^{n+1}}\right),Q\left(\frac{2j+1}{2^{n+1}}\right)\right] \cap \Support(f_0),\\
A_{n+1,2j+1}&=\left(Q\left(\frac{2j+1}{2^{n+1}}\right),Q\left(\frac{2j+2}{2^{n+1}}\right)\right] \cap \Support(f_0),
\end{align}
\end{subequations}
Then the question asked at time $n+1$ is
\begin{equation}
A_{n+1}=\bigcup_{j=0}^{2^n-1}A_{n+1,2j+1}.
\end{equation}
An illustration of these sets $A_n$ is provided below in Figure \ref{fig:dyadic}.

Note that the dyadic policy is non-adaptive, as only the prior distribution is used to construct the next set and not the answer to previous questions.

\begin{figure}[H]
\centering
\includegraphics[width=0.8\textwidth,height=201pt]{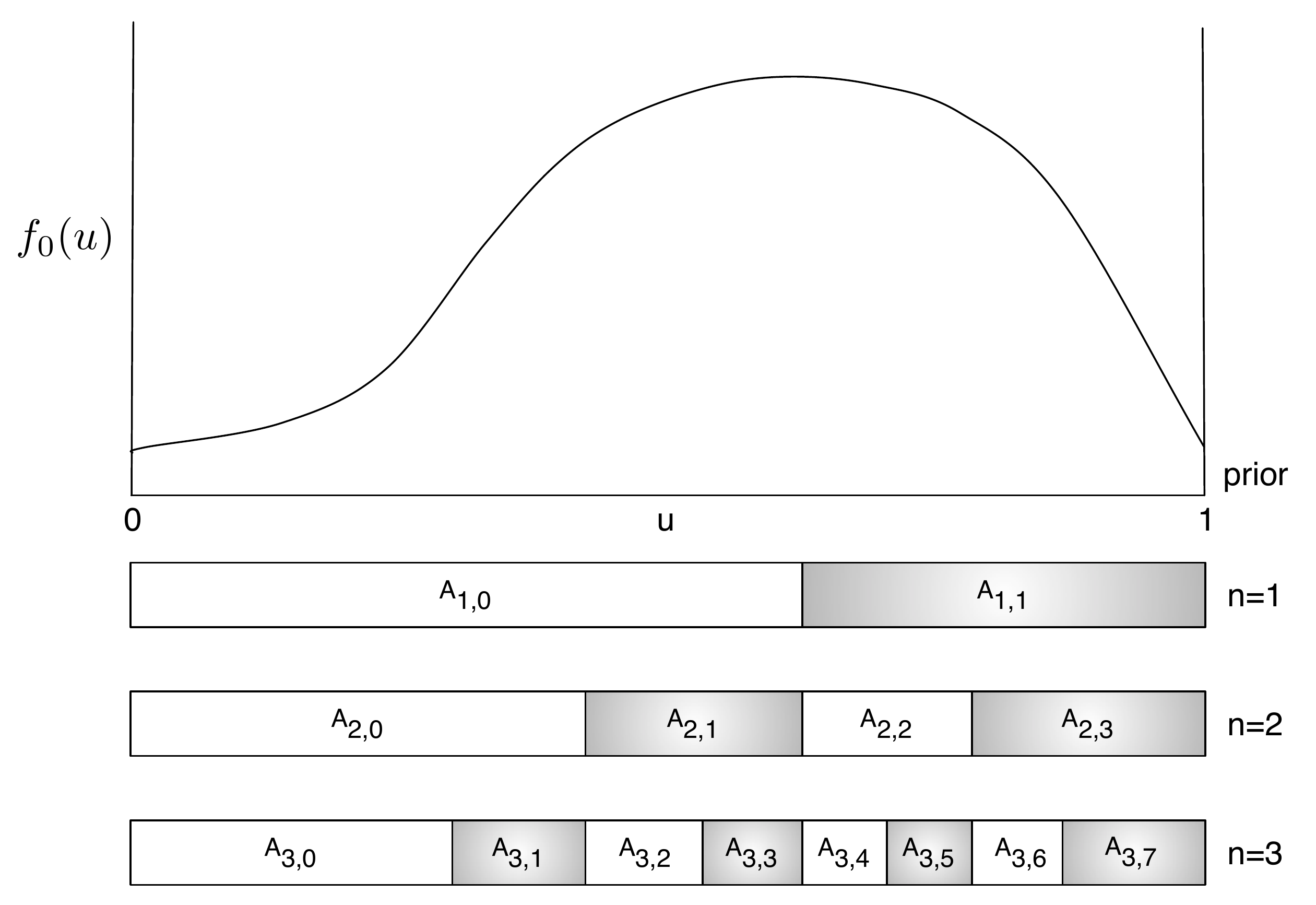}
\caption{Illustration of the dyadic policy. The prior density with support $[0,1]$ is displayed above the illustrations of the sets $A_{n,k}$ for $n=1,2,3$. The question set $A_n$ is the union of the dark subsets $A_{n,k}$ for that value of $n$.}
\label{fig:dyadic}
\end{figure}

\subsection{The rate of the dyadic policy}
\label{sec:dyadic_value}

The rate of the dyadic policy is stated as follows:
\begin{thm}
\label{thm:dyadic}
Under the dyadic policy $\pi_D$,
\begin{equation}
R(\pi_D,N) = H\left(\Bin\left(k,\frac{1}{2}\right)\right).
\end{equation}
Moreover, the dyadic policy is optimal among all non-adaptive policies.
\end{thm}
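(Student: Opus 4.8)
The plan is to prove Theorem~\ref{thm:dyadic} in two parts. First I would establish that under the dyadic policy, $R(\pi_D,N) = H\left(\Bin\left(k,\frac{1}{2}\right)\right)$ exactly; then optimality among non-adaptive policies follows immediately by combining this with the upper bound $R(\pi,N) \le H\left(\Bin\left(k,\frac{1}{2}\right)\right)$ already proven in Theorem~\ref{thm:nonadaptive_opt}. So the entire substance lies in the exact rate computation.

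For the rate, I would use the decomposition of expected entropy in terms of conditional entropies of the answers. By Lemma~\ref{lem:Eentropy_onestep} (referenced in the text), $H_0 - E[H(p_N)] = \sum_{n=1}^N H(X_n \mid B_{n-1})$, where $B_0 = \{Z\}$. Since the dyadic policy is non-adaptive with deterministic questions, $Z$ plays no role, and it suffices to compute $H(X_n \mid X_{1:n-1})$ for each $n$. The key structural fact is that the dyadic question sets are designed so that each $A_n$ bisects every cell $C_s$ of the partition generated by the previous $n-1$ questions into two pieces of equal prior probability: because the sets are defined via dyadic quantiles, $f_0(A_n \cap C_{s^{(i)}}) = \tfrac{1}{2} f_0(C_{s^{(i)}})$ for every relevant cell. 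This means every conditional Bernoulli parameter $q_i$ appearing in Theorem~\ref{thm:postY} equals exactly $\tfrac12$, so the Poisson Binomial collapses to $\Bin\left(k,\tfrac12\right)$ regardless of the mixture weights or the history. The plan is therefore to show that, conditional on any realization of $X_{1:n-1}$, the next answer $X_n$ is distributed as $\Bin\left(k,\tfrac12\right)$; since this conditional law does not depend on the history, $X_n$ is in fact independent of $X_{1:n-1}$ and unconditionally $\Bin\left(k,\tfrac12\right)$. Hence $H(X_n \mid X_{1:n-1}) = H\left(\Bin\left(k,\tfrac12\right)\right)$ for every $n$, giving $H_0 - E[H(p_N)] = N \cdot H\left(\Bin\left(k,\tfrac12\right)\right)$ and thus $R(\pi_D,N) = H\left(\Bin\left(k,\tfrac12\right)\right)$.

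The main obstacle is verifying the halving property $f_0(A_n \cap C_s) = \tfrac12 f_0(C_s)$ cleanly for arbitrary cells $C_s$ at step $n$. I would argue this inductively using the iterative construction in Section~\ref{sec:dyadic_description}: each nonempty cell after $n-1$ questions is exactly one of the intervals $A_{n-1,j}$ (in quantile coordinates, a dyadic interval of prior-probability mass $2^{-(n-1)}$), and the $n$-th question by construction splits each such interval into its two dyadic halves $A_{n,2j}$ and $A_{n,2j+1}$, each carrying prior mass $2^{-n}$, with $A_n$ collecting exactly the odd-indexed halves. Thus every cell is bisected by $A_n$ into equal-prior-probability pieces, forcing each $q_i = \tfrac12$. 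Care is needed to handle the intersection with $\Support(f_0)$ and the possibility of empty cells, but these do not affect the probability identity since prior mass is assigned only on the support.

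Once the halving is established, the remaining steps are mechanical: substitute $q_1 = \dots = q_k = \tfrac12$ into \eqref{eq:postY}, observe that $f_{\PB}(x; \tfrac12,\dots,\tfrac12) = \binom{k}{x} 2^{-k}$ is independent of $\mathcal S$, pull it outside the mixture sum so that the normalized mixture weights sum to one, and conclude $P(X_n = x \mid B_{n-1}) = \binom{k}{x}2^{-k}$. Because this holds for every history, the independence across questions follows, and summing the per-question entropies yields the claimed rate. Optimality among $\Pi_N$ is then immediate: Theorem~\ref{thm:nonadaptive_opt} caps every non-adaptive rate at $H\left(\Bin\left(k,\tfrac12\right)\right)$, and the dyadic policy attains this cap, so it is optimal and the last equality in \eqref{eq:main} holds.
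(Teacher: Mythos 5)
Your proposal is correct and follows essentially the same route as the paper's own proof: the halving property $f_0(A_{n+1}\cap C_{s^{(i)}})=\tfrac12 f_0(C_{s^{(i)}})$ derived from the iterative dyadic construction, the collapse of the Poisson Binomial mixture in Theorem~\ref{thm:postY} to $\Bin\left(k,\tfrac12\right)$ independent of the history, the entropy decomposition of Lemma~\ref{lem:Eentropy_onestep}, and optimality via Theorem~\ref{thm:nonadaptive_opt}. No gaps; this matches the paper's argument step for step.
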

\begin{proof}
In this proof, we will first simplify the equation \eqref{eq:postY} in Theorem \ref{thm:postY} to obtain the posterior distribution of $X_{n+1}$ under the dyadic policy. Then we will calculate the entropy $H^{\pi_D}(X_{n+1}|B_n)$ and employ Lemma \ref{lem:Eentropy_onestep} in the appendix to compute the rate of the dyadic policy.

At time $n$, where $1 \leq n\leq N$, the support of $f_0$ is partitioned into pairwise disjoint subsets $\{A_{n,0},\dots,A_{n,2^{n}-1}\}$. Recall the definition of $C_s$ in \eqref{eq:Cs}. The sets $C_s$ provide a bijection which maps a binary sequence $s \in \{0,1\}^n$ to a subset $A_{n,j(s)}$ for some $j(s) \in \{0,1,\dots,2^n-1\}$.
Hence, $C_{s^{(i)}}$ in \eqref{eq:postY} can be rewritten as
\begin{equation}
\label{eq:Cs_An}
C_{s^{(i)}}=A_{n,j(s^{(i)})}, \text{ for some index } j(s^{(i)}) \in \{0,1,\dots,2^{n}-1\}.
\end{equation}

According to the construction of dyadic questions in Section \ref{sec:dyadic_description}, $A_{n+1}=\bigcup\limits_{j=0}^{2^n-1}A_{n+1,2j+1}$. Moreover, $A_{n+1,2j(s^{(i)})+1} \subset A_{n,j(s^{(i)})}$ and $A_{n+1,2j+1} \cap A_{n,j(s^{(i)})} = \emptyset$, for all $j \neq j(s^{(i)})$. Thus, by \eqref{eq:Cs_An} we have
\begin{equation}
A_{n+1}\cap C_{s^{(i)}} = A_{n+1,2j(s^{(i)})+1}.
\end{equation}

Combining the above result with the fact that $f_0(A_{n+1,2j(s^{(i)})+1})=\frac{1}{2}f_0(A_{n,j(s^{(i)})})$ yields
\begin{equation}
\frac{f_0(A_{n+1} \cap C_{s^{(i)}})}{f_0(C_{s^{(i)}})}=\frac{1}{2},
\end{equation}
and this is true for all $i=1,2,\dots,k$.

Thus, for $n\ge 1$, we can simplify \eqref{eq:postY} in Theorem \ref{thm:postY} as
\begin{equation}
\begin{array}{cl}
& p_n(X_{n+1}=x|B_n=b_n)\\
=& \mathlarger{\sum_{\mathcal S \in E_n}} \frac{f_0(C_{s^{(1)}})\dots f_0(C_{s^{(k)}})}{\sum\limits_{\mathcal T \in E_n}f_0(C_{t^{(1)}})\dots f_0(C_{t^{(k)}})}f_{\PB}\left(x,q_1=\frac{f_0(A_{n+1} \cap C_{s^{(1)}})}{f_0(C_{s^{(1)}})},\dots, q_k=\frac{f_0(A_{n+1} \cap C_{s^{(k)}})}{f_0(C_{s^{(k)}})}\right)\\
=& \mathlarger{\sum_{\mathcal S \in E_n}} \frac{f_0(C_{s^{(1)}})\dots f_0(C_{s^{(k)}})}{\sum\limits_{\mathcal T \in E_{n+1}} f_0(C_{t^{(1)}})\dots f_0(C_{t^{(k)}})}f_{\PB}\left(x,q_1=\frac{1}{2},\dots,q_k=\frac{1}{2}\right)\\
=& f_{\PB}\left(x,q_1=\frac{1}{2},\dots,q_k=\frac{1}{2}\right)\frac{\sum\limits_{\mathcal S \in E_n} f_0(C_{s^{(1)}})\dots f_0(C_{s^{(k)}})}{\sum\limits_{\mathcal T \in E_n} f_0(C_{t^{(1)}})\dots f_0(C_{t^{(k)}})}\\
=& f_{\PB}\left(x,q_1=\frac{1}{2},\dots,q_k=\frac{1}{2}\right).
\end{array}
\end{equation}

The density above is just the density of the binomial distribution $\Bin\left(k,\frac{1}{2}\right)$. We proved that given $\{B_n=b_n\}$, $X_{n+1}$ is distributed as $\Bin\left(k,\frac{1}{2}\right)$ and $H^{\pi_D}(X_{n+1}|B_n=b_n)=H\left(\Bin\left(k,\frac{1}{2}\right)\right)$ for all $n=1,\dots,N-1$. Thus, taking the expectation over all possible realizations of $B_n$, we obtain
\begin{equation}
H^{\pi_D}(X_{n+1}|B_n) = H\left(\Bin\left(k,\frac{1}{2}\right)\right).
\end{equation}
Since $f_0(A_1)=\frac{1}{2}$ under the dyadic policy, according to Theorem \ref{thm:postY}, $X_1|Z=z$ is distributed as $\Bin \left(k,\frac{1}{2}\right)$ for any fixed $z$ and $H^{\pi_D}(X_1|B_0)=H\left(\Bin\left(k,\frac{1}{2}\right)\right)$ as well.

Therefore, according to \eqref{eq:Eentropy_all} in Lemma \ref{lem:Eentropy_onestep} in the appendix,
\begin{equation}
R(\pi_D,N)
= \frac{H_0-E^{\pi_D}[H(p_N)]}{N}
= \frac{\sum_{n=0}^{N-1}H^{\pi_D}(X_{n+1}|B_n)}{N}
= H\left(\Bin\left(k,\frac{1}{2}\right)\right).
\end{equation}
The last claim follows from Theorem \ref{thm:nonadaptive_opt} immediately.
\end{proof}
Note that this is the last equality in our main result \eqref{eq:main}.

The theorem above implies the following approximation guarantee for the entropy reduction under the dyadic policy, relative to optimal.
\begin{coro}
\label{coro:ratio of rate}
\begin{equation*}
\frac
{R(\pi_D,N)}
{\sup_{\pi\in\Pi} R(\pi,N)}
\ge \frac{1}{2}.
\end{equation*}
\end{coro}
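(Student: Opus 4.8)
The plan is to combine the two bounds already established and reduce the corollary to a single scalar inequality about the entropy of a fair binomial. By Theorem~\ref{thm:dyadic} the numerator satisfies $R(\pi_D,N)=H(\Bin(k,\tfrac12))$, while by Theorem~\ref{t:thm1} the denominator satisfies $\sup_{\pi\in\Pi}R(\pi,N)\le\log(k+1)$. Hence
\[
\frac{R(\pi_D,N)}{\sup_{\pi\in\Pi}R(\pi,N)}\ge\frac{H\!\left(\Bin(k,\tfrac12)\right)}{\log(k+1)},
\]
and it suffices to prove the purely information-theoretic inequality $H(\Bin(k,\tfrac12))\ge\tfrac12\log(k+1)$, which is the last inequality of \eqref{eq:main}.

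To establish this inequality I would bound the entropy from below by controlling the largest atom of the distribution. Writing $p_j=\binom{k}{j}2^{-k}$ for the probability mass function of $\Bin(k,\tfrac12)$, the entropy is the expected surprise $H(\Bin(k,\tfrac12))=\sum_j p_j(-\log p_j)$. Since the central binomial coefficient dominates, every atom obeys $p_j\le p_{\max}=\binom{k}{\lfloor k/2\rfloor}2^{-k}$, so every surprise satisfies $-\log p_j\ge-\log p_{\max}$. Thus, if I can show $p_{\max}\le(k+1)^{-1/2}$, then $-\log p_j\ge\tfrac12\log(k+1)$ for every $j$, and averaging against $\{p_j\}$ yields $H(\Bin(k,\tfrac12))\ge\tfrac12\log(k+1)$ at once.

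The remaining and main technical step is therefore the central-binomial estimate $\binom{k}{\lfloor k/2\rfloor}\le 2^k(k+1)^{-1/2}$, which I expect to be the only nontrivial ingredient. For even $k=2m$ I would invoke the Wallis-type identity $\binom{2m}{m}4^{-m}=\prod_{i=1}^{m}\frac{2i-1}{2i}=:P_m$ and show the sequence $b_m=(2m+1)P_m^2$ is nonincreasing: the ratio $b_{m+1}/b_m=(2m+1)(2m+3)/(2m+2)^2<1$, while $b_1=\tfrac34$, so $(2m+1)P_m^2\le\tfrac34<1$ and hence $P_m\le(2m+1)^{-1/2}$, with room to spare. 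For odd $k=2m+1$ the identity $\binom{2m+1}{m}=\tfrac12\binom{2m+2}{m+1}$ shows the maximal atom coincides with that of $\Bin(2m+2,\tfrac12)$, namely $P_{m+1}\le(2m+3)^{-1/2}<(2m+2)^{-1/2}=(k+1)^{-1/2}$, so the bound holds in this case as well. Once this estimate is in place, the corollary follows immediately from the one-line averaging argument above.
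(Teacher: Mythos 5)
Your proposal is correct, and it takes a genuinely different route from the paper at the one step that carries all the technical weight. Both arguments share the same reduction: combine Theorem~\ref{thm:dyadic} ($R(\pi_D,N)=H(\Bin(k,\tfrac12))$) with Theorem~\ref{t:thm1} ($\sup_{\pi\in\Pi}R(\pi,N)\le\log(k+1)$), so that everything rests on the scalar inequality $H(\Bin(k,\tfrac12))\ge\tfrac12\log(k+1)$. The paper then cites Theorem~1 of \cite{harremoes2003entropy}, a discrete entropy-power-type inequality, to get $2^{2H(\Bin(k,1/2))}\ge 4k$, hence $H(\Bin(k,\tfrac12))\ge\tfrac12\log(4k)\ge\tfrac12\log(k+1)$. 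You instead use the min-entropy bound $H\ge-\log p_{\max}$ together with a self-contained estimate of the central binomial coefficient, $\binom{k}{\lfloor k/2\rfloor}\le 2^k(k+1)^{-1/2}$, proved by checking that $b_m=(2m+1)\bigl(\tbinom{2m}{m}4^{-m}\bigr)^2$ is nonincreasing (indeed $(2m+1)(2m+3)=(2m+2)^2-1$) with $b_1=\tfrac34$, and reducing odd $k=2m+1$ to the even case via $\binom{2m+1}{m}=\tfrac12\binom{2m+2}{m+1}$; all of these steps check out, including the small-index cases $m\ge 1$ (even) and $m\ge 0$ (odd). The trade-off: the paper's argument is shorter and yields the marginally stronger bound $\tfrac12\log(4k)$, but it leans on an external result; your argument is entirely elementary and verifiable from scratch, and still has slack, since it actually gives $p_{\max}\le\sqrt{3}\,(2\sqrt{k+1})^{-1}$, i.e.\ $H(\Bin(k,\tfrac12))\ge\tfrac12\log\bigl(\tfrac43(k+1)\bigr)$.
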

\begin{proof}
According to Theorem \ref{t:thm1} and Theorem \ref{thm:dyadic}, it suffices to show
\begin{equation}
            \frac{H(\Bin(k,\frac12))}{\log(k+1)} \ge \frac12.
\end{equation}

First, note that $H(\Bin(k,\frac12)) = H(\sum_{i=1}^k B_i)$, where $B_i$ are iid $\mathrm{Bernoulli}(\frac12)$.
Using Theorem~1 in \cite{harremoes2003entropy}, (but expressing entropy in base 2 instead of base $e$),
\begin{equation}
    2^{2H(\Bin(k,\frac12))} \ge k2^{2H(B_1)} = 4k.
\end{equation}

This implies that $H(\Bin(k,\frac12)) \ge \frac12\log(4k)$ and 
\begin{equation}
       \frac{H(\Bin(k,\frac12))}{\log(k+1)} \ge \frac12 \frac{\log(4k)}{\log(k+1)} \ge \frac12.
\end{equation}
\end{proof}
This shows that the dyadic policy is a $2$-approximation policy, i.e. the rate of learning under the dyadic policy is at least one-half of the rate under an optimal policy.

\subsection{Convergence in entropy under the dyadic policy}
\label{sec:dyadic_conv}

In real applications, however, we are concerned not only about the expected entropy $E^{\pi_D}[H(p_N)]$ but also about the actual entropy $H(p_N)$ that we obtain in a specific trial. It would be beneficial if the actual entropy did not deviate too much from its expected value.  It turns out to be the case for the dyadic policy under the assumptions that the prior density $f_0$ is bounded from above. Lemma \ref{lem:dyadic_Hp} in the appendix provides a decomposition formula for the actual entropy $H(p_n)$ into a sum of two terms. The first term is a sum of i.i.d. random variables. The second term is a converging martingale as will be shown in Lemma \ref{lem:martingale_conv} in the appendix. Finally, Theorem \ref{thm:dyadic_conv} provides almost sure convergence and asymptotic normality for $H(p_n)$ as a direct consequence of Lemma \ref{lem:dyadic_Hp} and \ref{lem:martingale_conv}.

\begin{thm}
\label{thm:dyadic_conv}
Assume there exists $M>0$ such that $f_0(u)\leq M$ for all $u\in \mathbb R$. Then under the dyadic policy,
\begin{equation}
\label{eq:coro1}
\lim_{N \rightarrow\infty} \frac{H(p_N)}{N} = -H\left(\Bin\left(k,\frac{1}{2}\right)\right) \text{ almost surely},
\end{equation}
and
\begin{equation}
\label{eq:coro2}
\lim_{N \rightarrow\infty} \frac{H(p_N)+NH\left(\Bin\left(k,\frac{1}{2}\right)\right)}{\sqrt{N}} \eqd N(0, \sigma^2),
\end{equation}
where $\sigma^2$ is the variance of the random variable $\log{k \choose X}$ with $X \sim \Bin\left(k,\frac{1}{2}\right)$.
\end{thm}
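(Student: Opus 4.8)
The plan is to read off both limit laws from the additive decomposition of $H(p_N)$ supplied by Lemma~\ref{lem:dyadic_Hp}, handling its two summands separately. That lemma writes $H(p_N)$ as the sum of a ``combinatorial'' term and a ``martingale'' term; under the dyadic policy the first is $\sum_{n=1}^N \xi_n$ with $\xi_n = \log\binom{k}{X_n} - k$, and the second is $G_N := -\int p_N(u)\log p_0(u)\,du = E[-\log p_0(\theta)\mid B_N]$, the posterior mean of $-\log p_0(\theta)$. The key structural input is Theorem~\ref{thm:dyadic}, whose proof shows that under the dyadic policy $X_1,X_2,\dots$ are i.i.d.\ $\Bin(k,\tfrac12)$; consequently the $\xi_n$ are i.i.d., bounded (since each $X_n$ takes values in the finite set $\{0,\dots,k\}$), and hence have finite mean and variance.

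First I would identify the two constants that appear in the statement. A short computation gives $E[\xi_1] = E[\log\binom{k}{X}] - k = -H(\Bin(k,\tfrac12))$, by expanding $H(\Bin(k,\tfrac12)) = -\sum_x 2^{-k}\binom{k}{x}\bigl[\log\binom{k}{x}-k\bigr]$; and since $\xi_1$ differs from $\log\binom{k}{X}$ only by the constant $-k$, its variance is exactly $\sigma^2 = \mathrm{Var}\!\left(\log\binom{k}{X}\right)$. This pins down the centering constant $-H(\Bin(k,\tfrac12))$ in \eqref{eq:coro1} and the limiting variance in \eqref{eq:coro2}.

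Next I would dispose of the martingale term. Lemma~\ref{lem:martingale_conv} (which is where the hypothesis $f_0\le M$ is used, guaranteeing that $-\log p_0(\theta)$ is integrable and that the Doob martingale $G_N$ is $L^1$-bounded) shows that $G_N$ converges almost surely to a finite limit. In particular $G_N$ is almost surely a bounded sequence, so $G_N/N \to 0$ and $G_N/\sqrt N \to 0$ almost surely. The remaining work is then a routine assembly. For \eqref{eq:coro1}, the strong law of large numbers gives $\frac1N\sum_{n=1}^N \xi_n \to E[\xi_1] = -H(\Bin(k,\tfrac12))$ almost surely, and adding $G_N/N\to 0$ yields $H(p_N)/N \to -H(\Bin(k,\tfrac12))$ almost surely. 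For \eqref{eq:coro2}, the classical central limit theorem gives $\frac{1}{\sqrt N}\sum_{n=1}^N (\xi_n - E[\xi_1]) \Rightarrow N(0,\sigma^2)$, and since $H(p_N)+N\,H(\Bin(k,\tfrac12)) = \sum_{n=1}^N(\xi_n - E[\xi_1]) + G_N$, Slutsky's theorem together with $G_N/\sqrt N \to 0$ in probability delivers the stated weak convergence.

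The genuinely delicate content sits inside the two lemmas rather than in this assembly: establishing the exact decomposition of $H(p_N)$ (which rests on the fact that under the dyadic policy every level-$N$ bin carries prior mass exactly $2^{-N}$, so the normalizing constant factorizes as $\prod_{n=1}^N \binom{k}{X_n}\,2^{-Nk}$), and proving the almost-sure convergence of $G_N$ under $f_0\le M$. Granting those, the only point to watch in the theorem itself is that the remainder $G_N$ must vanish after dividing by $\sqrt N$; this is exactly what almost-sure convergence of $G_N$ to a finite limit provides, so no $L^2$ bound or quantitative rate on the martingale is needed, and Slutsky's theorem suffices to transfer the central limit theorem for the i.i.d.\ part to $H(p_N)$.
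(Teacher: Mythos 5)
Your proposal is correct and follows essentially the same route as the paper's own proof: the decomposition of Lemma~\ref{lem:dyadic_Hp} into the i.i.d.\ sum $-\sum_{j}Z_j$ (your $\sum_n \xi_n$) plus the term $I_2(N)$ (your $G_N$), the almost-sure convergence of that term from Lemma~\ref{lem:martingale_conv} under $f_0\le M$, and then the strong law for \eqref{eq:coro1} and the classical CLT plus Slutsky for \eqref{eq:coro2}. Your added observation that $I_2(N)$ is the Doob martingale $E[-\log p_0(\theta)\mid B_N]$ is a nice (and accurate) gloss on the paper's lemma, but it does not change the argument.
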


\begin{proof}
According to Lemma \ref{lem:martingale_conv}, $\lim_{N \rightarrow\infty} \frac{I_2(N)}{N}=\lim_{N \rightarrow\infty} \frac{I_2(\infty)}{N}=0$ almost surely. Hence, by \eqref{eq:Hn} in Lemma \ref{lem:dyadic_Hp},
\begin{equation}
\lim_{N \rightarrow\infty} \frac{H(p_N)}{N}=\lim_{N \rightarrow\infty} \frac{I_2(N)}{N}-\frac{1}{N}\sum_{j=1}^N Z_j=0-E[Z_1]=-H\left(\Bin\left(k,\frac{1}{2}\right)\right)
\end{equation}
almost surely.

To prove \eqref{eq:coro2}, note that
\begin{equation}
\frac{H(p_N)+NH\left(\Bin\left(k,\frac{1}{2}\right)\right)}{\sqrt{N}}=\frac{I_2(N)-\sum_{i=1}^NZ_j+NH\left(\Bin\left(k,\frac{1}{2}\right)\right)}{\sqrt{N}}.\end{equation}

Furthermore, since by Lemma \ref{lem:martingale_conv} $I_2(N)$ converges to $I_2(\infty)$ almost surely and $E[|I_2(\infty)|]<\infty$, $\frac{I_2(N)}{\sqrt{N}}\rightarrow 0$ almost surely, which implies $\frac{I_2(N)}{\sqrt{N}}\rightarrowd 0$. On the other hand, $E(Z_j)=H\left(\Bin\left(k,\frac{1}{2}\right)\right)$ and $Var(Z_j)=Var\left(\log {k \choose X}\right)=\sigma^2$, where $X\sim\Bin\left(k,\frac{1}{2}\right)$. Hence, by the central limit theorem, we have $\frac{-\sum_{i=1}^NZ_j+NH\left(\Bin\left(k,\frac{1}{2}\right)\right)}{\sqrt{N}}\rightarrowd \mathcal N(0,\sigma^2)$. Therefore, by Slutsky's Theorem (Theorem 25.4 in \cite{Billingsley}),
\begin{equation}
\frac{H(p_N)+NH\left(\Bin\left(k,\frac{1}{2}\right)\right)}{\sqrt{N}}=\frac{I_2(N)}{\sqrt{N}}+\frac{-\sum_{i=1}^NZ_j+NH\left(\Bin\left(k,\frac{1}{2}\right)\right)}{\sqrt{N}}\rightarrowd \mathcal N(0,\sigma^2).
\end{equation}

\end{proof}

Figure \ref{fig:dyadic3objects} below shows the simulation results for localizing one object, two objects, and three objects respectively, under the dyadic policy. We assume the prior density $f_0$ is uniform over $(0,1]$ and ask $100$ questions to locate the objects. The top line corresponds to locating a single object. In this case, the dyadic policy is actually optimal and identical to the greedy policy as was proved in \cite{JAP}. Moreover, the entropy process $H(p_n)$ is in this case deterministic.  The middle and bottom lines show the results for respectively $k=2$ and $k=3$ objects. In this case, the entropy process $H(p_n)$ is not deterministic anymore. The entropy reduction per question which is visualized in the second column  is asymptotically equal to $H\left(Bin\left(k,\frac{1}{2}\right)\right)$ according to the law of large numbers. The third column illustrates the asymptotic normality of the entropy process for the dyadic policy.

\begin{figure}[H]
\minipage{0.33\textwidth}
  \includegraphics[width=\linewidth]{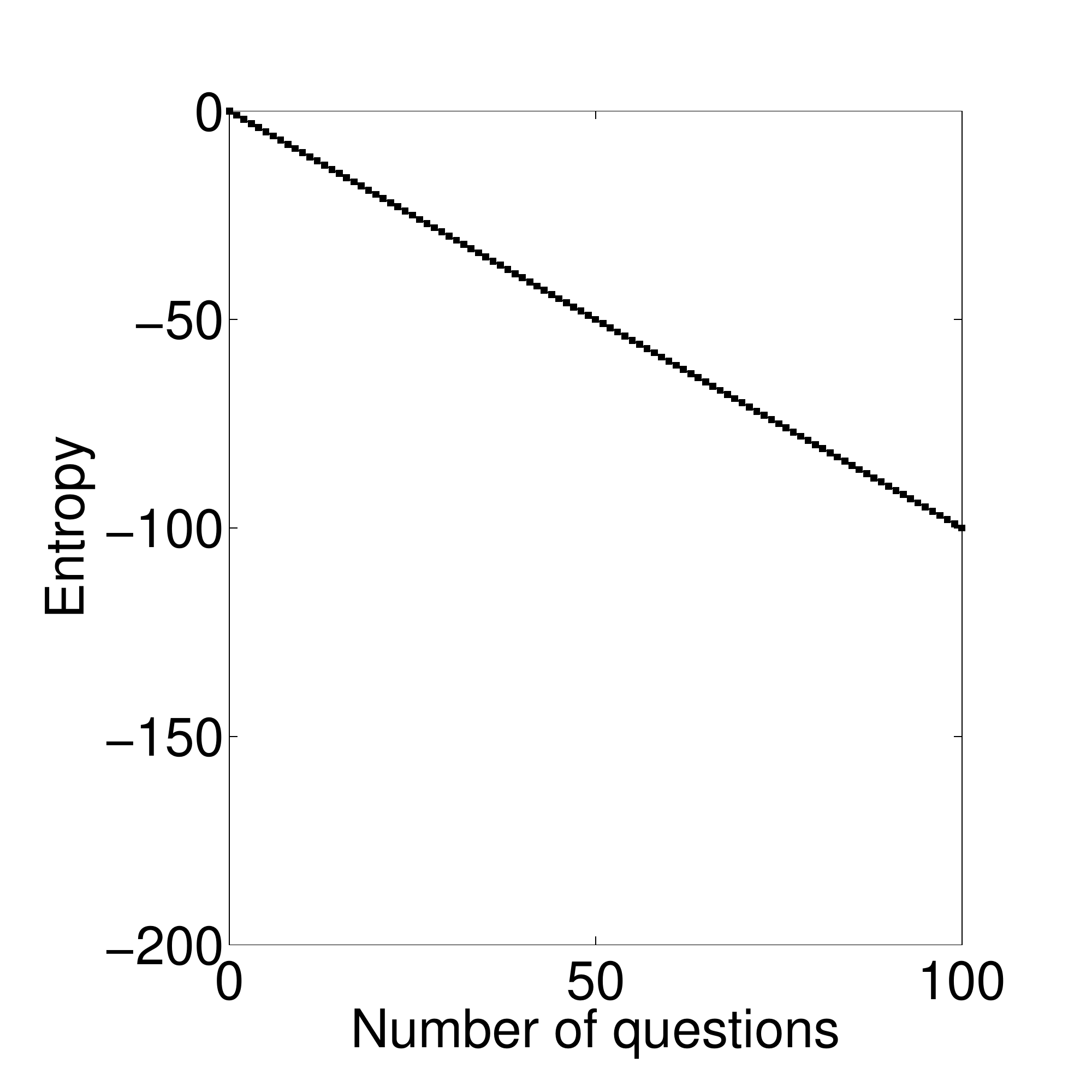}
\endminipage\hfill
\minipage{0.33\textwidth}
  \includegraphics[width=\linewidth]{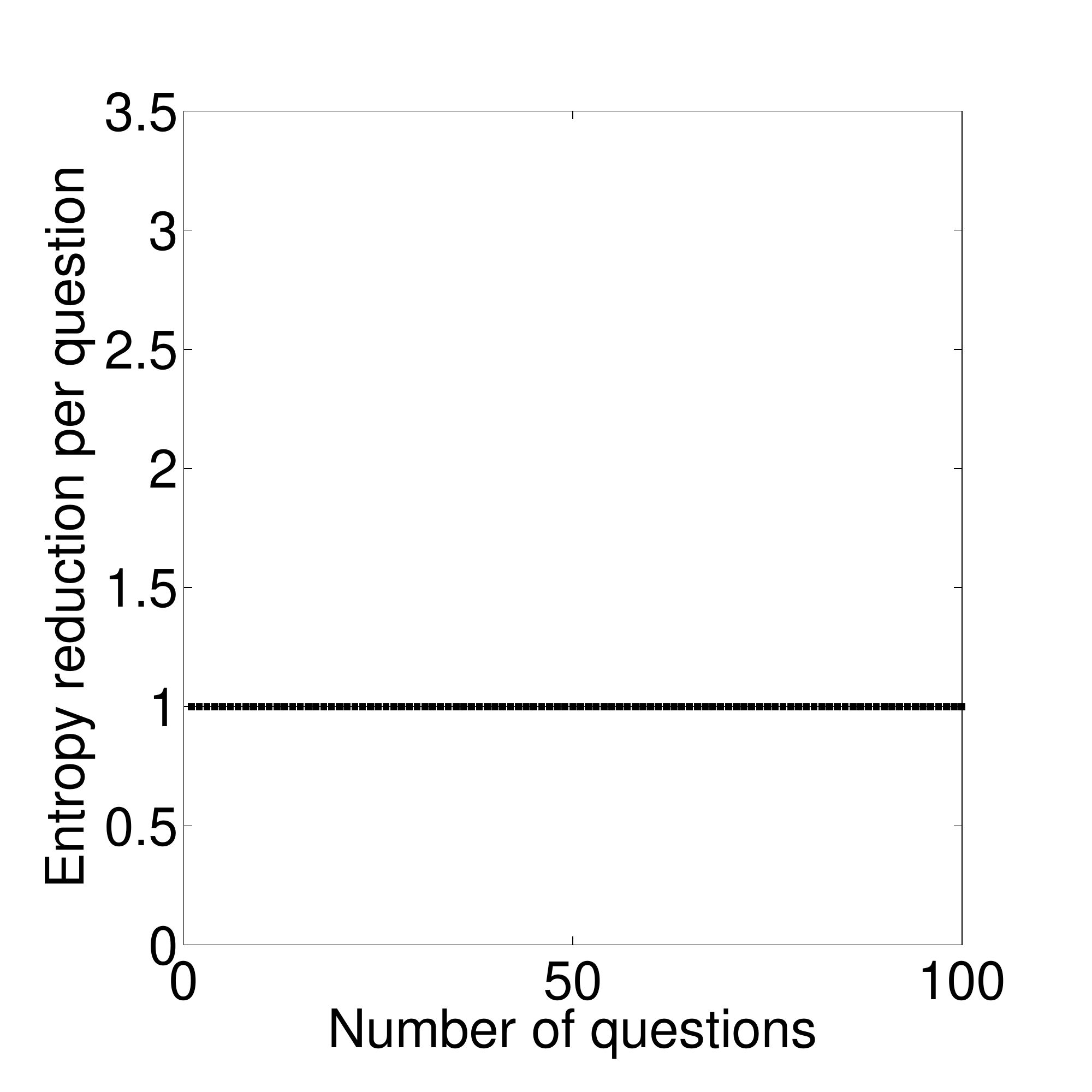}
\endminipage\hfill
\minipage{0.33\textwidth}
  \includegraphics[width=\linewidth]{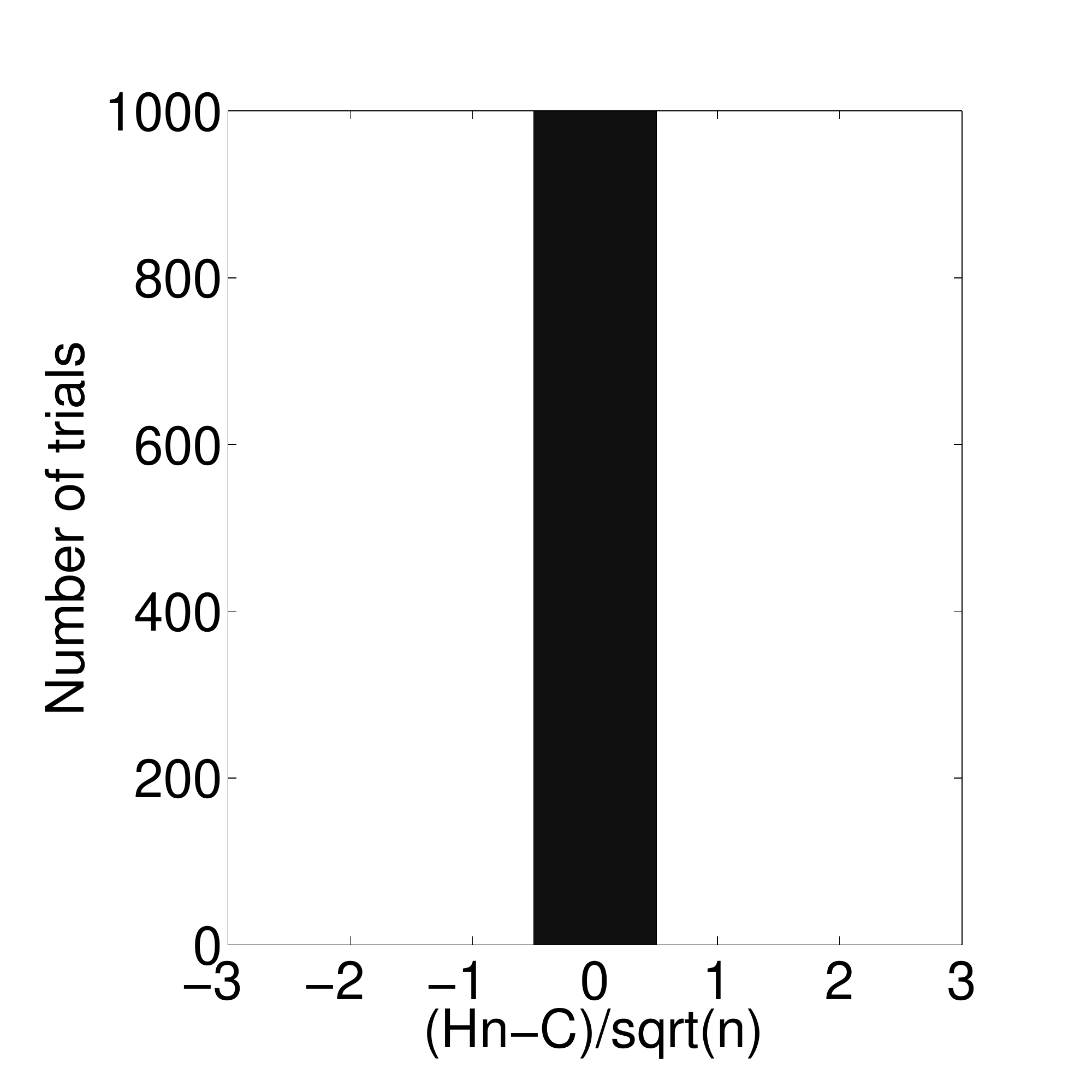}
\endminipage\hfill
\end{figure}

\begin{figure}[H]
\minipage{0.33\textwidth}
  \includegraphics[width=\linewidth]{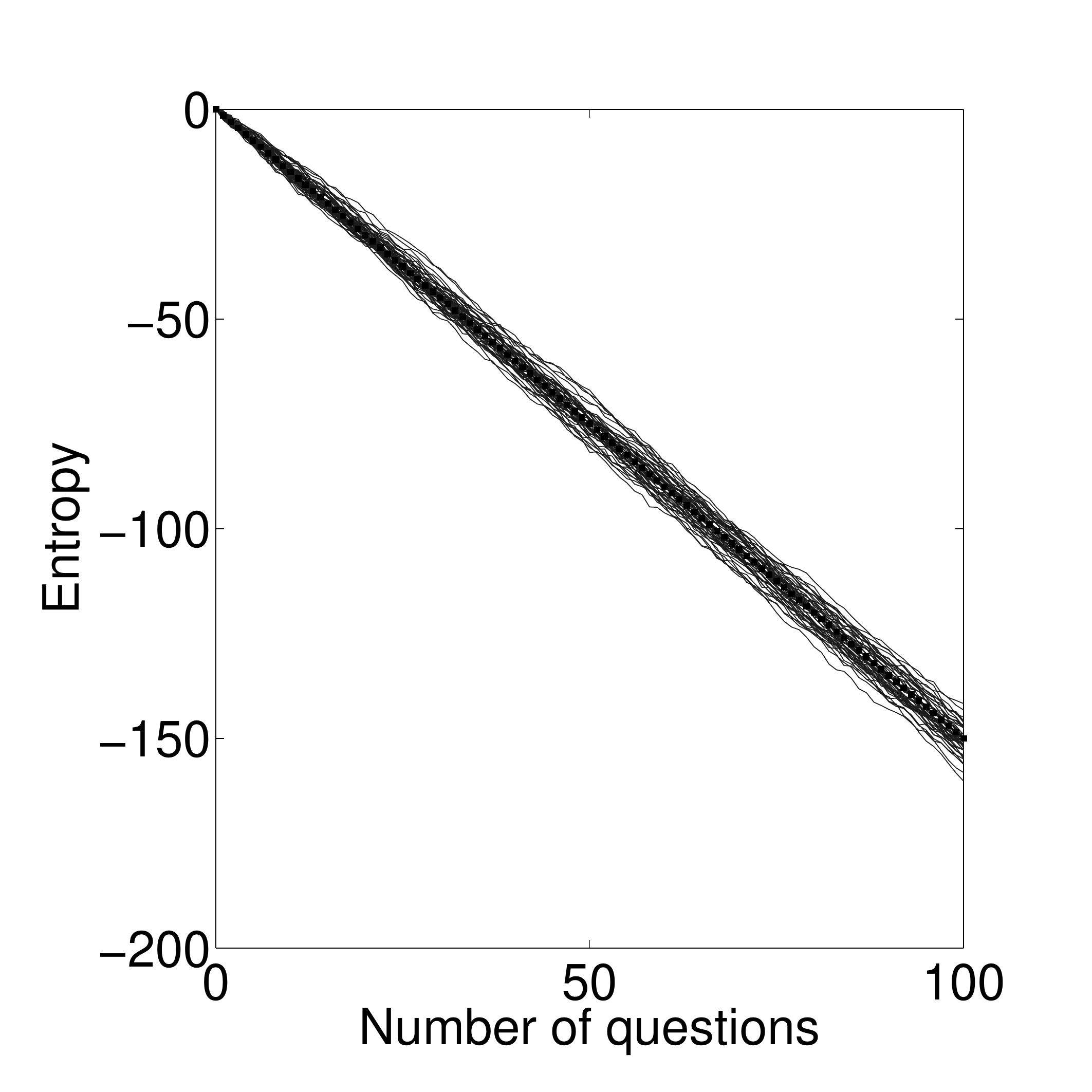}
\endminipage\hfill
\minipage{0.33\textwidth}
  \includegraphics[width=\linewidth]{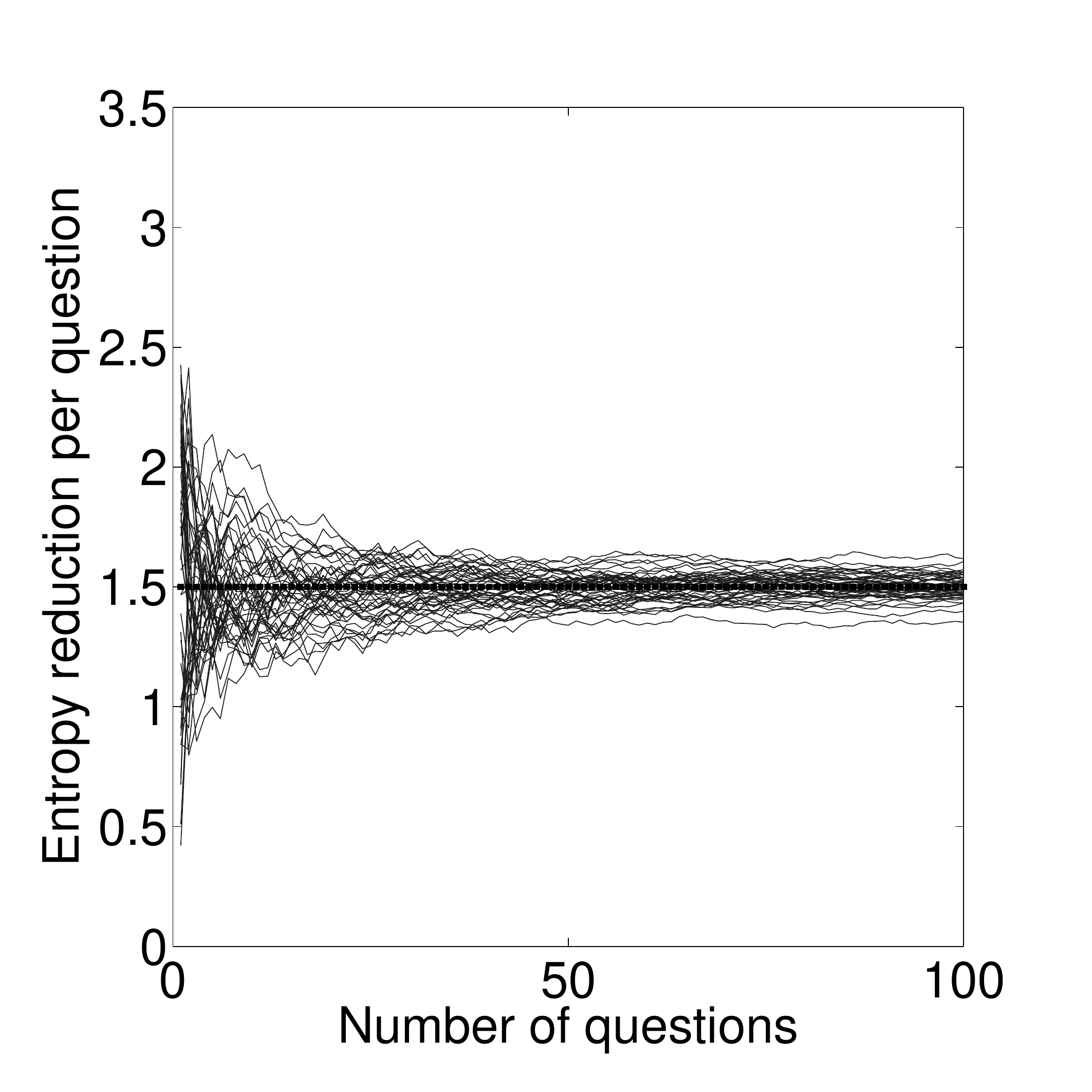}
\endminipage\hfill
\minipage{0.33\textwidth}
  \includegraphics[width=\linewidth]{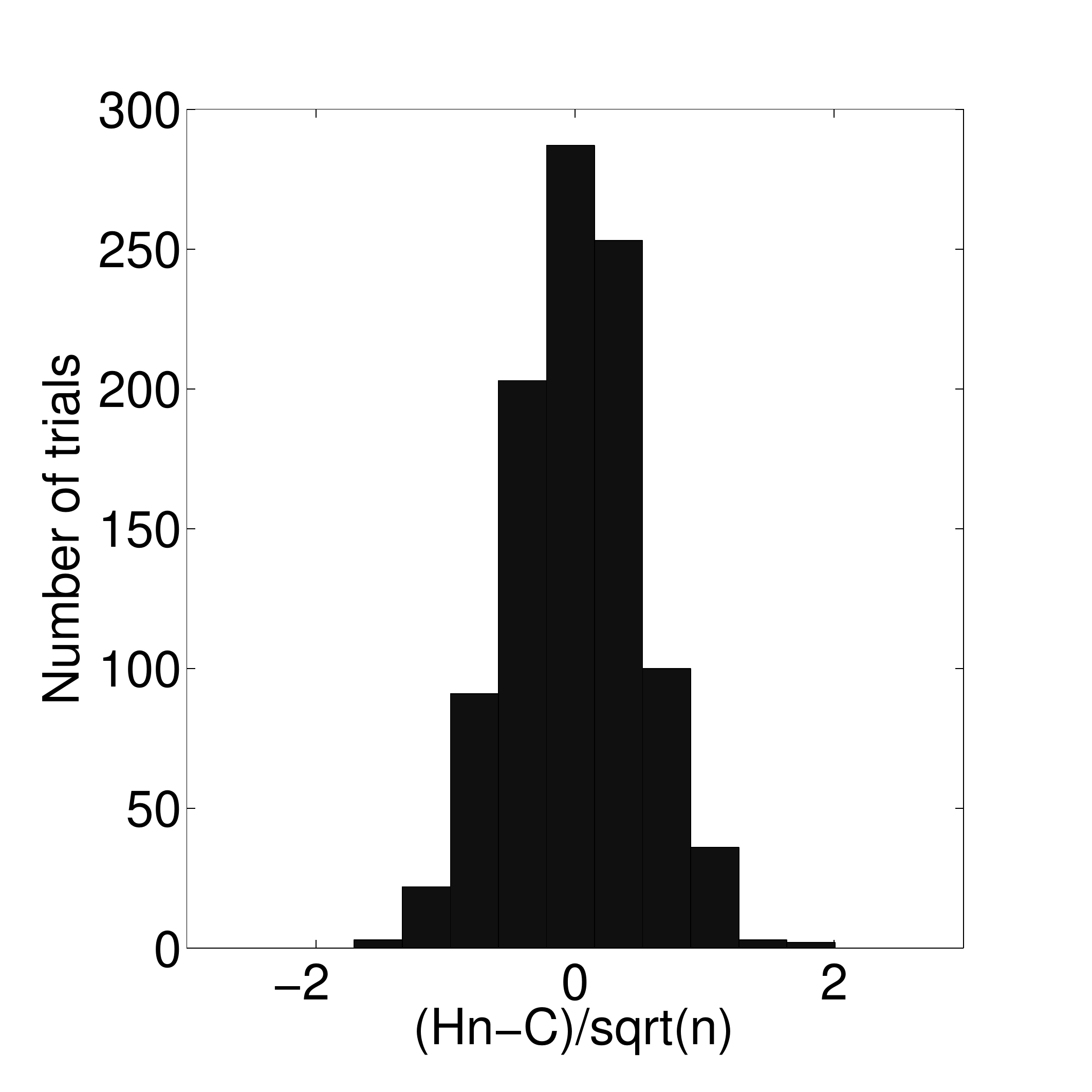}
\endminipage\hfill
\end{figure}

\begin{figure}[H]
\minipage{0.33\textwidth}
  \includegraphics[width=\linewidth]{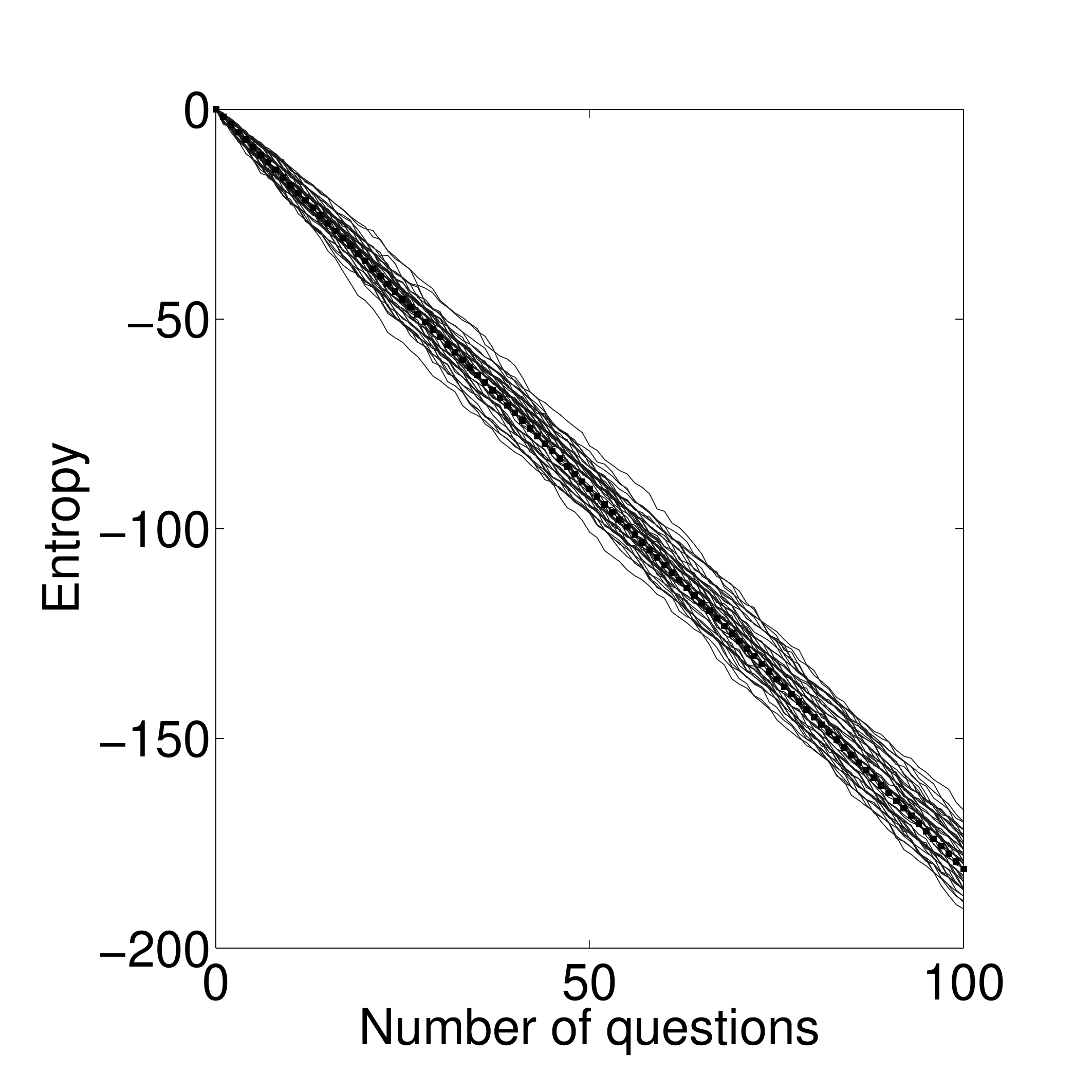}
\endminipage\hfill
\minipage{0.33\textwidth}
  \includegraphics[width=\linewidth]{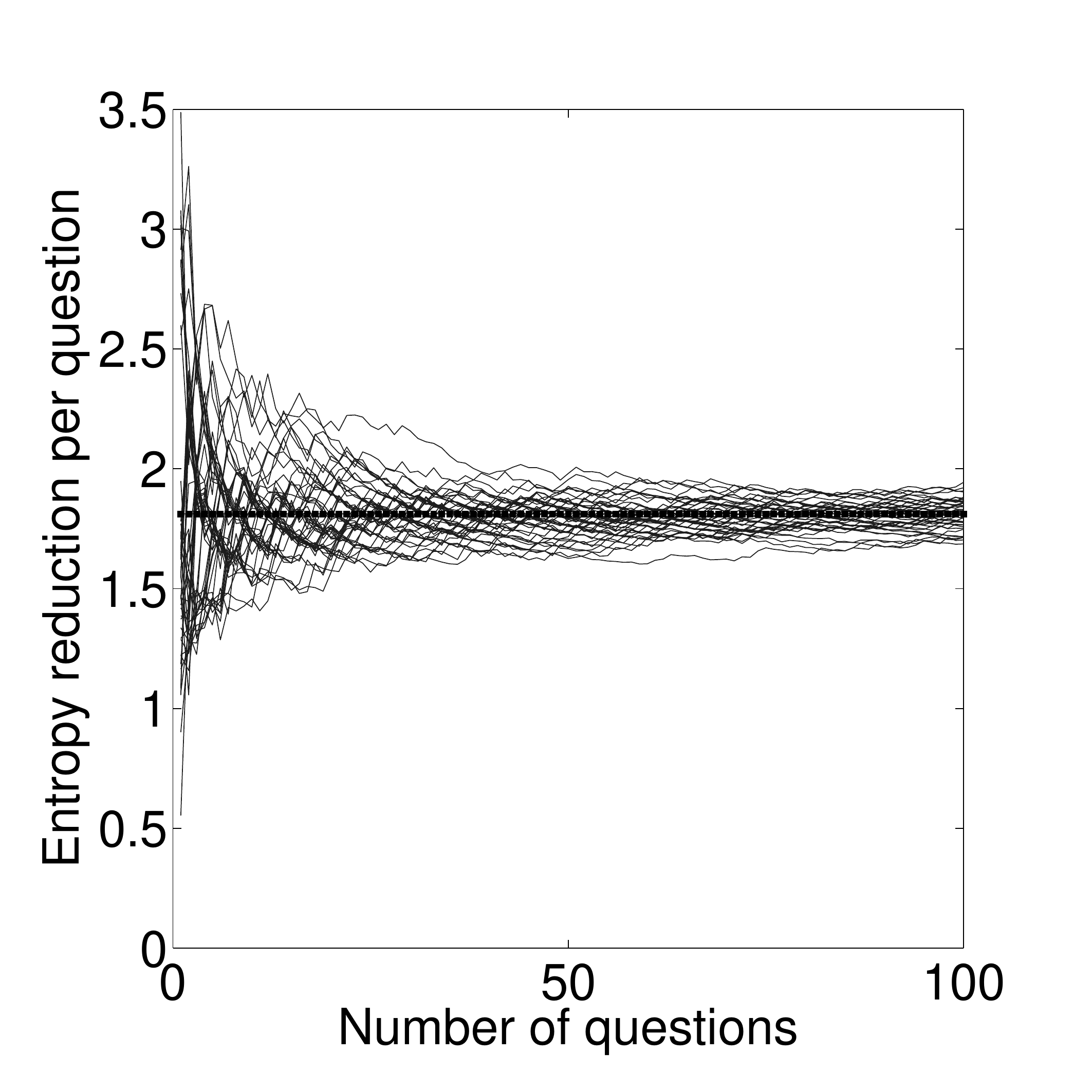}
\endminipage\hfill
\minipage{0.33\textwidth}
  \includegraphics[width=\linewidth]{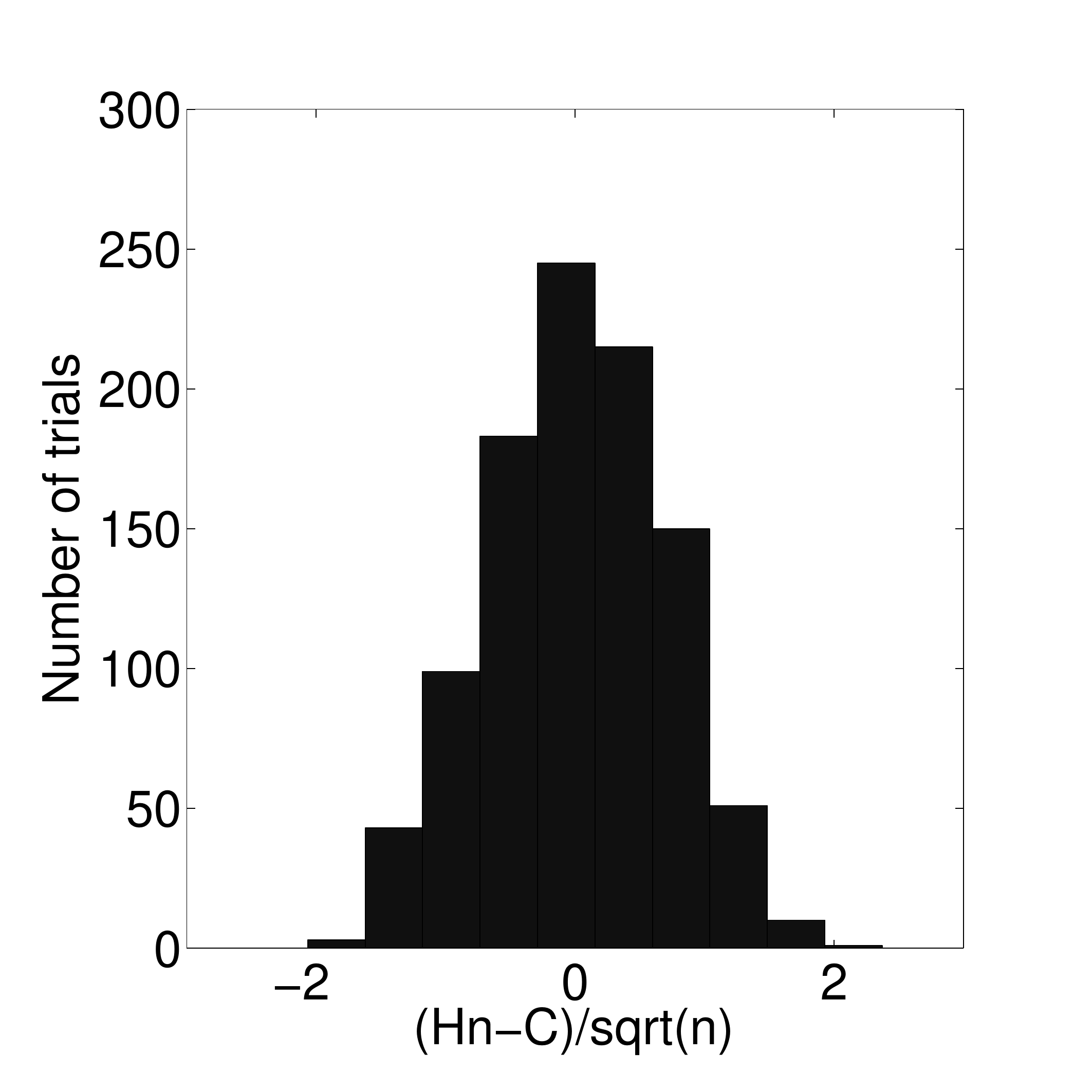}
\endminipage
\caption{Simulation results for localizing one, two, three objects under the dyadic policy. $N=100$ and $f_0$ is uniform over $(0,1]$. The horizontal graphs above show the actual trajectories of entropy $H(p_n)$, average reduction in entropy $-\frac{H(p_n)}{n}$, and normality of  $\frac{H(p_N)+NH\left(\Bin\left(k,\frac{1}{2}\right)\right)}{\sqrt{N}}$, respectively.}
\label{fig:dyadic3objects}
\end{figure}

\section{The Greedy Policy for Localizing Multiple Objects}
\label{sec:greedy}

In this section, we will present the second policy of interest--the \emph{greedy policy}. The greedy policy is a family of policies (not unique) which pursue a maximal one-step expected reduction in entropy. Despite having a better performance than the dyadic policy, the greedy policy is difficult for us to parametrize and implement. A description of the greedy policy will be given in Section \ref{sec:greedy_description} and a lower bound of its rate is shown in Section \ref{sec:greedy_value}, which verifies our claim of the third inequality in the main results \eqref{eq:main}. Furthermore, we will provide an example in which the greedy policy outperforms the dyadic policy in Section \ref{sec:comparison} and thus this inequality cannot be reversed.

\subsection{Description of the greedy policy}
\label{sec:greedy_description}
Unlike the dyadic policy, the greedy policy is adaptive, that is, the actual policy depends on the previous answers that we already observed, and at each step the question set $A_n\subset \mathbb R$ is defined in \eqref{eq:greedy} to maximize the one-step expected reduction in entropy.

We prove that this argmin exists below in Theorem~\ref{thm:greedy}.
The computation of the greedy policy might be complicated in some cases, however, the greedy policy is strictly better than the dyadic policy and we will demonstrate this point in Section \ref{sec:comparison}.

\subsection{The rate of the greedy policy}
\label{sec:greedy_value}
Although deriving the rate of the greedy policy seems impossible, we are able to employ Lemma \ref{lem:Eentropy_onestep} in the appendix to derive a lower bound of it as the following.
\begin{thm}
\label{thm:greedy}
The argmin \eqref{eq:greedy} defining the class of greedy policies exists.
Under any greedy policy $\pi_G$,
\begin{equation}
R(\pi_G,N) \ge H\left(\Bin\left(k,\frac{1}{2}\right)\right).
\end{equation}
\end{thm}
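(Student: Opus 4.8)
The plan is to combine the one-step entropy decomposition of Lemma~\ref{lem:Eentropy_onestep} with the defining property of a greedy policy, and then to exhibit a single question that already attains the dyadic value $H(\Bin(k,\frac12))$ at every history. First I would record the consequence of Lemma~\ref{lem:Eentropy_onestep} that makes the greedy objective transparent. Since $X_n$ is a deterministic function of $\theta$ and $A_n$, we have $H(X_n\mid\theta,B_{n-1})=0$, so the chain rule gives, for each fixed history $b_{n-1}$,
\begin{equation}
E[H(p_n)\mid B_{n-1}=b_{n-1},A_n=A]=H(p_{n-1})-H(X_n\mid B_{n-1}=b_{n-1},A_n=A).
\end{equation}
Because $H(p_{n-1})$ is fixed by $b_{n-1}$ and does not depend on $A$, minimizing the expected posterior entropy in \eqref{eq:greedy} is equivalent to maximizing the conditional entropy $H(X_n\mid B_{n-1}=b_{n-1},A_n=A)$ of the next answer. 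Summing the displayed identity over $n$, exactly as in the dyadic proof, yields $R(\pi_G,N)=\frac1N\sum_{n=0}^{N-1}H^{\pi_G}(X_{n+1}\mid B_n)$, so it suffices to show each term is at least $H(\Bin(k,\frac12))$.

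For the existence of the argmin, I would reduce the search over Borel sets $A$ to a finite-dimensional compact optimization. By Theorem~\ref{thm:postY}, given $b_{n-1}$ the law of $X_n$ depends on $A$ only through the finitely many fractions $r_s=f_0(A\cap C_s)/f_0(C_s)$, one for each of the $m$ nonempty cells $C_s$ of the current partition. Since $f_0$ is non-atomic, any target vector $(r_s)\in[0,1]^m$ is realized by some Borel set $A$, so the map $A\mapsto(r_s)$ is onto $[0,1]^m$. The probability mass function of $X_n$ is a polynomial in the $r_s$, hence its entropy is continuous on the compact cube $[0,1]^m$ and attains a maximum; lifting a maximizer back to a set $A$ produces the desired argmin.

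For the lower bound, I would exhibit one admissible question attaining $H(\Bin(k,\frac12))$. Within each cell $C_s$ bisect by $f_0$-measure into two halves of equal measure (again possible since $f_0$ is non-atomic), and let $A$ be the union of one half from each cell, i.e. take $r_s=\frac12$ for all $s$. Conditioned on any joint codeword $\mathcal S\in E_{n-1}$, the objects are independent with $\theta_i$ distributed proportionally to $f_0$ on $C_{s^{(i)}}$, so the $\mathbbm 1_A(\theta_i)$ are independent $\mathrm{Bernoulli}(\frac12)$ and $X_n\mid\mathcal S\sim\Bin(k,\frac12)$; as this law is the same for every $\mathcal S$, the mixture in \eqref{eq:postY} is again $\Bin(k,\frac12)$. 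Hence $H(X_n\mid B_{n-1}=b_{n-1},A_n=A)=H(\Bin(k,\frac12))$, and because the greedy maximizes this quantity, $H^{\pi_G}(X_n\mid B_{n-1}=b_{n-1})\ge H(\Bin(k,\frac12))$ for every history $b_{n-1}$. The same choice handles $n=0$ by taking $f_0(A_1)=\frac12$. Taking expectations over $B_{n-1}$ and summing gives $R(\pi_G,N)\ge H(\Bin(k,\frac12))$.

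I expect the main obstacle to be the existence of the argmin rather than the bound itself: one must argue carefully that the supremum over all Borel sets is attained, which is precisely where the non-atomicity of $f_0$ (to get surjectivity onto $[0,1]^m$) together with the continuity-plus-compactness argument are essential. The lower bound is comparatively routine once the bisecting question is in hand, since it only requires evaluating \eqref{eq:postY} at $r_s=\frac12$ and invoking the greedy maximization.
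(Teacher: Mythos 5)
Your proposal is correct and takes essentially the same approach as the paper: the same reduction via Lemma~\ref{lem:Eentropy_onestep} to maximizing the conditional entropy of the next answer, the same compactness-and-continuity argument (with surjectivity onto the cube of per-cell fractions coming from non-atomicity of $f_0$) for existence of the argmin, and the same summation over questions to obtain the rate bound. The only difference is minor: where the paper first invokes concavity of entropy and then sets all fractions to $\frac12$, you evaluate the mixture directly at $r_s\equiv\frac12$ and observe that every Poisson Binomial component, hence the mixture itself, is exactly $\Bin\left(k,\frac12\right)$ --- a slightly more direct rendering of the same step.
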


\begin{proof}
Fix some history $B_n=(Z,X_{1:n})=b_n$.
We first show existence of the argmin from \eqref{eq:greedy}, restated here as
\begin{equation}
  \argmin_A E[H(p_{n+1})|p_n,A_{n+1}=A],
  \label{eq:proof-thm-greedy}
\end{equation}
where we recall that the minimum is taken over all Borel-measurable subsets of $\mathbb R$.

Since conditioning on the posterior distribution $p_n$ under any fixed policy is equivalent to conditioning on $\{B_n=(Z, A_{1:n},X_{1:n})=b_n\}$, using \eqref{eq:Eentropy_onestep} in Lemma \ref{lem:Eentropy_onestep} in the appendix, we have
\begin{equation}
\begin{split}
\label{eq:onestep_fixY}
E[H(p_{n+1})| p_n, A_{n+1}=A] &= E[H(p_{n+1})| B_n=b_n, A_{n+1}=A] \\
&=H(p_n|B_n=b_n,A_{n+1}=A)-H(X_{n+1}|B_n=b_n,A_{n+1}=A).
\end{split}
\end{equation}

Since the first term $H(p_n|B_n=b_n,A_{n+1}=A)$ does not depend on $\{A_{n+1}=A\}$, \eqref{eq:proof-thm-greedy} can be rewritten as
\begin{equation}
\label{eq:greedy_maximizer}
\begin{split}
&\argmin\limits_A H(p_n|B_n=b_n,A_{n+1}=A)-H(X_{n+1}|B_n=b_n,A_{n+1}=A)\\
&= \argmax\limits_A H(X_{n+1}|B_n=b_n,A_{n+1}=A).
\end{split}
\end{equation}

When $n=0$, according to Theorem \ref{thm:postY}, we can rewrite the above argmax as
\begin{equation}
\argmax\limits_A H\left(\Bin\left(k,f_0(A)\right)\right).
\end{equation}
The maximum is achieved by any questions set $A$ such that $f_0(A)=\frac{1}{2}$. For example, the first dyadic question $\left(Q\left(\frac{1}{2}\right),Q(1)\right]\cap \Support{f_0}$ is one of such sets. This also proves $H^{\pi_G}(X_1|B_0)=H\left(\Bin\left(k,\frac{1}{2}\right)\right)$.

When $n\ge 1$, using \eqref{eq:postY} in Theorem \ref{thm:postY}, we can rewrite the argmax in \eqref{eq:greedy_maximizer} as
\begin{equation}
\label{eq:greedy_maximizer2}
\argmax\limits_{A} H\left(\sum_{\mathcal S \in E_n}\alpha(\mathcal S)\PB\left(\frac{f_0(A \cap C_{s^{(1)}})}{f_0(C_{s^{(1)}})},\dots, \frac{f_0(A \cap C_{s^{(k)}})}{f_0(C_{s^{(k)}})}\right)\right),
\end{equation}
where $\alpha(\mathcal S)=\frac{f_0(C_{s^{(1)}})\dots f_0(C_{s^{(k)}})}{\sum\limits_{\mathcal T \in E_n}f_0(C_{t^{(1)}})\dots f_0(C_{t^{(k)}})}$ and  $\sum\limits_{\mathcal S \in E_n}\alpha(\mathcal S)=1$.

\newcommand{\sset}{\mathbb{S}}
Let $\sset = \{s\in\{0,1\}^n : C_s \ne \emptyset\}$, and fix some arbitrary order of these elements so that $\sset$ becomes a sequence rather than a set.
For each $s\in\sset$, let $r_s(A) = f_0(A\cap C_s) / f_0(C_s)$ so that \eqref{eq:greedy_maximizer2} can be rewritten as
\begin{equation}
\argmax\limits_{A} H\left(\sum_{\mathcal S \in E_n}\alpha(\mathcal S)\PB\left(r_{s^{(1)}}(A),\ldots,r_{s^{(k)}}(A)\right)\right).
\end{equation}

For each Borel-measurable subset $A$ of $\mathbb R$, $r(A) =(r_s(A) : s\in\sset)$ is an element of $[0,1]^{|\sset|}$.  Moreover, for each $r \in [0,1]^{|\sset|}$, there is a Borel-measurable $A\subset\mathbb R$ such that $r(A)=r$.  This is because the continuity of the prior cumulative density function allows us to construct the desired subset $A$ as a union of sets, one for each element of $\sset$. In this construction, the subset of $A$ corresponding to $s\in\sset$ is a subset of $C_s$ containing a fraction $r_s$ of the prior mass of $C_s$.  This shows that the argmax \eqref{eq:greedy_maximizer2} exists iff the following argmax exists:
\begin{equation}
  \argmax\limits_{r\in[0,1]^{|\sset|}} H\left(\sum_{\mathcal S \in E_n}\alpha(\mathcal S)\PB\left(r_{s^{(1)}},\ldots,r_{s^{(k)}}\right)\right).
\end{equation}

The function $r \mapsto H\left(\sum_{\mathcal S \in E_n}\alpha(\mathcal S)\PB\left(r_{s^{(1)}},\ldots,r_{s^{(k)}}\right)\right)$ is continuous, and the set $[0,1]^{|\sset|}$ is compact, so this argmax is attained.  This shows that the argmax \eqref{eq:greedy} defining the class of greedy policies is well-defined.

We now show a lower bound on the rate of any greedy policy $\pi_G$ by showing a lower bound on
this quantity.  The argument above also shows that under any greedy policy $\pi_G$, for $n\ge 1$,
\begin{subequations}
\begin{align}
H^{\pi_G}(X_{n+1}|B_n=b_n)
&= \max_A H(X_{n+1}|B_n=b_n, A_{n+1}=A) \\
&= \max\limits_{r\in[0,1]^{|\sset|}} H\left(\sum_{\mathcal S \in E_n}\alpha(\mathcal S)\PB\left(r_{s^{(1)}},\ldots,r_{s^{(k)}}\right)\right)\\
&\geq \max\limits_{r\in[0,1]^{|\sset|}} \sum_{\mathcal S \in E_n} \alpha(\mathcal S)\; H\left(\PB\left(r_{s^{(1)}},\ldots,r_{s^{(k)}}\right)\right)\label{eq:Kgreedy1}\\
&\geq \sum\limits_{\mathcal S \in E_n}\alpha(\mathcal S)\; H\left(\PB\left(\frac{1}{2},\dots,\frac{1}{2}\right)\right)\label{eq:Kgreedy2}\\
&= H\left(\Bin\left(k,\frac{1}{2}\right)\right). \label{eq:Kgreedy3}
\end{align}
\end{subequations}
Above, we use the concavity of the entropy function to obtain the inequality \eqref{eq:Kgreedy1}, and that $\PB(\frac{1}{2},\ldots,\frac{1}{2})$ a special case of a Poisson Binomial Distribution to obtain \eqref{eq:Kgreedy2}.  The last line, \eqref{eq:Kgreedy3}, follows from $\sum_{\mathcal S \in E_n} \alpha(\mathcal S) = 1$ and the fact that $\PB(\frac12,\ldots,\frac12)$ is the $\Bin\left(k,\frac12\right)$ distribution.

Furthermore, taking the expectation over all possible realizations of $B_n$, we obtain for $n\ge 1$,
\begin{equation}
H^{\pi_G}(X_{n+1}|B_n) \ge H\left(\Bin\left(k,\frac{1}{2}\right)\right).
\end{equation}
Recall that we already have $H^{\pi_G}(X_1|B_0)=H\left(\Bin\left(k,\frac{1}{2}\right)\right)$ from previous arguments.

Finally, \eqref{eq:Eentropy_all} in Lemma \ref{lem:Eentropy_onestep} in the appendix shows
\begin{equation}
R(\pi_G,N) = \frac{H_0-E^{\pi_G}[H(p_N)]}{N}
= \frac{\sum_{j=0}^{N-1}H^{\pi_G}(X_{n+1}|B_n)}{N}
\ge H\left(\Bin\left(k,\frac{1}{2}\right)\right).
\end{equation}
\end{proof}
Theorem \ref{thm:greedy} above implies that the greedy policy never underperforms the dyadic policy.

\subsection{A setting in which the greedy policy is strictly better than the dyadic policy}
\label{sec:comparison}
From the proof above, we can see that the greedy policy is strictly better than the dyadic policy if there exists some $b_n$ such that the inequality \eqref{eq:Kgreedy1} is strict. In the following, based on the previous examples in Section \ref{sec:post_object}, we develop an example showing that the greedy policy is \emph{strictly} better than the dyadic policy.

\paragraph{Example 3:} Suppose $\theta_1,\theta_2$ are two objects located in (0,1] with the prior $f_0$ being uniform over $(0,1]$, and $A_1$ and $A_2$ the first two questions of the dyadic policy, $A_1 = \left(\frac{1}{2},1\right]$ and $A_2 = \left(\frac{1}{4},\frac{1}{2}\right] \cup \left(\frac{3}{4},1\right]$. Now consider the following family of questions $A_3$ indexed by $0 \leq \alpha,\beta \leq 1$:
\begin{equation}
A_3=\left(\frac{1-\alpha}{4},\frac{1}{4}\right]\cup \left(\frac{2-\beta}{4},\frac{1}{2}\right] \cup \left(\frac{3-\beta}{4},\frac{3}{4}\right]\cup \left(\frac{4-\alpha}{4},1\right].
\end{equation}
Firstly, assume $X_1=0$ and $X_2=2$, which corresponds {\bf Example 1} in Figure \ref{fig:eg_matrix}. According to \eqref{eq:postY}, the probability mass function of $X_3$ is
\begin{equation}
P(X_3=x)=f_{\PB}(x;q_1=\beta,q_2=\beta),
\end{equation}
which is a Binomial distribution with parameter $\beta$. The maximum entropy is then achieved when $\beta=0.5$. Note that the dyadic question, corresponding to $\alpha=\beta=0.5$, verifies this condition and as a consequence is also a valid question for the greedy policy.

Now, more interestingly, assume that $X_1=X_2=1$, which corresponds {\bf Example 2} in Figure \ref{fig:eg_matrix}. According to \eqref{eq:postY}, the probability mass function of $X_3$ is
\begin{eqnarray}
p_2(X_3=x) & = & \frac{1}{4}f_{\PB}(x;q_1=\alpha,q_2=\alpha) +  \frac{1}{4}f_{\PB}(x;q_1=\beta,q_2=\beta) \nonumber\\
& + &  \frac{1}{4}f_{\PB}(x;q_1=\beta,q_2=\beta) +  \frac{1}{4}f_{\PB}(x;q_1=\alpha,q_2=\alpha),
\end{eqnarray}
which simplifies to
\begin{center}
\begin{tabular}{|l|l|}
\hline
$x$ & $p_2(X_3=x)$ \\
\hline
0 & $\frac{1}{2}(1-\alpha)^2 + \frac{1}{2}(1-\beta)^2$\\
\hline
1 & $\alpha(1-\alpha)+\beta(1-\beta)$\\
\hline
2 & $\frac{1}{2}\alpha^2 + \frac{1}{2}\beta^2$\\
\hline
\end{tabular}
\end{center}
Now, one can choose values for $\alpha$ and $\beta$ such that $p_2(X_3=x)=\frac{1}{3}$, $x=0,1,2$. Specifically,
\begin{equation}
\alpha = \frac{1+\frac{\sqrt{3}}{3}}{2} \mbox{ and } \beta=\frac{1-\frac{\sqrt{3}}{3}}{2}.
\end{equation}
In this case $H(p_2(X_3=\cdot))=\log(3)> 1.5$ which shows that the greedy policy is in this case strictly better than the dyadic policy.

\section{Simulation}
\label{sec:dyadic_numerical}

We now present a toy Computer Vision object localization experiment. We show how the dyadic policy, analyzed above in the continuous setting using the entropy, can be used in a simulation setting to 
locate $k$ instances of a given object within a $M \times M$ digital image. 

The dyadic policy is unique in the fact that it has a simple closed form expression for the posterior probability that the object instance is located at a pixel location $C$ (See Lemma \ref{lem:posterior ranking}). We use the term ``screening questions'' to denote the instance count queries on the subset $A$. The ``oracle'' refers to the expensive but highly accurate classifier that will be run on a selected subset of the pixel locations. Note that instead of computing the entropy of the posterior distribution, we use the number of calls to the oracle as the measure of performance.

In this setting we use a $M \times M$ image in which the objects to be located are of size one pixel and have the intensity value $1$ and the rest of the pixels are of intensity $0$, with $k = \{2,3,10\}$ and $M = \{8, \ldots, 1024\}$. This simulation setting is far from a realistic Computer Vision application as the number of instances in each dyadic query set $A$ will not be readily available and we would need to train an appropriate Machine Learning classifier that computes this value, albeit with noise (~\cite{lempitsky2010learning}, ~\cite{idrees2013multi}, ~\cite{barinova2012detection}). Moreover, in realistic applications, the number of objects $k$ will not be known in advance and the answers to the screening questions could be corrupted with noise. Nevertheless, this simulation experiment provides useful analysis of the performance of the dyadic policy and the algorithms that locate object instances using the posterior distribution computed from the answers to the screening questions using Lemma \ref{lem:posterior ranking}.

%

We consider algorithms that proceed in 2 phases, eventually iterated.  The first phase consists in querying the dyadic sets. As opposed to the continuous domain, there is here a limited supply of dyadic sets. Choosing for $M$ a power of 2, there are $\log M$ dyadic horizontal queries and $\log M$ dyadic vertical queries. The two rows of Figure \ref{Dyadic} present the dyadic questions for $M=16$. The second phase consists in ordering the pixels and querying the oracle according to this ordering. We compare three algorithms: Posterior Rank (PR), Iterated Posterior Rank (IPR), and Entropy Pursuit (EP). We will see that all these three algorithms significantly outperform the baseline algorithm--the Index Rank (IR) algorithm--in terms of the expected number of calls to the oracle (see Figure \ref{Results_IR_PR_IPR_EP}).

The algorithms are motivated by the computation of $E[N(C)|B_N]$, the expected number of instances within pixel $C$ given the history of screening questions and answers $B_N$. The following lemma provides an explicit formula for this posterior probability. Using this result, we can order the pixels in decreasing order of $E[N(C)|B_N]$ and run the oracle according to this order until all the instances of the object are found.

\begin{lem}
\label{lem:posterior ranking}
Under the dyadic policy, for each object $\theta_i$ and each binary sequence $s\in\{0,1\}^N$, the posterior likelihood $P\left(\theta_i \in C_s|B_N\right)$ satisfies
\begin{equation}
\label{eq:posterior ranking}
P\left(\theta_i \in C_s|B_N\right) = \prod\limits_{j=1}^N \left(\frac{X_j}{k}\right)^{s_j} \left(1-\frac{X_j}{k}\right)^{1-s_j},
\end{equation}
Moreover, 
\begin{equation}
E[N(C_s)|B_N]=kP\left(\theta_1 \in C_s|B_N\right),
\end{equation}
where $N(C_s)$ denotes the number of objects located in the set $C_s$.
\end{lem}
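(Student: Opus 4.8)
The plan is to exploit the explicit characterization of the posterior distribution already established, specializing the general formula to the dyadic policy where each question exactly bisects the prior mass of every cell. First, I would invoke Lemma~\ref{lem:productk}, which tells us that under any policy the posterior is supported on $\bigcup_{\mathcal S \in E_N} C_{\mathcal S}$ with density proportional to the prior. For the target probability $P(\theta_i \in C_s \mid B_N)$, I would sum the posterior mass over all joint codewords $\mathcal S \in E_N$ whose $i$th column equals the given sequence $s$, i.e.
\begin{equation}
P(\theta_i \in C_s \mid B_N) = \frac{\sum_{\mathcal S \in E_N : s^{(i)} = s} \prod_{m=1}^k f_0(C_{s^{(m)}})}{\sum_{\mathcal T \in E_N} \prod_{m=1}^k f_0(C_{t^{(m)}})}.
\end{equation}
The key simplification, as in the proof of Theorem~\ref{thm:dyadic}, is that under the dyadic policy every nonempty cell $C_s$ with $s \in \{0,1\}^N$ has identical prior mass $f_0(C_s) = 2^{-N}$, so each product $\prod_{m=1}^k f_0(C_{s^{(m)}})$ collapses to the constant $2^{-Nk}$ and cancels between numerator and denominator.

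After this cancellation, the ratio reduces to a pure counting problem: $P(\theta_i \in C_s \mid B_N)$ equals the number of matrices $\mathcal S \in E_N$ with $i$th column $s$, divided by $|E_N|$. Because the constraint defining $E_N$ in \eqref{eq:collection} is separable across rows — row $j$ must sum to $X_j$ — and because under the dyadic policy every binary sequence corresponds to a nonempty cell, the count factorizes row by row. Fixing column $i$ to be $s$ contributes $s_j$ to the row-$j$ sum in that column, so the remaining $k-1$ entries in row $j$ must sum to $X_j - s_j$, giving $\binom{k-1}{X_j - s_j}$ choices; the total count $|E_N|$ similarly factorizes as $\prod_{j=1}^N \binom{k}{X_j}$. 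Taking the ratio and using the identity $\binom{k-1}{X_j - s_j} / \binom{k}{X_j}$, which equals $X_j/k$ when $s_j = 1$ and $(1 - X_j/k)$ when $s_j = 0$, yields exactly the claimed product form \eqref{eq:posterior ranking}. I would state this per-row identity cleanly as the combinatorial heart of the argument.

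For the second assertion, I would write $N(C_s) = \sum_{i=1}^k \mathbbm 1\{\theta_i \in C_s\}$ and take conditional expectation, so that $E[N(C_s) \mid B_N] = \sum_{i=1}^k P(\theta_i \in C_s \mid B_N)$. Since $\theta_1,\dots,\theta_k$ are exchangeable under the symmetric dyadic construction, the first-part formula is independent of the index $i$, hence each term equals $P(\theta_1 \in C_s \mid B_N)$ and the sum is $k\,P(\theta_1 \in C_s \mid B_N)$, as claimed. The main obstacle I anticipate is justifying the factorization of the codeword count across rows rigorously: I must confirm that under the dyadic policy all $2^N$ cells $C_s$ are genuinely nonempty (so that the nonemptiness constraints in \eqref{eq:collection} impose no restriction and the row-wise choices are truly independent), which follows from the strict monotonicity of the quantile function $Q$ on the support of $f_0$ and the fact that consecutive dyadic quantiles are distinct. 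Once that independence is secured, the remaining manipulations are the routine binomial-ratio simplification.
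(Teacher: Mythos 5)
Your proposal is correct and follows essentially the same route as the paper's proof: specialize Lemma~\ref{lem:productk} using the fact that every dyadic cell has prior mass $2^{-N}$, reduce the posterior probability to the counting ratio $|E_N(C_s)|/|E_N|$ with the row-wise factorizations $\prod_j \binom{k-1}{X_j-s_j}$ and $\prod_j \binom{k}{X_j}$, and finish with the binomial-ratio identity. In fact you go slightly further than the paper, which leaves the second assertion implicit, whereas you justify it explicitly via linearity of expectation and the symmetry of the formula in the index $i$.
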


\begin{proof}
First of all, note that under the dyadic policy, $P\left(\theta_i \in C_s|B_N\right)=P\left(\theta_i \in C_s|X_{1:N}\right)$. This is because the questions $A_{1:N}$ are deterministic by construction, and $Z$ is independent of $\sigma(\theta,X_{1:N})$, so that $Z,A_{1:N}$ can be removed from the condition. Let $x_{1:N}\in\{0,\dots,k\}^N$ be a fixed realization of $X_{1:N}$. Now we consider the event $\{\theta_i \in C_s| X_{1:N}=x_{1:N}\}$, for a fixed binary sequence $s\in\{0,1\}^N$.

Let us denote by $E_N(C_s)$ the collection of matrices $\mathcal S \in E_N$ that are consistent with the event $\{\theta_i \in C_s|X_{1:N}=x_{1:N}\}$, i.e. the $i$-th column of $\mathcal S$ is $s^{(i)}=s$. Note that $p_0(C_{\mathcal S})=2^{-Nk}$ for all $\mathcal S\in E_N$ under the dyadic policy. For simplicity, define $D_{C_s}=\bigcup_{\mathcal S \in E_N(C_s)} C_{\mathcal S}$. Therefore, using Lemma \ref{lem:productk}, we can compute the probability of $P(\theta_1 \in C_s|X_{1:N}=x_{1:N})$ as
\begin{equation}
\label{eq:probX}
\begin{split}
&P(\theta_1 \in C_s|X_{1:N}=x_{1:N})\\
&=\int_{u_{1:k}\in D_{C_s}} p_N(u_{1:k})\,du_{1:k}\\
&=\sum_{\mathcal S\in E_N(C_s)}\int_{u_{1:k}\in C_{\mathcal S}} \frac{p_0(u_{1:k})}{\sum\limits_{\mathcal S\in E_N} p_0(C_{\mathcal S})}\,du_{1:k}\\
&= \sum_{\mathcal S\in E_N(C_s)}\frac{1}{2^{-Nk}|E_N|}\int_{u_{1:k}\in C_{\mathcal S}} p_0(u_{1:k})\,du_{1:k}\\
&=\frac{|E_N(C_s)|}{|E_N|},
\end{split}
\end{equation}
where $|E_N(C_s)|,|E_N|$ denote the cardinalities of $E_N(C_s),E_N$, respectively. $|E_N|$ can be computed using \eqref{eq:En} in the appendix. 
The matrices in $E_N(C_s)$ need to satisfy one more constraint--the $i$-th column is fixed to be $s$. Using similar arguments, we have
\begin{equation}
\label{eq:numC}
|E_N(C_s)|= \prod\limits_{n=1}^N {k-1 \choose x_n-s_n}1_{\{0 \leq x_n-s_n \leq k-1\}}.
\end{equation}

Combining \eqref{eq:probX}, \eqref{eq:En} and \eqref{eq:numC} together yields,
\begin{equation}
\begin{array}{ccl}
P(\theta_i \in C_s|X_{1:N}=x_{1:N}) &=& \frac{\prod\limits_{n=1}^N {k-1 \choose x_n-s_n}1_{\{0 \leq x_n-s_n \leq k-1\}}}{\prod\limits_{n=1}^N {k \choose x_n}}\\
&=&\prod\limits_{n=1}^N\begin{cases}
	\frac{x_n}{k}, &\text{if $s_n=1$}\\
	1-\frac{x_n}{k}, &\text{if $s_n=0$}\end{cases}.
\end{array}
\end{equation}

Equivalently,
\begin{equation}
\label{eq:single}
P\left(\theta_i \in C_s|X_{1:N}=x_{1:N}\right) = \prod\limits_{n=1}^N \left(\frac{x_n}{k}\right)^{s_n} \left(1-\frac{x_n}{k}\right)^{1-s_n}.
\end{equation}
\end{proof}

Based on the lemma above, we obtain the following PR Algorithm. 

\begin{algorithm}[H]
\caption{Posterior Rank (PR) Algorithm}
\label{algo:post_rank}
\begin{algorithmic}[1]
\STATE Compute the answers to the screening questions.
\STATE Compute the posterior rank $r$ according to \eqref{eq:posterior ranking}.
\STATE Run the oracle on the pixels according to $r$ until all the objects are found.
\end{algorithmic}
\end{algorithm}

The IPR algorithm is a variation of the PR Algorithm. As before, the pixels are searched in decreasing order of the expected number of objects. When the oracle locates one(several) object(s) at a pixel, the answers of the screening questions for the remaining objects are recomputed based on the location of the objects already found. This is equivalent to {\em masking} the objects already found and asking the screening questions again. The expected number of objects per pixel is then recomputed and provides an updated ranking for the remainder of the search. The algorithm is provided below.

\vspace{-0.2cm}
\begin{algorithm}[H]
\small
\caption{Iterated Posterior Rank (IPR) Algorithm}
\label{algo:iter_post_rank}
\begin{algorithmic}[1]
\REPEAT
\STATE Compute the answers to the screening questions.
\STATE Compute the posterior rank $r$ according to \eqref{eq:posterior ranking}.
\STATE
Run the oracle on the pixels according to $r$ until one (several) object(s) is (are) found at a pixel.
\STATE Mask this (these) object(s).
\UNTIL {all the objects are found.}
\end{algorithmic}
\end{algorithm}
\vspace{-0.2cm}


Figure \ref{Dyadic} and \ref{Object_Localization} illustrate the procedures in the IPR algorithm for a $16\times16$ image with $k=4$ objects. Figure \ref{Dyadic} illustrates the screening questions under the dyadic policy, with light regions marking the questions sets. The first line of \ref{Object_Localization} shows the true but unknown locations of the objects in each iteration of the IPR algorithm. The second line shows the expected number of objects within each pixel computed after screening questions in each iteration, respectively, with lighter regions having a higher expected number of objects.

\newcommand{\imgWidthTiny}{0.11\textwidth}
\begin{figure}[h]
\vspace{-1.7 mm}
\begin{center}$
\begin{array}{@{\hspace{-0.5em}}c@{\hspace{.5em}}c@{\hspace{.5em}}c@{\hspace{.5em}}c}
\includegraphics[width=\imgWidthTiny]{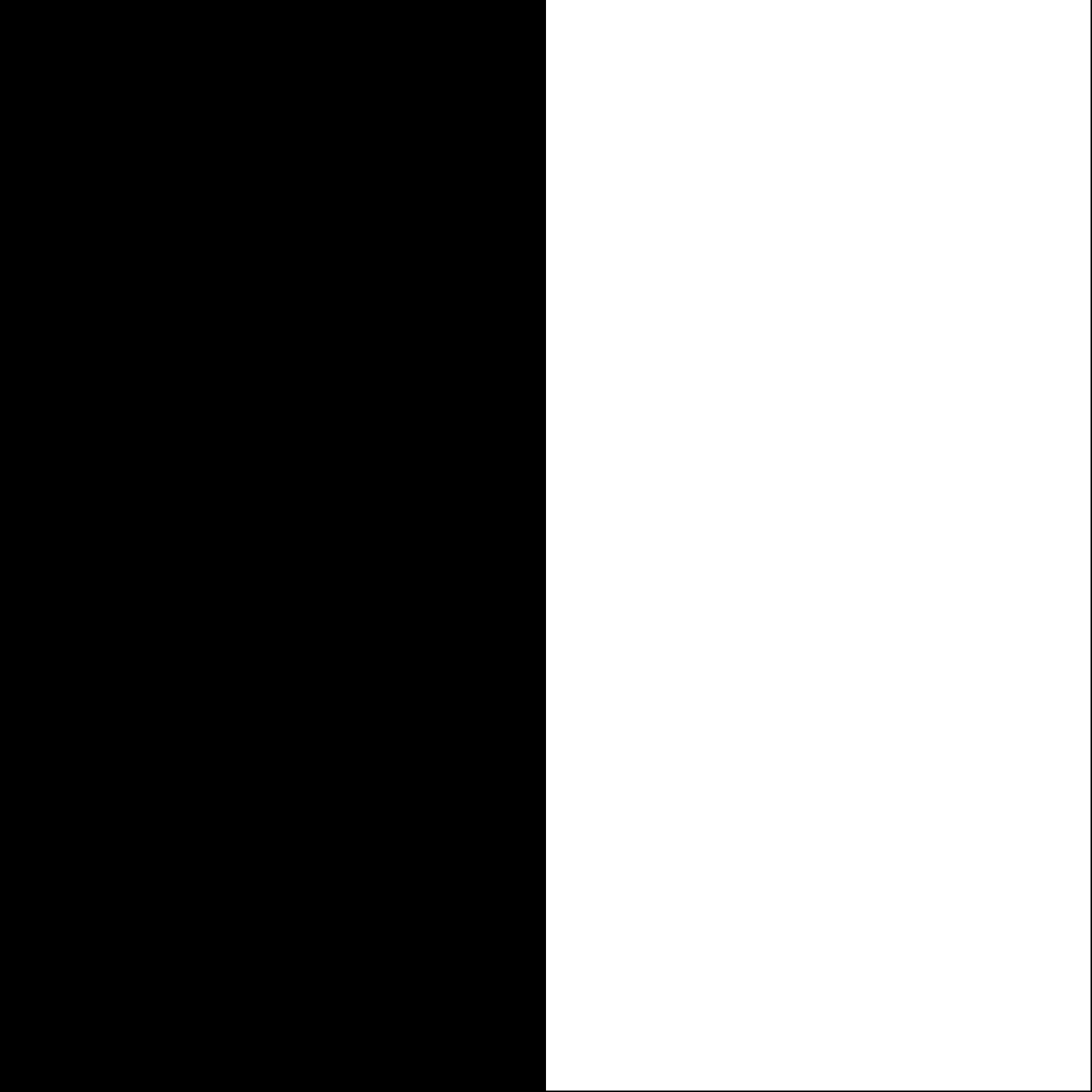} &
\includegraphics[width=\imgWidthTiny]{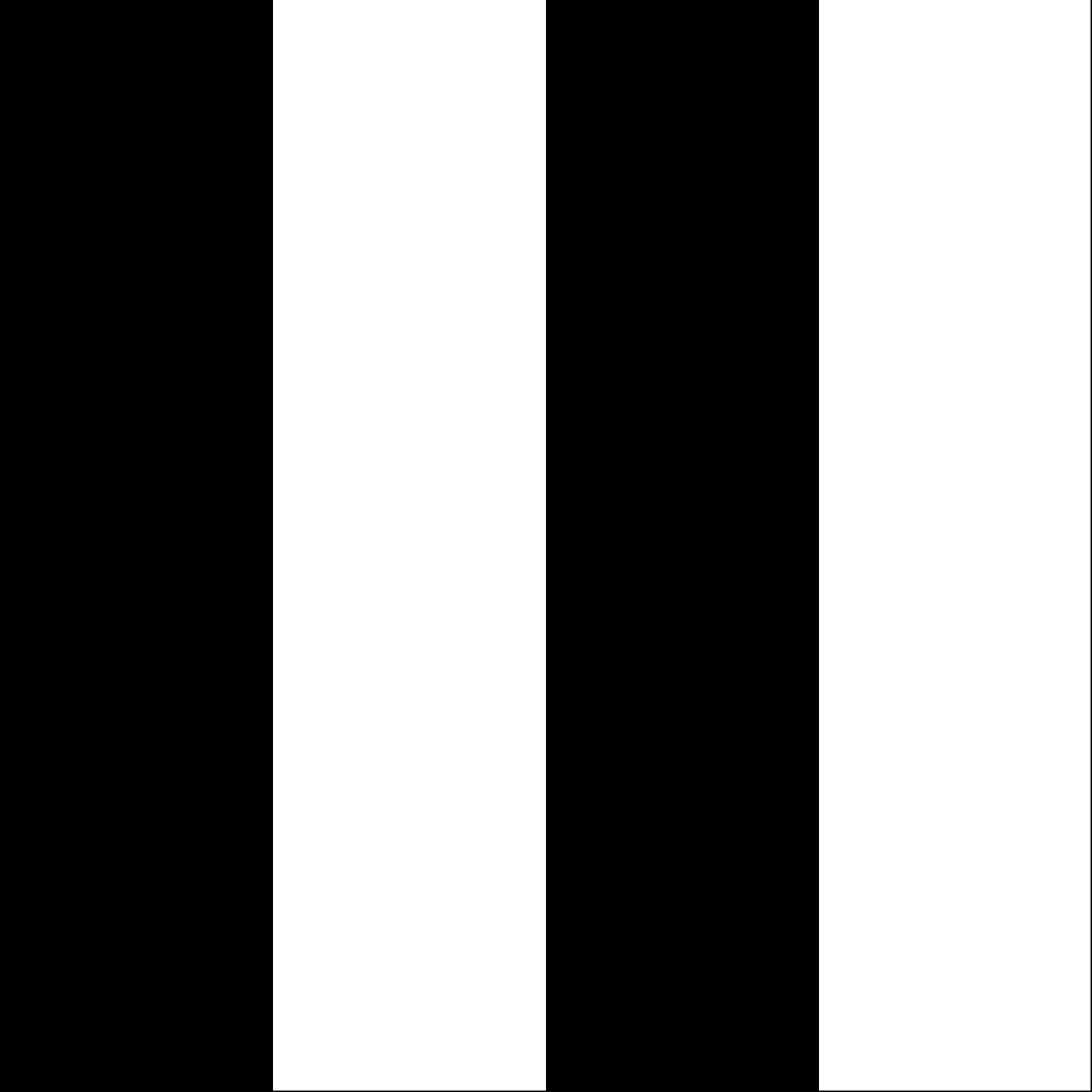} &
\includegraphics[width=\imgWidthTiny]{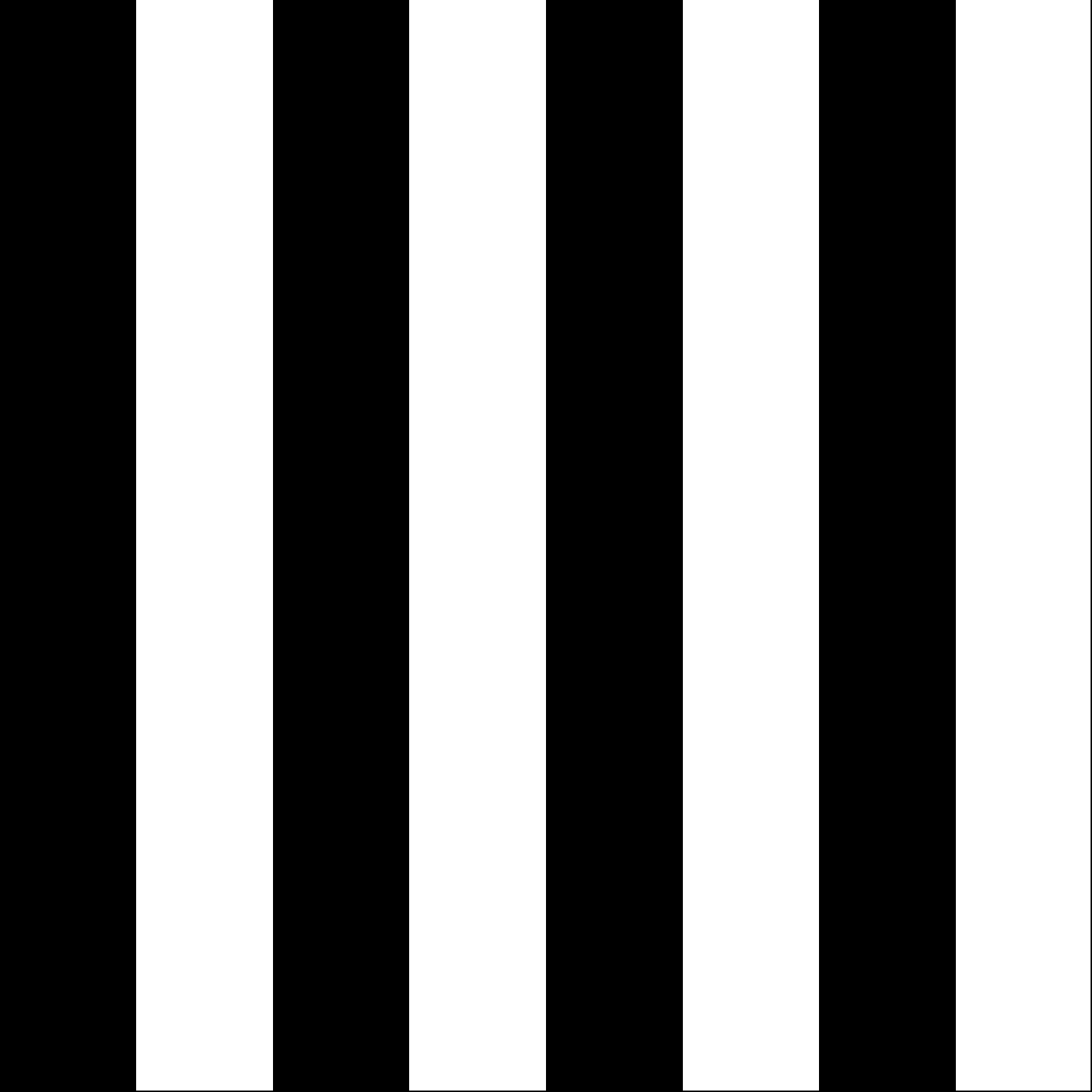} &
\includegraphics[width=\imgWidthTiny]{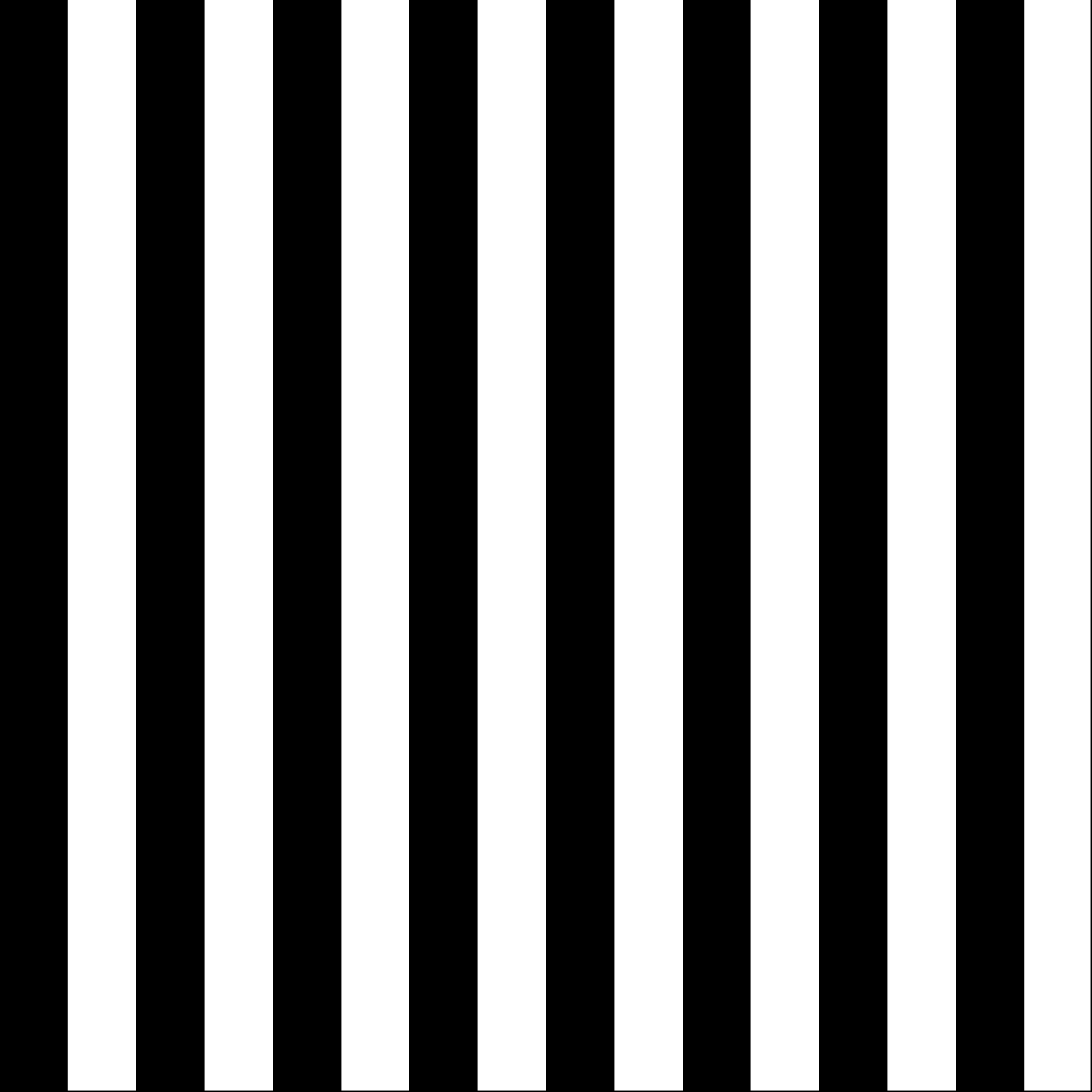} \\
\includegraphics[width=\imgWidthTiny]{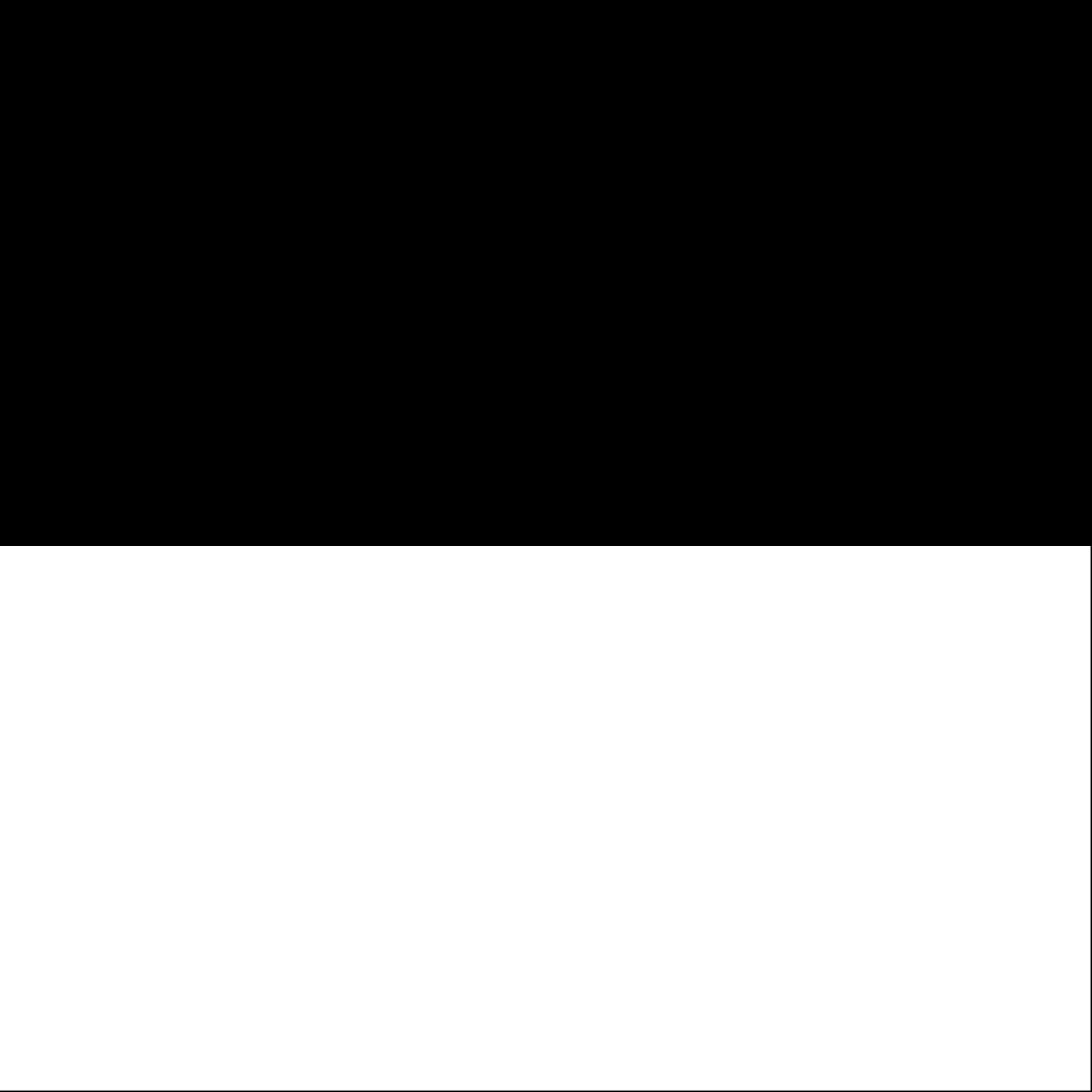} &
\includegraphics[width=\imgWidthTiny]{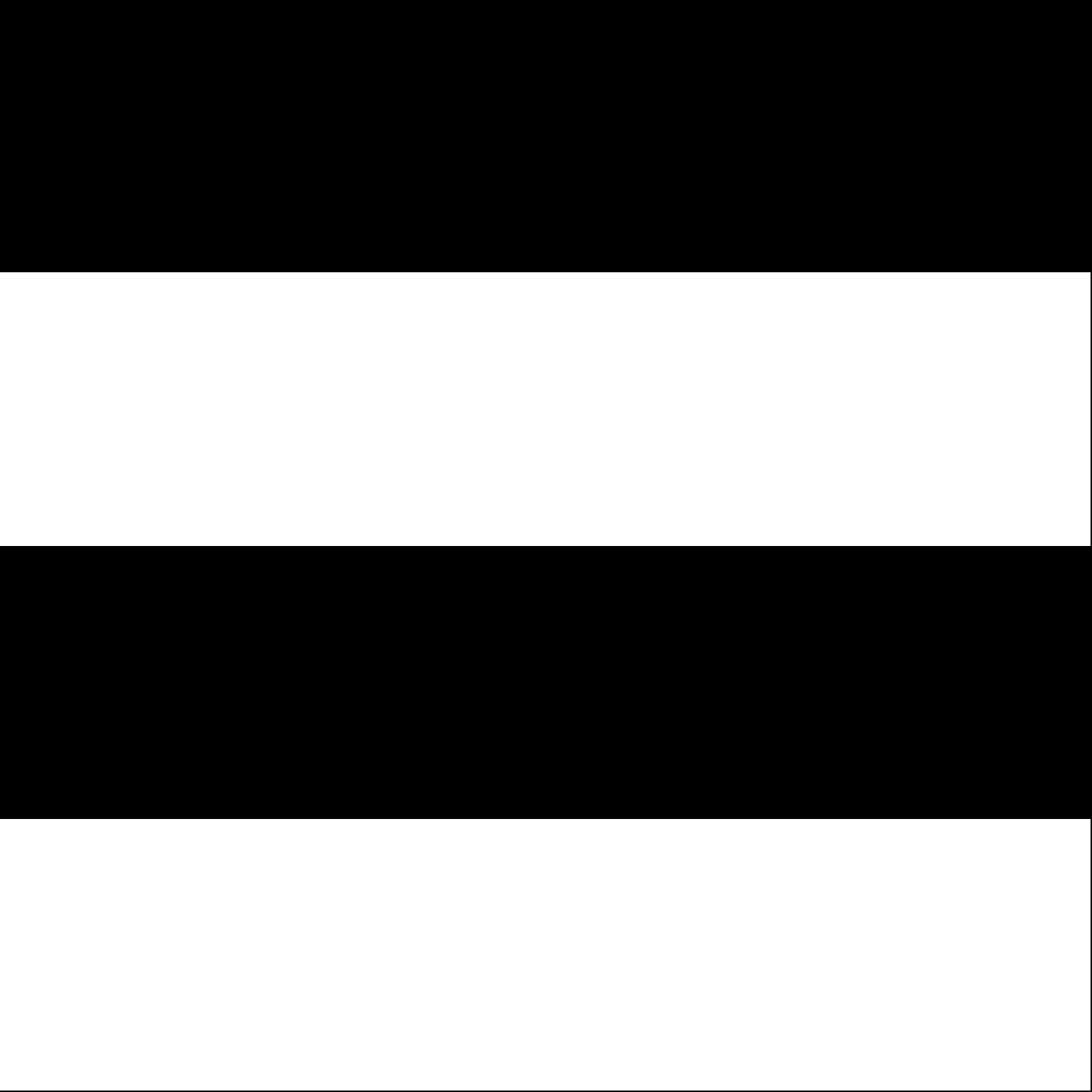} &
\includegraphics[width=\imgWidthTiny]{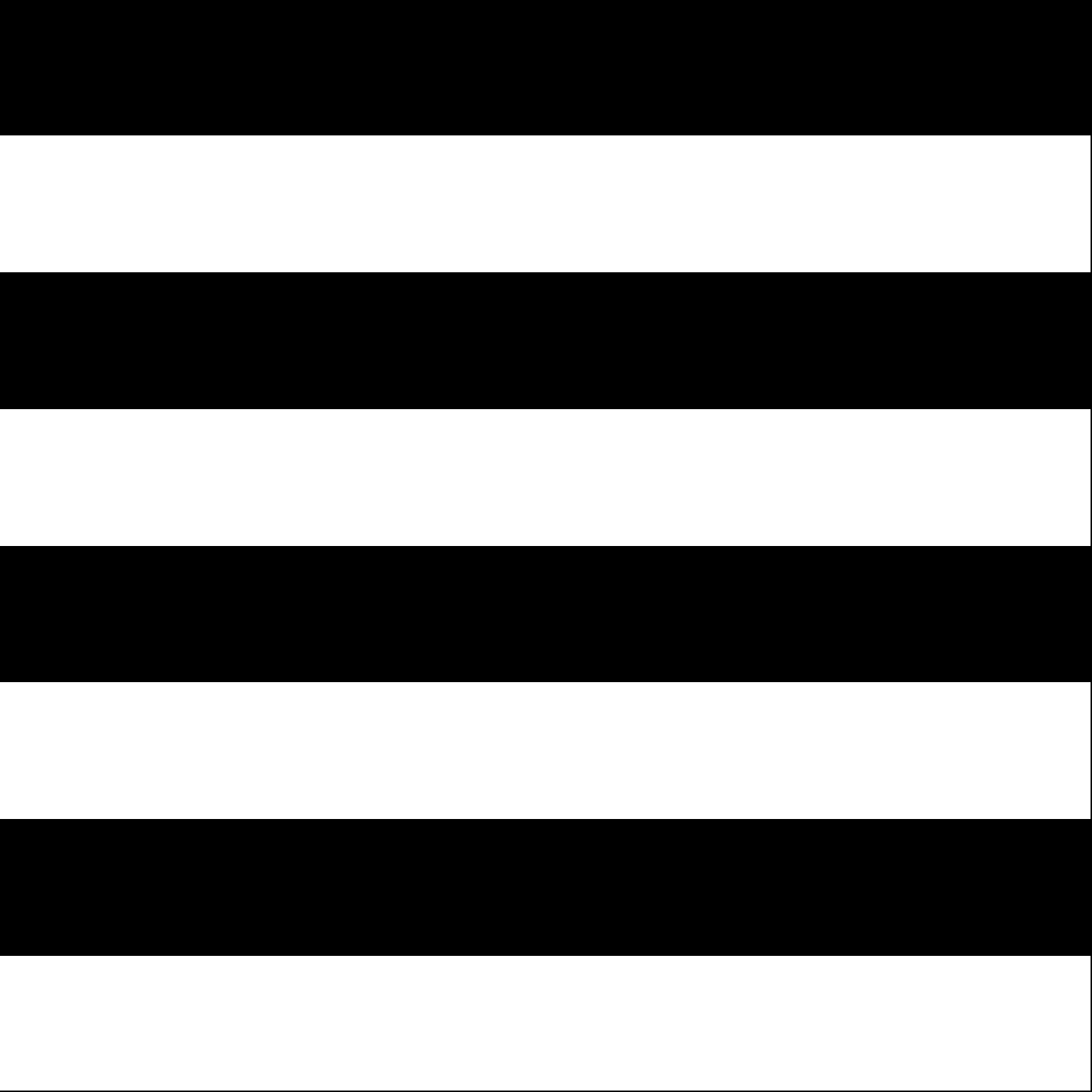} &
\includegraphics[width=\imgWidthTiny]{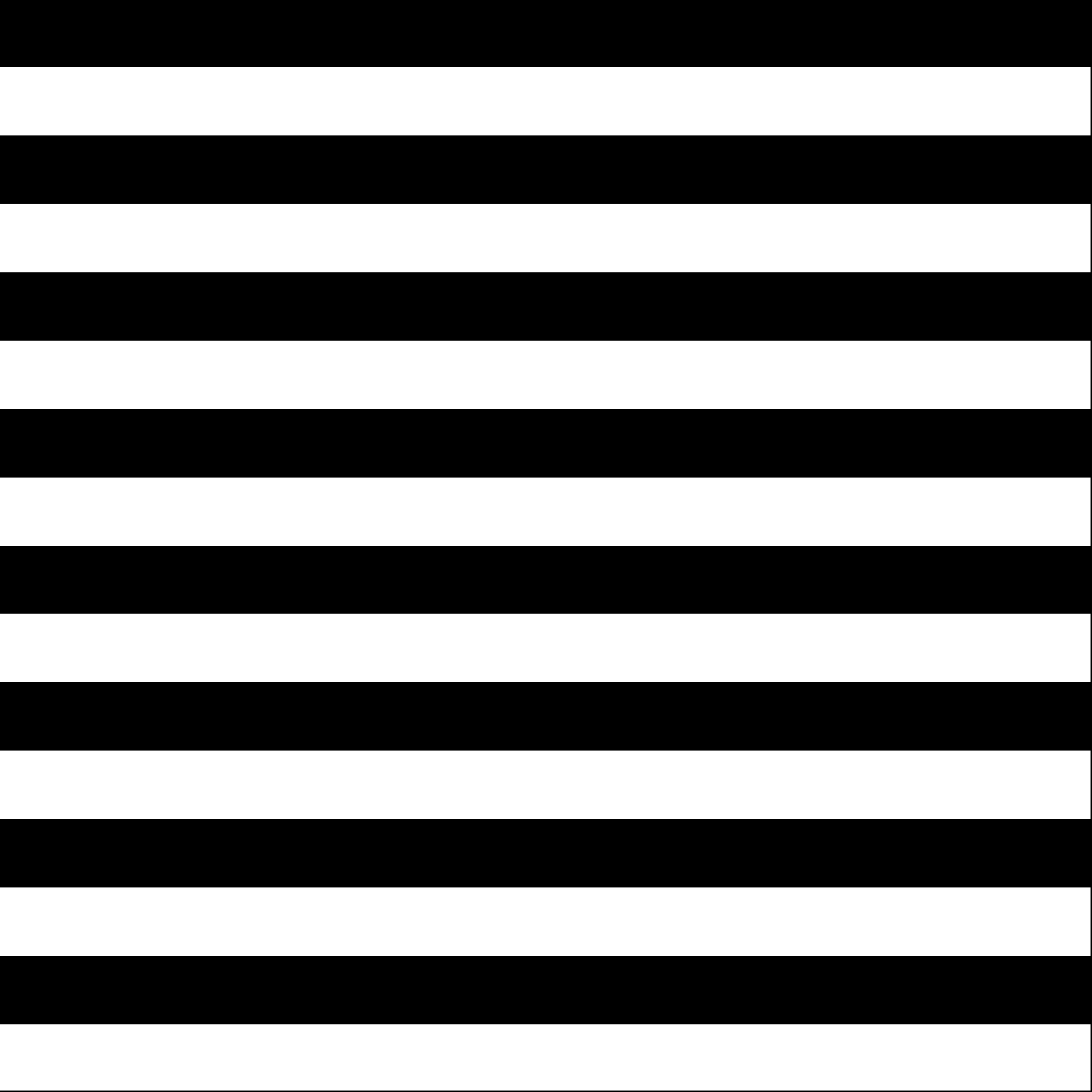}
\end{array}$
\end{center}
\vspace{-0.4cm}
\caption[caption]{The queried regions under the dyadic policy for a  $16\times16$ image shown in white.}
\label{Dyadic}
\end{figure}

\begin{figure}[h]
\vspace{-2.7 mm}
\begin{center}$
\begin{array}{@{\hspace{-0.5em}}c@{\hspace{.5em}}c@{\hspace{.5em}}c@{\hspace{.5em}}c}
\includegraphics[width=\imgWidthTiny]{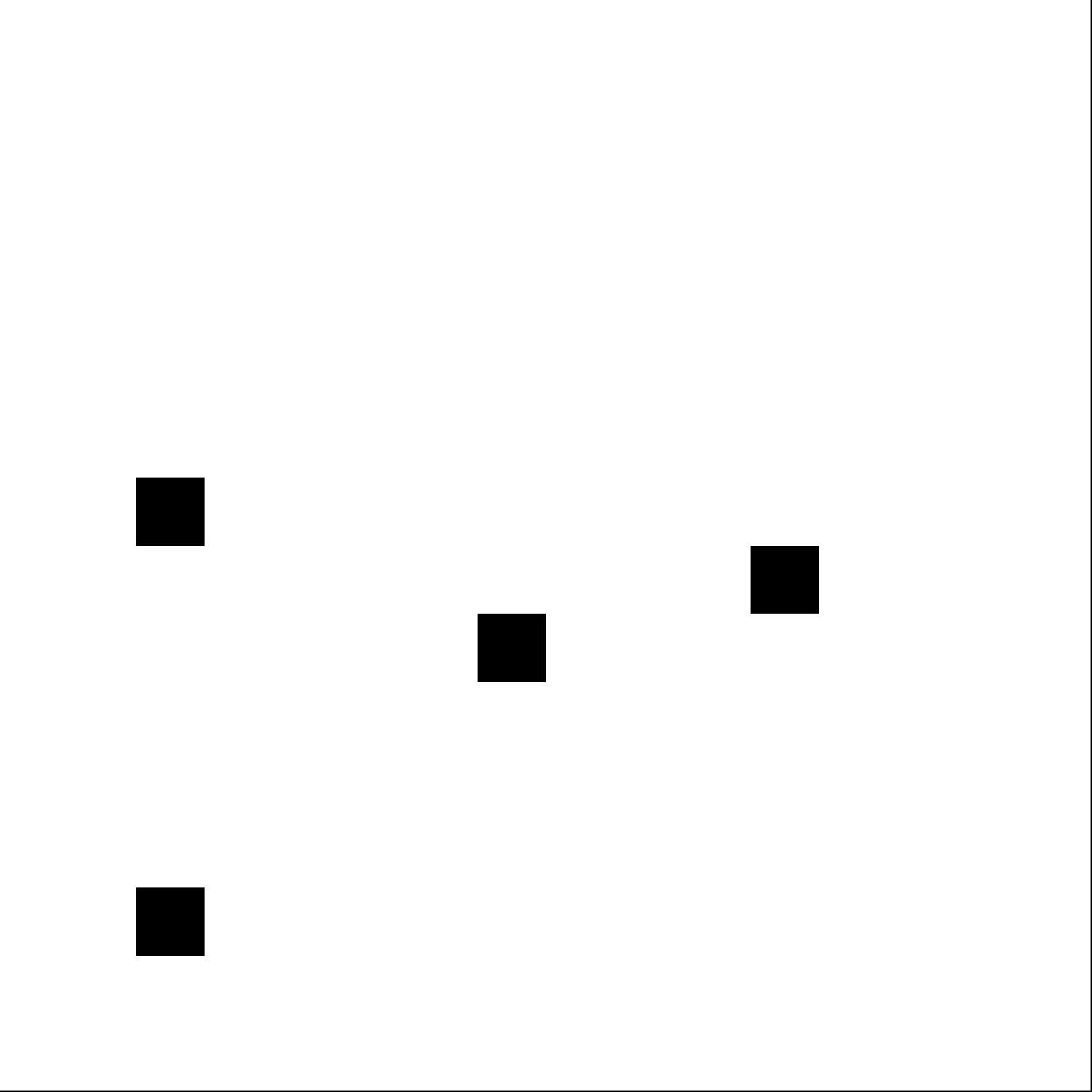} &
\includegraphics[width=\imgWidthTiny]{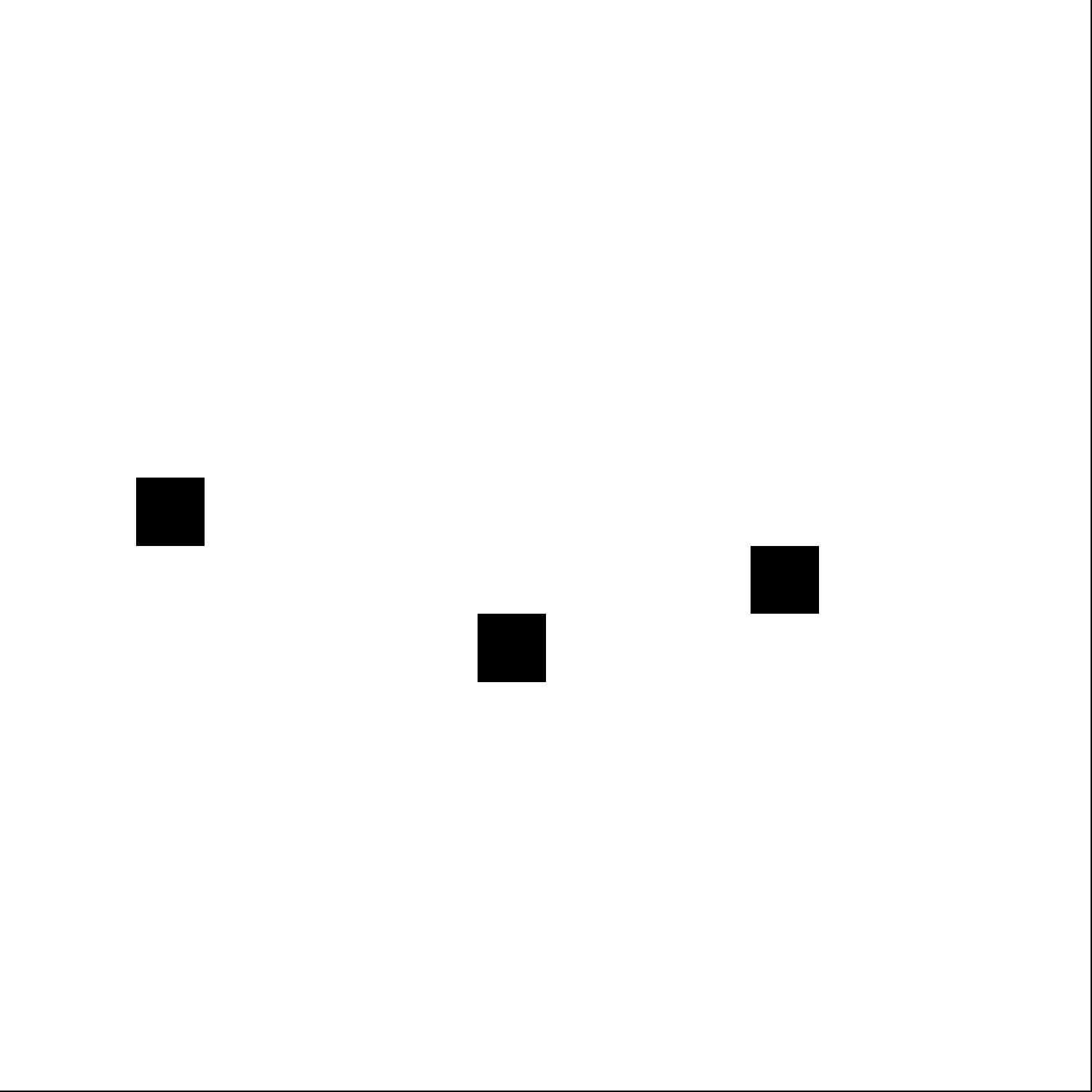} &
\includegraphics[width=\imgWidthTiny]{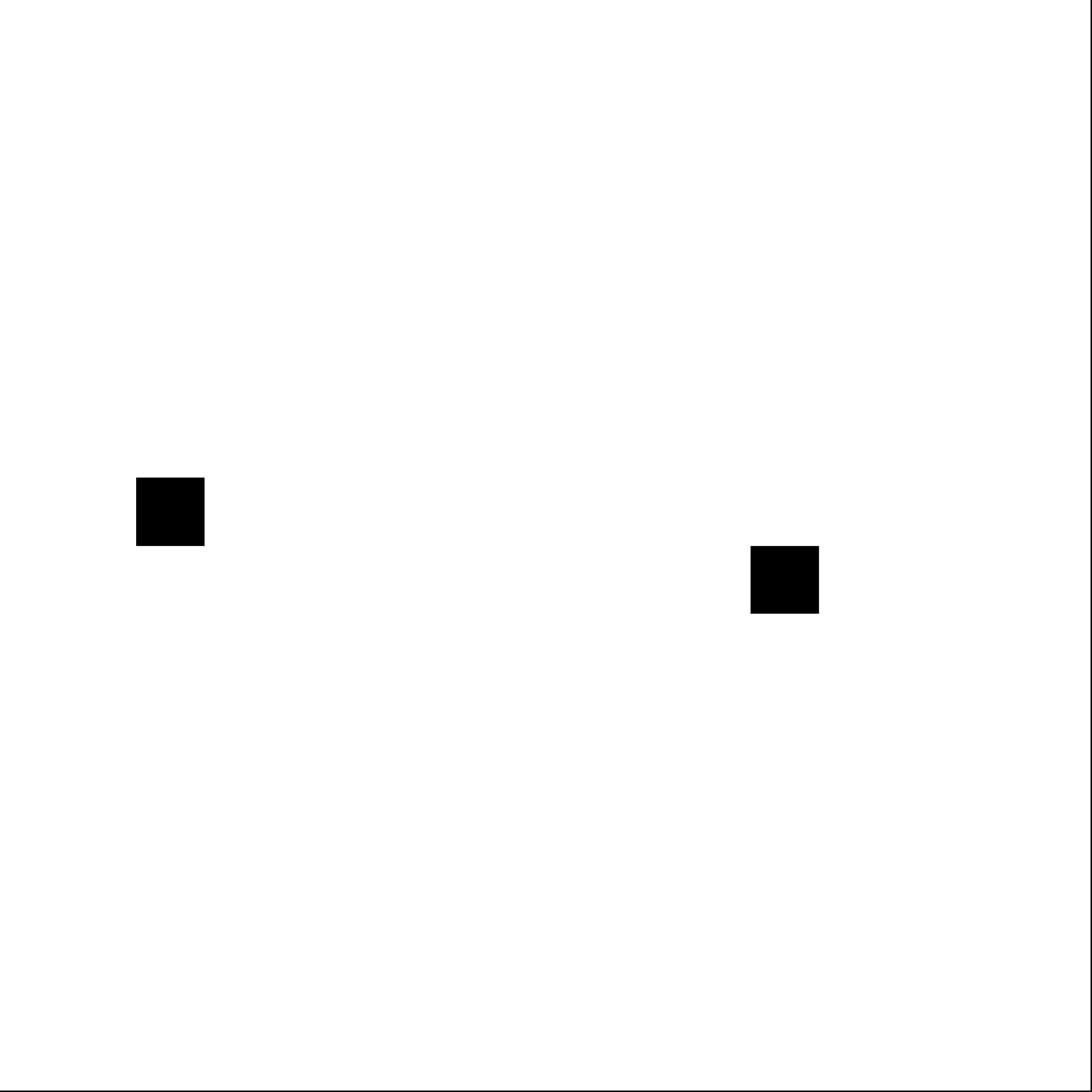} &
\includegraphics[width=\imgWidthTiny]{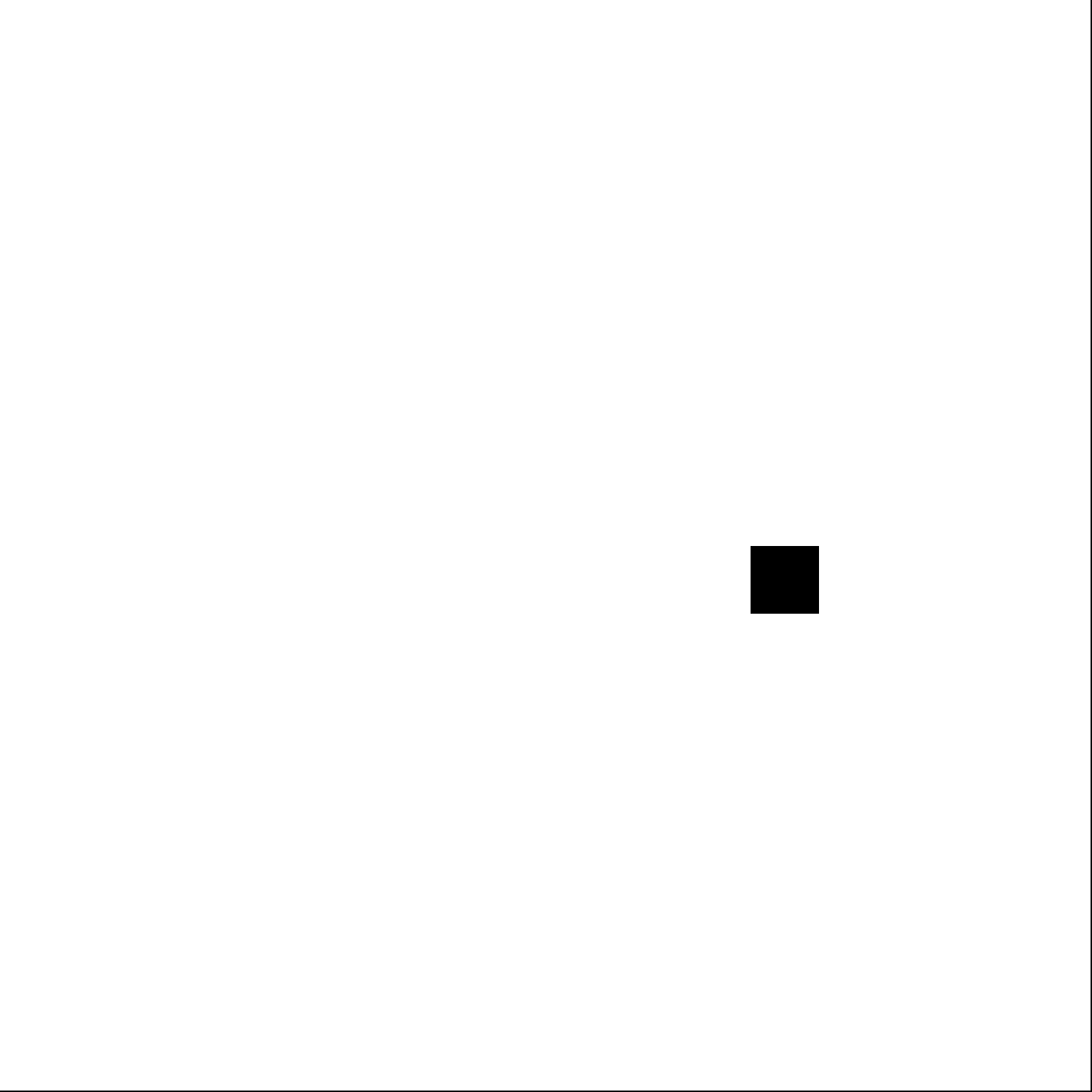} \\
\includegraphics[width=\imgWidthTiny]{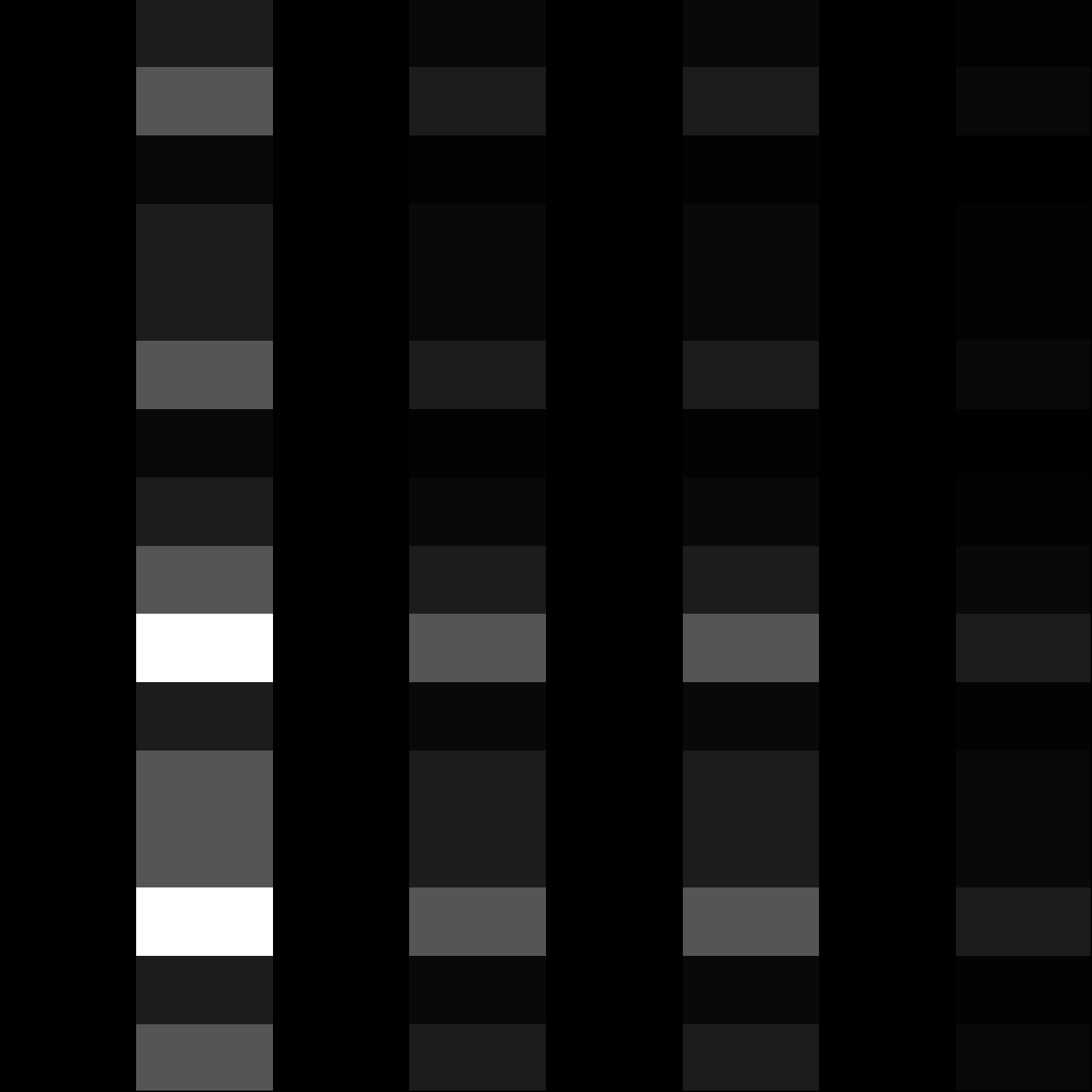} &
\includegraphics[width=\imgWidthTiny]{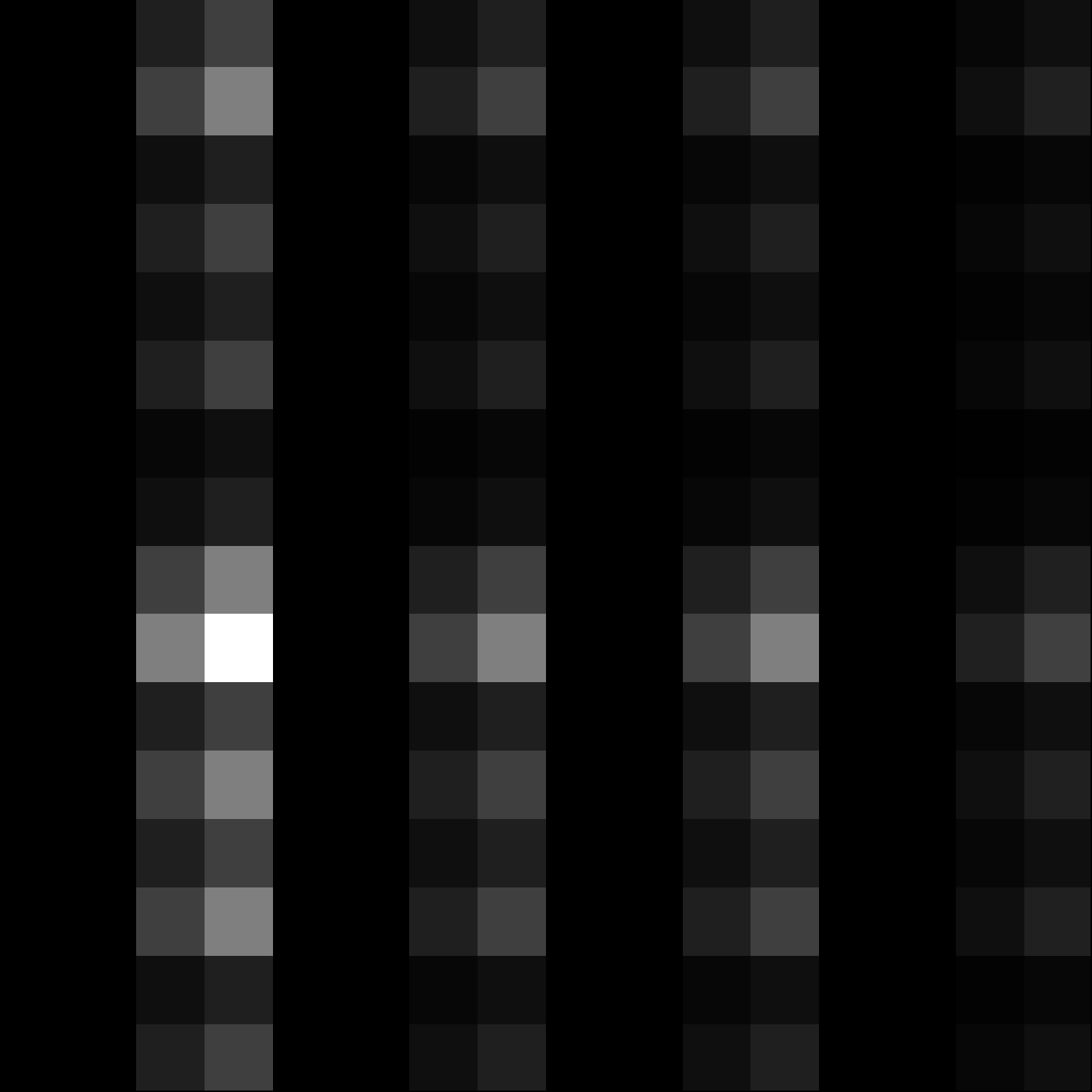} &
\includegraphics[width=\imgWidthTiny]{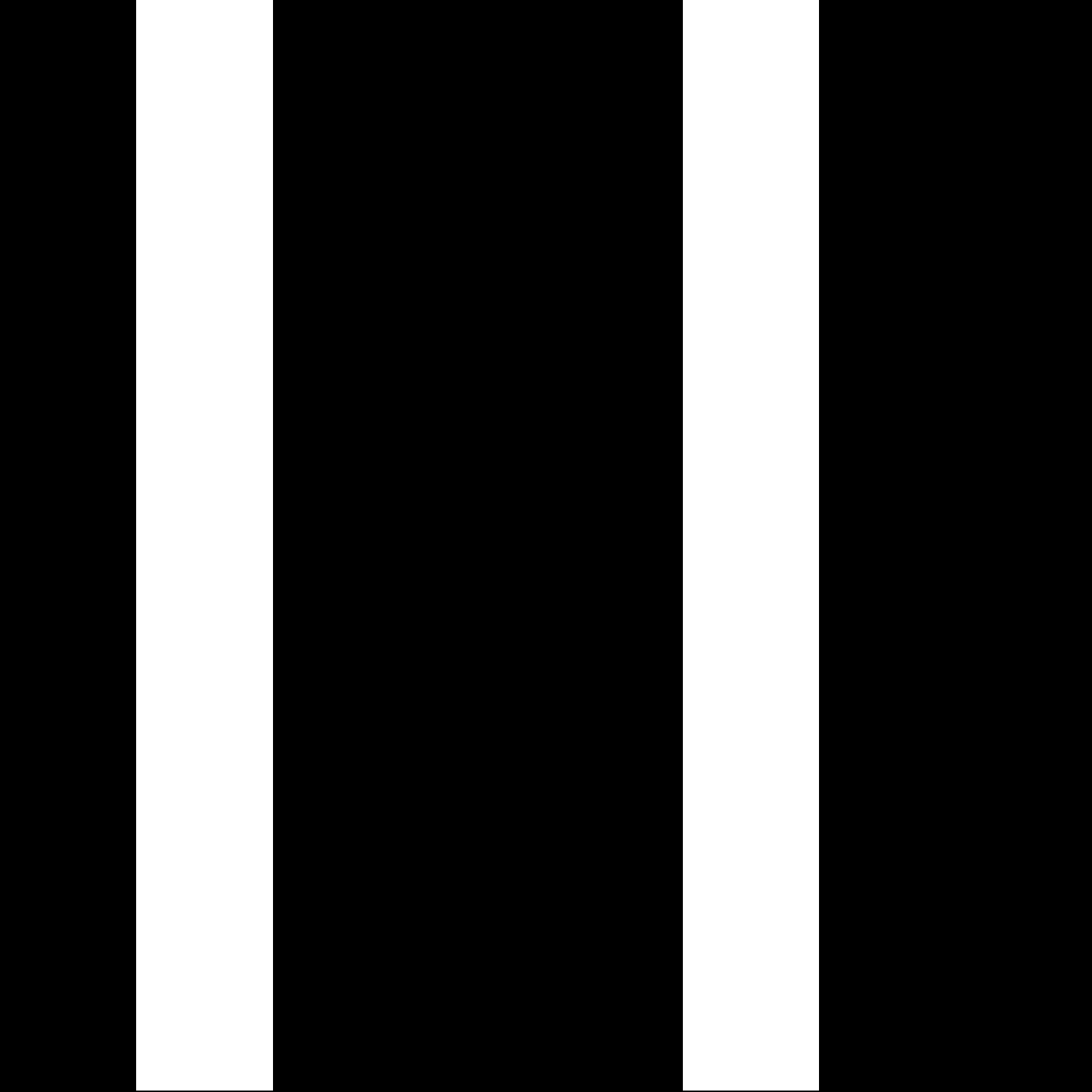} &
\includegraphics[width=\imgWidthTiny]{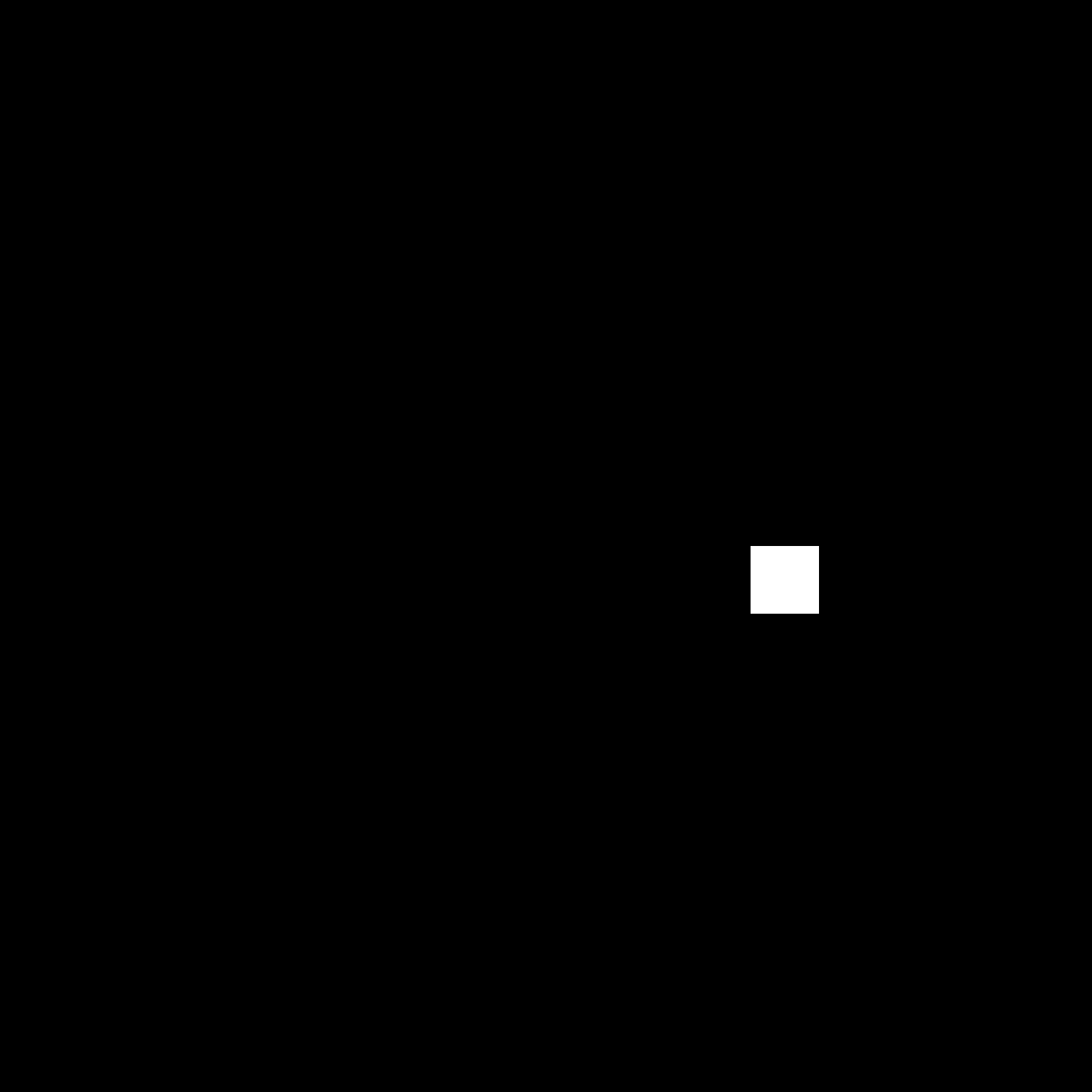}
\end{array}$
\end{center}
\vspace{-0.4cm}
\caption[caption]{(row 1) Example image with $4$ objects initially, one object is found after each iteration of the IPR algorithm. (row 2) The corresponding posterior distribution after each iteration. Light regions indicate pixels more likely to contain the object, while dark regions are less likely.}
\label{Object_Localization}
\end{figure}

The Entropy Pursuit (EP) algorithm is a greedy algorithm aimed at reducing the expected entropy on the joint location of the objects. It has been studied and used for locating and tracking objects in~\cite{SznJed10,JedFraSzn12,GemGem84,SznRicTayJedHag13,SznLucFraJedFua13}.
This algorithm can be related to the IPR algorithm. The differences between EP and IPR are: i) EP uses a different ordering criterion; ii) EP updates the ordering each time after running the oracle at a pixel instead of after an object being found. Specifically, EP computes for each pixel the expected entropy reduction in the distribution of the location of the objects which would be achieved by running the oracle at this pixel. It then selects the pixel for which this quantity is maximal.
The EP algorithm is provided below.
\begin{algorithm}[H]
\small
\caption{Entropy Pursuit (EP) Algorithm}
\label{algo:entropy_pursuit}
\begin{algorithmic}[1]
\STATE Compute the answers to the screening questions.
\STATE Obtain $E_N$ defined in \eqref{eq:collection}, the collection of matrices characterizing possible joint object locations.
\REPEAT
\STATE Select the pixel for which the expected entropy reduction is maximum.
\STATE Run the oracle at this pixel.
\STATE Remove all the inconsistent matrices from the collection $E_N$.
\UNTIL {all the objects are found.}
\end{algorithmic}
\end{algorithm}
\vspace{-0.3cm}

\newcommand{\imgWidthMedium}{0.33\textwidth}
\begin{figure*}[!]
\begin{center}
$\begin{array}{c@{\hspace{.1em}}c@{\hspace{0.1em}}c}
\includegraphics[width=\imgWidthMedium]{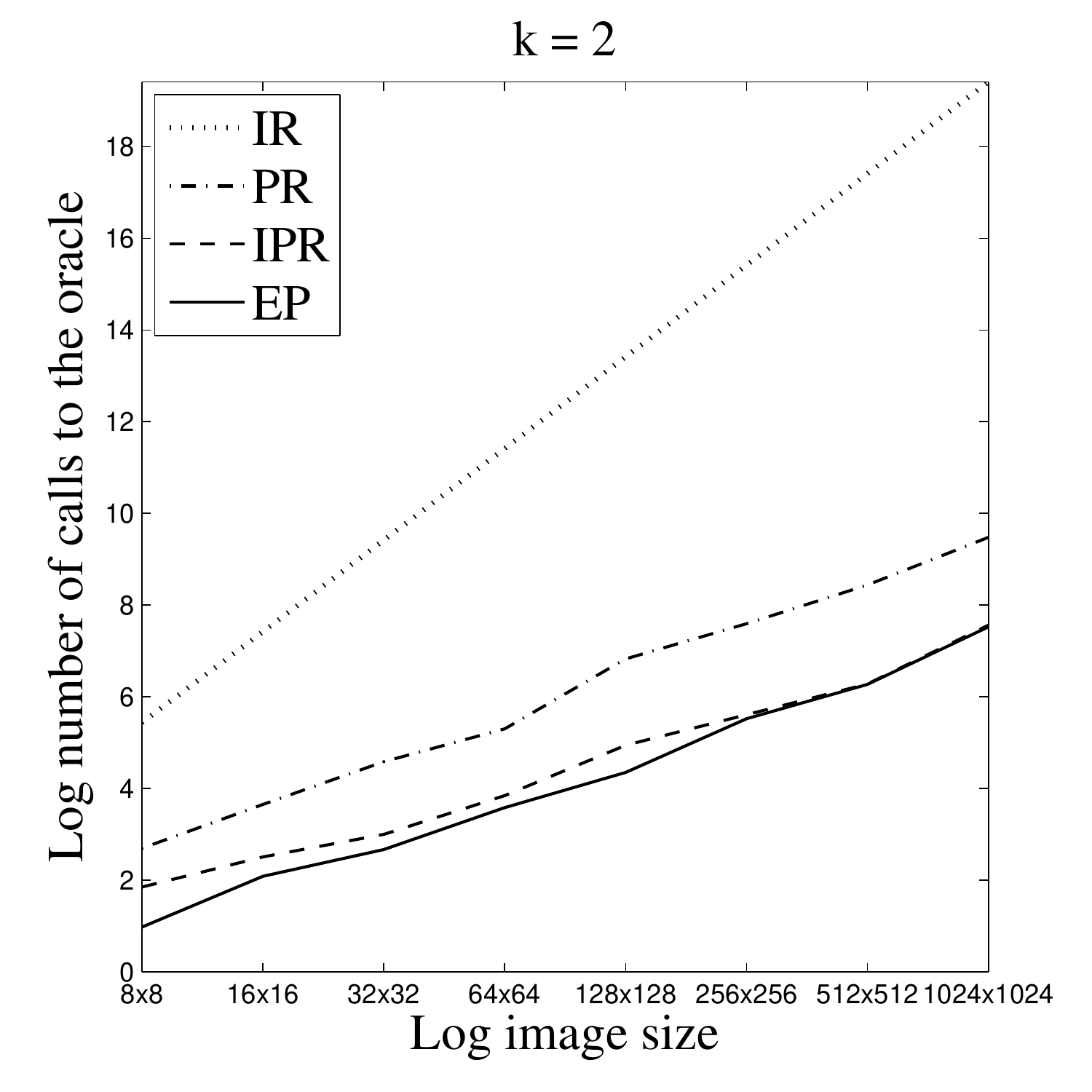} &
\includegraphics[width=\imgWidthMedium]{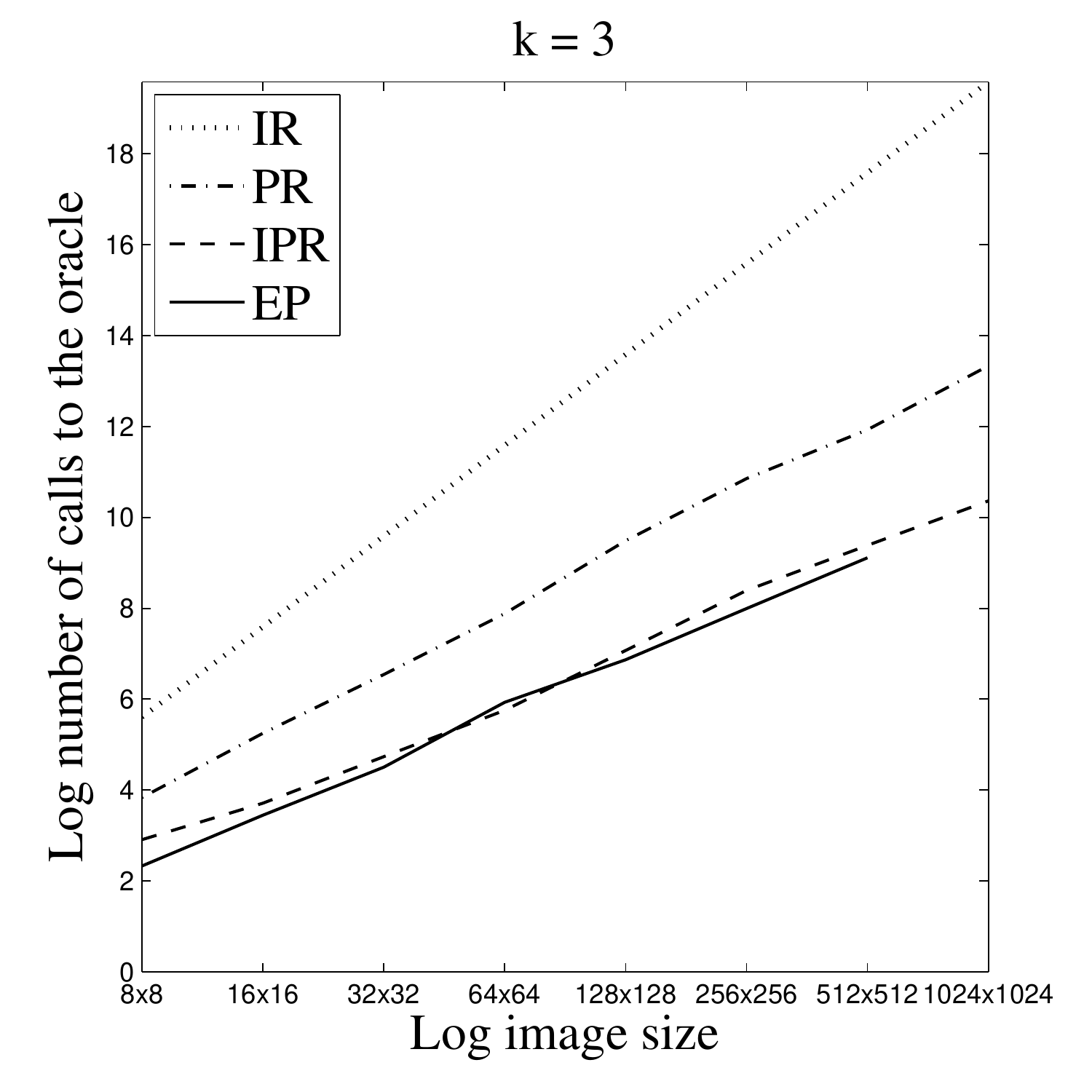} &
\includegraphics[width=\imgWidthMedium]{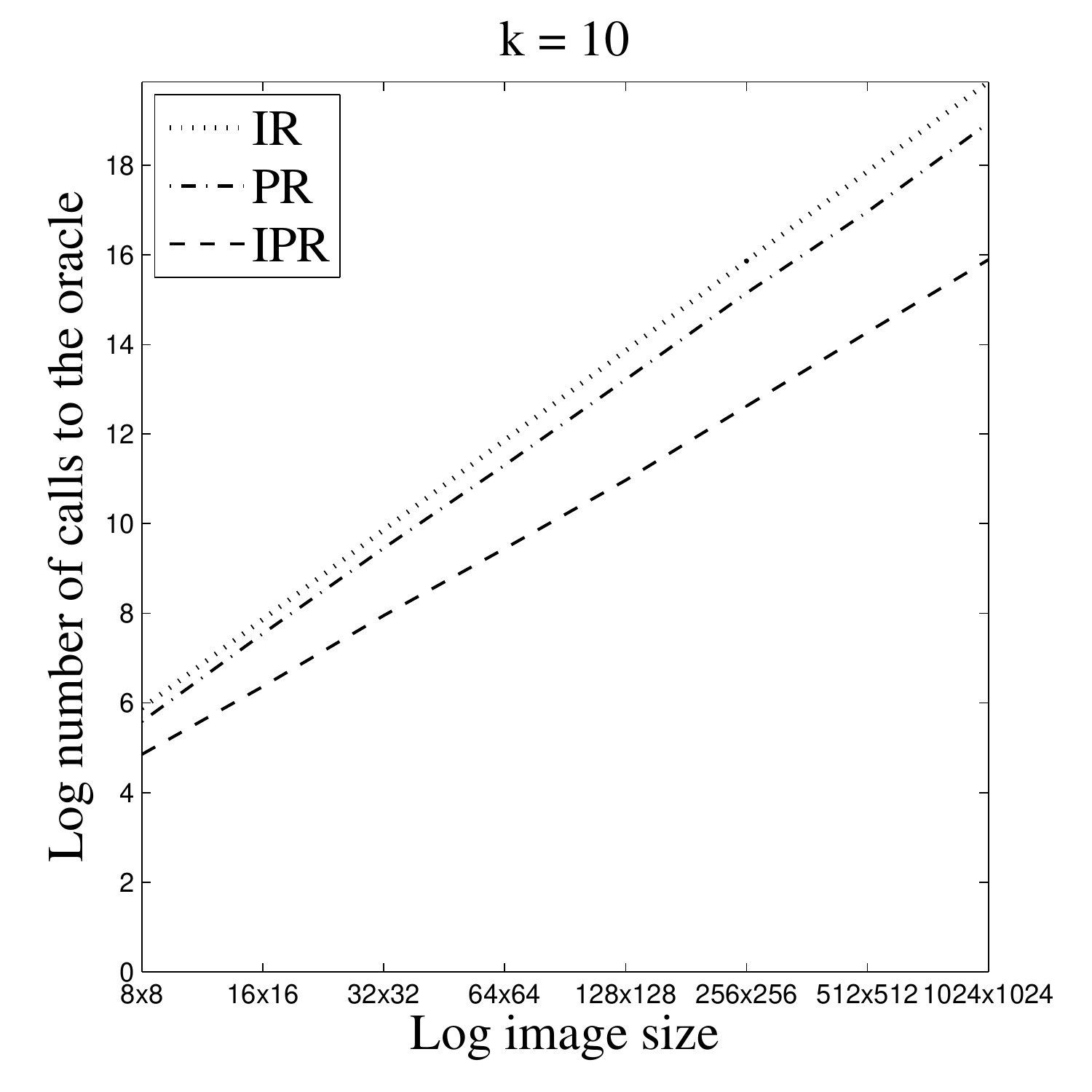} 
\end{array}$
\end{center}
\vspace{-0.7cm}
\caption[caption]{The mean number of calls to the oracle over 100 samples plotted against the image size for $k=2$, $k=3$ and $k=10$ respectively using the algorithms described in section.}
\label{Results_IR_PR_IPR_EP}
\end{figure*}

We use simulations to compare the performances of the three algorithms described above with a baseline algorithm, called Index Rank (IR) .  IR sweeps the image from left to right, top to bottom, until all the objects of the object are found. For the sake of simplicity, the object to be found in our simulation is a dot of size $1$ pixel. We use $100$ random assignments for the locations of the object for each $k$ and each image size in the simulation, and measure the number of calls to the oracle required in each case.


Figure \ref{Results_IR_PR_IPR_EP}, compares the algorithms for $k=2$, $k=3$ and $k=10$ object for image sizes $\{8\times8, 16\times16, \dots, 1024\times1024\}$. Algorithms PR, IPR and EP require a smaller average number of calls to the oracle compared to the baseline IR. An example will show how dramatic this is for large size images. In the case of $1024 \times 1024$ pixel images and $k=2$ objects, IR  requires $2^{20}$ evaluations of the oracle while IPR requires less than $2^8$ on average. IPR is also much more efficient than PR. IPR and EP show similar performances, however, IPR is superior to EP in terms of the computational complexity. Due to the EP algorithm's large computational and memory requirements, we have only plotted EP for $k=2$ and $k=3$, and have only gone up to $512\times512$ image for $k=3$.

\section{Conclusion}
\label{sec:conclusion}

We have considered the problem of twenty questions with noiseless answers, in which we aimed at locating multiple objects simultaneously. There are a variety of applications associated with this problem, such as group testing, computer vision, stochastic simulation and bioinformatics. By adopting the approach of maximizing the rate of reduction in expected entropy of the posterior distribution, we derived an upper bound on the expected entropy and studied two classes of policies, the \emph{dyadic policy} and the \emph{greedy policy}. Although the greedy policy, as we have shown, outperforms the dyadic policy in reducing the expected entropy, the latter employs a series of pre-determined question sets and thus is easy to implement. In addition, the dyadic policy beats traditional policies such as the \emph{sequential bifurcation policy} and is relatively stable in the sense that the average reduction in entropy converges under certain assumptions (Section \ref{sec:dyadic_conv}).

Also, there are several questions calling for future works. First, in real applications, noisy answers provide a more natural and accurate approximation but we only considered noiseless answers in this paper. Second, we assumed the number of the objects is known, but in a more general setting, this assumption should be released. Third, another objective function such as the mean-squared error can replace the expected entropy, which measures the performance of a specific policy differently. We feel that researches in these and other questions will be prosperous and fruitful.

\section*{Acknowledgments}
We would like to thank Li Chen for the fruitful discussions and the preliminary work which eventually led to this manuscript. Bruno Jedynak was partially supported by NSF IIS-0964416 and by the Science of Learning Institute at Johns Hopkins University through the research grant untitled ``Spatial Localization Through Learning: An Information Theoretic Approach". Peter Frazier was supported by NSF CAREER CMMI-1254298, NSF IIS-1247696, AFOSR YIP FA9550-11-1-0083, AFOSR FA9550-12-1-0200, AFOSR FA9550-15-1-0038, and the ACSF AVF.

\section*{Appendix}
\appendix

\section{Lemmas and Proofs}
We first introduce some notation, used here, and throughout the paper.
For any pair of random variables $W,V$, we define $H(W\|V)$ to be the random variable taking the value 
\begin{equation}
-\int_{-\infty}^\infty f(w|V=v)\log f(w|V=v)\,dw
\end{equation} for each $V=v$, assuming the conditional density function $f(w|V=v)$ exists. The ``usual" conditional entropy is related to it by

\begin{equation}
\label{eq:def_condentropy}
H(W|V)=E[H(W\|V)].
\end{equation}

We now provide here, in Lemma~\ref{lem:Eentropy_onestep}, an expression for the expected entropy after additional questions.  This lemma is based on the idea that each additional question reduces the entropy of $\theta_{1:k}$ by an amount that can be expressed in terms of the conditional entropy of the answer to that question.  The total entropy reduction can then be computed as a sum of the contributions from each question, which we use later to study the expected total entropy reduction under specific policies.


\begin{lem}
\label{lem:Eentropy_onestep}
Under any policy $\pi$,
\begin{equation}
\label{eq:Eentropy_onestep}
E[H(p_{n+1})|B_n]=H(p_n)-H(X_{n+1}\|B_n), \text{ for all $n=0,1,\dots,N-1$},
\end{equation}
Moreover,
\begin{equation}
\label{eq:Eentropy_all}
E[H(p_N)]=H_0-\sum_{n=0}^{N-1}H(X_{n+1}|B_n).
\end{equation}
\end{lem}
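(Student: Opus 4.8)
The plan is to establish the one-step recursion \eqref{eq:Eentropy_onestep} first, and then obtain \eqref{eq:Eentropy_all} by taking expectations and telescoping. The core of the argument is a two-way chain-rule decomposition of the joint entropy of the continuous vector $\theta$ and the discrete answer $X_{n+1}$, carried out inside the conditional probability space obtained by fixing a history $B_n=b_n$.

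First I would record two facts about the dependence structure. Under any policy, $A_{n+1}$ is a deterministic function of $(Z,X_{1:n})$ and hence is $B_n$-measurable; therefore conditioning on $B_{n+1}$ coincides with conditioning on $(B_n,X_{n+1})$, so that, as random variables, $H(p_{n+1})=H(\theta\|B_{n+1})=H(\theta\|B_n,X_{n+1})$. Second, by \eqref{eq:answer} the answer $X_{n+1}=\sum_{i=1}^k\mathbbm 1_{A_{n+1}}(\theta_i)$ is a deterministic function of $\theta$ once $A_{n+1}$ is fixed; consequently, for fixed $B_n=b_n$ (which fixes $A_{n+1}$), the conditional entropy $H(X_{n+1}|\theta,B_n=b_n)=0$.

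Next, with $b_n$ fixed, the quantities $H(\theta|B_n=b_n)$, $H(X_{n+1}|B_n=b_n)$, $H(\theta|X_{n+1},B_n=b_n)$, and $H(X_{n+1}|\theta,B_n=b_n)$ are ordinary conditional entropies of the pair $(\theta,X_{n+1})$, with $\theta$ continuous and $X_{n+1}$ discrete. I would apply the chain rule for this mixed pair in both orders and equate:
\begin{equation*}
H(\theta|B_n=b_n)+H(X_{n+1}|\theta,B_n=b_n)=H(X_{n+1}|B_n=b_n)+H(\theta|X_{n+1},B_n=b_n).
\end{equation*}
Dropping the vanishing term and rearranging yields $H(\theta|X_{n+1},B_n=b_n)=H(\theta|B_n=b_n)-H(X_{n+1}|B_n=b_n)$. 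I then identify each piece: $H(\theta|B_n=b_n)$ is the value of the random variable $H(p_n)$ at $b_n$; $H(X_{n+1}|B_n=b_n)$ is the value of $H(X_{n+1}\|B_n)$ at $b_n$; and, using the first structural fact, $H(\theta|X_{n+1},B_n=b_n)=\sum_x P(X_{n+1}=x\mid B_n=b_n)\,H(\theta\|B_{n+1}=(b_n,x))=E[H(p_{n+1})\mid B_n=b_n]$. Since $b_n$ was arbitrary, this is precisely \eqref{eq:Eentropy_onestep} as an identity of random variables.

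Finally, for \eqref{eq:Eentropy_all} I would take unconditional expectations of \eqref{eq:Eentropy_onestep}, use $E[H(X_{n+1}\|B_n)]=H(X_{n+1}|B_n)$ from \eqref{eq:def_condentropy} to get $E[H(p_{n+1})]=E[H(p_n)]-H(X_{n+1}|B_n)$, and sum this telescoping recursion over $n=0,\dots,N-1$, noting that $E[H(p_0)]=H_0$ since $p_0$ is the deterministic prior. I expect the only delicate point to be the chain-rule step: one must justify the decomposition for the mixed continuous--discrete pair $(\theta,X_{n+1})$ and correctly match the averaged-over-$X_{n+1}$ term $H(\theta|X_{n+1},B_n=b_n)$ with the conditional expectation $E[H(p_{n+1})\mid B_n=b_n]$. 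Once this identification is in place, the remaining steps are routine.
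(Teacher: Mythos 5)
Your proposal is correct and follows essentially the same route as the paper's own proof: fixing a history $B_n=b_n$, using that $A_{n+1}$ is $B_n$-measurable and that $\theta$ determines $X_{n+1}$ so that $H(X_{n+1}|\theta,B_n=b_n)=0$, expanding the joint entropy of the mixed pair $(\theta,X_{n+1})$ by the chain rule in both orders, identifying $H(\theta|X_{n+1},B_n=b_n)$ with $E[H(p_{n+1})\mid B_n=b_n]$, and then telescoping after taking expectations. The only cosmetic difference is that you state the two chain-rule expansions symmetrically and equate them, whereas the paper chains them sequentially; the content is identical.
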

\begin{proof}
First of all, we prove the recursive relation \eqref{eq:Eentropy_onestep}. $H(p_n)$ is the entropy of the posterior distribution of $\theta$, which is random through its dependence on the past history $B_n$, hence we can rewrite it as $H(p_n)=H(\theta\|B_n)$. Similarly, $H(p_{n+1})=H(\theta\|B_{n+1})=H(\theta\|B_n,A_{n+1},X_{n+1})=H(\theta\|B_n,X_{n+1})$ as $A_{n+1}$ is $B_n$-measurable under any valid policy $\pi$. Since all three terms in \eqref{eq:Eentropy_onestep} are $\sigma(B_n)$-measurable random variables, it suffices to prove \eqref{eq:Eentropy_onestep} holds for any fixed history $B_n=b_n$, i.e.
\begin{equation}
\label{eq:Eentropy_onestep_fixY}
E[H(\theta\|B_n,X_{n+1})|B_n=b_n]=H(\theta|B_n=b_n)-H(X_{n+1}|B_n=b_n).
\end{equation}

Using information theoretic arguments, we have
\begin{subequations}
\begin{align}
E[H(\theta\|B_n,X_{n+1})|B_n=b_n]&=\sum_{x_{n+1}=0}^k H(\theta|B_n=b_n, X_{n+1}=x_{n+1})P(X_{n+1}=x_{n+1}|B_n=b_n)\\
&=H(\theta|X_{n+1},B_n=b_n)\label{eq:cond_entropy}\\
&=H(\theta,X_{n+1}|B_n=b_n)-H(X_{n+1}|B_n=b_n)\label{eq:chainrule1}\\
&=H(X_{n+1}|\theta,B_n=b_n)+H(\theta|B_n=b_n)-H(X_{n+1}|B_n=b_n)\label{eq:chainrule2}\\
&=H(\theta|B_n=b_n)-H(X_{n+1}|B_n=b_n)\label{eq:Y_vanish}
\end{align}
\end{subequations}
where \eqref{eq:cond_entropy} comes from the definition of conditional entropy and \eqref{eq:chainrule1}, \eqref{eq:chainrule2} come from the chain rule for conditional entropy. \eqref{eq:Y_vanish} holds as the first term in \eqref{eq:chainrule2} vanishes because the information of $\theta$ completely determines the answer $X_{n+1}$. This proves \eqref{eq:Eentropy_onestep_fixY}.

Now, in order to prove \eqref{eq:Eentropy_all}, let us first obtain a recursive relation in unconditional expected entropy of posterior distributions. Taking the expectation over $B_n$ on both sides of \eqref{eq:Eentropy_onestep},
\begin{equation}
\label{eq:Eentropy0}
E\left[E[H(p_{n+1})|B_n]\right]=E[H(p_n)]-E\left[H(X_{n+1}\|B_n)\right].
\end{equation}

Note that $E\left[E[H(p_{n+1})|B_n]\right]=E[H(p_{n+1})]$ by the iterated conditioning property of conditional expectation. Moreover, $E\left[H(X_{n+1}\|B_n)\right]=H(X_{n+1}|B_n)$ according to the definition of conditional entropy in \eqref{eq:def_condentropy}. Hence, \eqref{eq:Eentropy0} is equivalent to
\begin{equation}
\label{eq:Eentropy1}
E[H(p_{n+1})]=E[H(p_n)]-H(X_{n+1}|B_n).
\end{equation}

Applying \eqref{eq:Eentropy1} iteratively for $n=N-1,\dots,0$, we obtain \eqref{eq:Eentropy_all}, which concludes the proof.
\end{proof}

Note that the dyadic policy is deterministic, i.e., it does not make use of the random seed $Z$. As a consequence, in the following, we use $X_{1:n}$ to denote the history up to time $n$ without including $Z$ and $A_{1:n}$.

\begin{lem}
\label{lem:dyadic_Hp}
Under the dyadic policy, for all $n=1,2,\dots,N$,
\begin{equation}
\label{eq:Hn}
H(p_n) =  - \sum_{j=1}^n Z_j+I_2(n),
\end{equation}
where $I_2(n)$ is a random variable and $Z_j=k-\log{k \choose X_j}$ with $X_j$ following i.i.d binomial distribution $\Bin(k, \frac{1}{2})$.
\end{lem}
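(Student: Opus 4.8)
The plan is to begin from the explicit posterior density supplied by Lemma~\ref{lem:productk} and to split its differential entropy into a normalization term and a residual term, then to show that under the dyadic policy the normalization term is exactly $-\sum_{j=1}^n Z_j$ while the residual term is what we call $I_2(n)$.

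First I would abbreviate the normalizing constant as $M_n = p_0\!\left(\bigcup_{\mathcal S \in E_n} C_{\mathcal S}\right)$. Since $p_n = p_0/M_n$ on the set $\bigcup_{\mathcal S \in E_n} C_{\mathcal S}$ and vanishes elsewhere, substituting into the definition \eqref{eq:diff_entropy} and writing $\log(p_0/M_n) = \log p_0 - \log M_n$ makes the entropy split as
\begin{equation}
\label{eq:entropy-split-plan}
H(p_n) = -\frac{1}{M_n}\int_{\bigcup_{\mathcal S \in E_n} C_{\mathcal S}} p_0(u)\log p_0(u)\,du + \log M_n,
\end{equation}
where the second term pulls out because $p_n$ integrates to one on its support. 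I would then \emph{define} $I_2(n)$ to be the first term on the right of \eqref{eq:entropy-split-plan}; it is a random variable because the region of integration depends on $X_{1:n}$ through $E_n$. One may note in passing that $I_2(n) = E[-\log p_0(\theta)\mid X_{1:n}]$, which by the tower property is a martingale, consistent with the role $I_2(n)$ plays in Lemma~\ref{lem:martingale_conv}.

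The crux is to evaluate $\log M_n$. By the last identity in Lemma~\ref{lem:productk}, $M_n = \sum_{\mathcal S \in E_n}\prod_{i=1}^k f_0(C_{s^{(i)}})$. Under the dyadic policy the quantile construction of Section~\ref{sec:dyadic_description} gives every leaf set prior mass exactly $2^{-n}$, and by the bijection recalled in the proof of Theorem~\ref{thm:dyadic} each binary string $s\in\{0,1\}^n$ satisfies $C_s\ne\emptyset$ with $f_0(C_s)=2^{-n}$; hence every codeword matrix carries the same weight $\prod_{i=1}^k f_0(C_{s^{(i)}}) = 2^{-nk}$, so $M_n = |E_n|\,2^{-nk}$. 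It then remains to count $|E_n|$: because all $C_s$ are nonempty, the only constraints in \eqref{eq:collection} are the row-sum constraints $s_j^{(1)}+\dots+s_j^{(k)}=X_j$, which decouple across rows, so the $j$-th row ranges freely over the $\binom{k}{X_j}$ binary $k$-vectors of weight $X_j$ and $|E_n| = \prod_{j=1}^n\binom{k}{X_j}$. Substituting,
\begin{equation}
\log M_n = \sum_{j=1}^n\log\binom{k}{X_j} - nk = -\sum_{j=1}^n\left(k-\log\binom{k}{X_j}\right) = -\sum_{j=1}^n Z_j,
\end{equation}
which is the claimed first term. Finally, Theorem~\ref{thm:dyadic} shows that under the dyadic policy $X_j\sim\Bin(k,\frac12)$ regardless of the past history, so the $X_j$, and hence the $Z_j = k-\log\binom{k}{X_j}$, are i.i.d. I expect the main obstacle to be the combinatorial count of $|E_n|$; the two points to verify carefully are that the dyadic refinement leaves no empty cell $C_s$ (so that no codeword is dropped) and that the row constraints are genuinely independent, both of which rely on the equal-mass dyadic partition rather than on any feature of a general policy.
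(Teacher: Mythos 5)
Your proposal is correct and follows essentially the same route as the paper's proof: the same split of $H(p_n)$ into $\log p_0(C)$ plus the residual integral term $I_2(n)$, the same observations that under the dyadic policy every cell $C_s$ is nonempty with mass $2^{-n}$ so that $|E_n|=\prod_{j=1}^n\binom{k}{X_j}$, and the same appeal to the $\Bin\left(k,\frac12\right)$ posterior predictive to get the i.i.d.\ claim. Your passing remark that $I_2(n)=E[-\log p_0(\theta)\mid X_{1:n}]$ is a Doob martingale is a nice shortcut to the martingale property that the paper instead verifies by a recursion in Lemma~\ref{lem:martingale_conv}, but it is not needed for the statement at hand.
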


\begin{proof}
Let $X_{1:n}=x_{1:n}$ be fixed. According to Lemma \ref{lem:productk},
\begin{equation}
\label{eq:post}
p_n(u_{1:k}) = \frac{p_0(u_{1:k})}{p_0\left(\bigcup\limits_{\mathcal S \in E_n} C_{\mathcal S}\right)} =\frac{f_0(u_1)\dots f_0(u_k)}{\sum\limits_{\mathcal S\in E_n} f_0(C_{s^{(1)}})\dots f_0(C_{s^{(k)}})},
\end{equation}
where $(u_{1:k}) \in C := \bigcup\limits_{\mathcal S \in E_n} C_{\mathcal S}.$

Under the dyadic policy, the support of $f_0$ is partitioned into $2^n$ subsets with identical probability masses after the final step and each $C_{s^{(i)}}$ is one such subset, for $i = 1,2,\dots, k$. Thus, we have
\begin{equation}
\label{eq:marginal}
f_0(C_{s^{(i)}}) = 2^{-n}, \text{ for $i=1,2,\dots,k$ and $\mathcal S\in E_n$}.
\end{equation}

Let $|E_n|$ be the cardinality of $E_n$. Note that under the dyadic policy, every binary sequence $s$ of length $N$ corresponds to a nonempty set $C_s$. Furthermore, in step $j$, there are ${k \choose x_j}$ ways to choose the $j^{th}$ row in the matrix satisfying the definition in \eqref{eq:collection}, for $j=1, 2, \dots, n$. Thus, by the product rule,
\begin{equation}
\label{eq:En}
|E_n|=\prod_{j=1}^n {k \choose x_j}.
\end{equation}
By \eqref{eq:marginal} and \eqref{eq:En},
\begin{equation}
\label{eq:sum}
p_0(C)=\sum\limits_{\mathcal S\in E_n} f_0(C_{s^{(1)}})\dots f_0(C_{s^{(k)}}) = 2^{-nk}\prod_{j=1}^n {k \choose x_j}.
\end{equation}

Combining the result above and the definition of the differential entropy, we have
\begin{equation}
\label{eq:Hp}
\begin{split}
H(p_n) = & -\int\limits_{C} p_n(u_{1:k}) \log(p_n(u_{1:k}))\, du_{1:k}\\
=& -\int\limits_{C} \frac{p_0(u_{1:k})}{p_0(C)} \log\left(\frac{p_0(u_{1:k})}{p_0(C)}\right)\, du_{1:k}\\
=& \left[\frac{\log \left(p_0(C)\right)}{p_0(C)} \int\limits_{C} p_0(u_{1:k})\, du_{1:k}\right] + \left[- \frac{1}{p_0(C)}\int\limits_{C} p_0(u_{1:k})\log \left(p_0(u_{1:k})\right)\, du_{1:k}\right]\\
=& I_1(n)+I_2(n),
\end{split}
\end{equation}
where $I_1(n)$ and $I_2(n)$ denote the first term and the second term in the last equation above. $I_1(n)$ can be easily computed as
\begin{equation}
\label{eq:I1}
I_1(n) =\frac{\log \left(p_0(C)\right)}{p_0(C)} \int\limits_{C} p_0(u_{1:k})\, du_{1:k}=\log \left(p_0(C)\right)= -\left(nk - \sum\limits_{j=1}^n \log{k \choose x_j}\right).
\end{equation}

Now consider $X_{1:n}$ as random variables. By Theorem \ref{thm:postY}, we see that under the dyadic policy, $X_{1:n}$ is a sequence of i.i.d. random variables $\Bin\left(k,\frac{1}{2}\right)$. Moreover, $I_2(n)$ is random through its dependence on the random support $C$. Therefore, combining \eqref{eq:Hp} and \eqref{eq:I1}, we prove the claim in Lemma \ref{lem:dyadic_Hp} by setting $Z_j=k-\log {k \choose X_j}$.
\end{proof}

Define $I_2(0)=H(p_0)=H_0$ so that \eqref{eq:Hn} is also satisfied for $n=0$. Applying the result above, we can furthermore analyze the term $I_2(n)$ and derive the following lemma.
\begin{lem}
\label{lem:martingale_conv}
Assume there exists $M>0$ such that $f_0(u)\leq M$ for all $u\in \mathbb R$. Then the random variable $I_2(n)$ in \eqref{eq:Hn} converges to a random variable $I_2(\infty)$ almost surely as $n\rightarrow \infty$, where $I_2(\infty)$ is a random variable and $E[|I_2(\infty)|]<\infty$.
\end{lem}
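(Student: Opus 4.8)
The plan is to recognize $I_2(n)$ as a Doob martingale and invoke the martingale convergence theorem. Recall from the proof of Lemma~\ref{lem:dyadic_Hp} that
\[
I_2(n) = -\frac{1}{p_0(C)}\int_{C} p_0(u_{1:k})\log p_0(u_{1:k})\,du_{1:k},
\qquad C = \bigcup_{\mathcal S\in E_n} C_{\mathcal S}.
\]
Since $p_n(u_{1:k}) = p_0(u_{1:k})/p_0(C)$ on $C$ and vanishes elsewhere, this integral is precisely the expectation of the random variable $Y := -\log p_0(\theta) = -\sum_{i=1}^k \log f_0(\theta_i)$ under the posterior distribution $p_n$. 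Because the dyadic policy is non-adaptive, so that (as noted before Lemma~\ref{lem:dyadic_Hp}) the history reduces to $X_{1:n}$ and $p_n$ is determined by $X_{1:n}$, integrating against $p_n$ is exactly conditioning on $X_{1:n}$. Hence, as random variables,
\[
I_2(n) = E[Y \mid \mathcal{F}_n], \qquad \mathcal{F}_n := \sigma(X_{1:n}).
\]
As a sanity check, for $n=0$ this gives $E[Y] = H(p_0) = H_0$, matching the convention $I_2(0)=H_0$.

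First I would establish that $Y\in L^1$. By the triangle inequality and the i.i.d.\ assumption, $E[|Y|]\le k\,E[|\log f_0(\theta_1)|]$, so it suffices to bound $E[|\log f_0(\theta_1)|]$. The boundedness hypothesis $f_0\le M$ gives $\log f_0\le \log M$, so the positive part satisfies $(\log f_0)^+\le(\log M)^+<\infty$ and therefore $E[(\log f_0(\theta_1))^+]<\infty$. For the negative part, the standing assumption that $f_0$ has finite differential entropy means $E[-\log f_0(\theta_1)] = H(f_0)$ is finite (and indeed $H(p_0)=kH(f_0)$); combined with the bound on the positive part, the identity $E[-\log f_0(\theta_1)] = E[(-\log f_0(\theta_1))^+] - E[(\log f_0(\theta_1))^+]$ forces $E[(-\log f_0(\theta_1))^+]<\infty$. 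Adding the two parts yields $E[|\log f_0(\theta_1)|]<\infty$, hence $Y\in L^1$.

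Next, since $(\mathcal{F}_n)_{n\ge 0}$ is an increasing filtration and $Y\in L^1$, the sequence $(I_2(n),\mathcal{F}_n)_{n\ge 0}$ is a uniformly integrable (closed) martingale by the tower property. L\'evy's upward convergence theorem then gives $I_2(n)\to I_2(\infty):=E[Y\mid\mathcal{F}_\infty]$ almost surely and in $L^1$ as $n\to\infty$, where $\mathcal{F}_\infty = \sigma\!\left(\bigcup_n\mathcal{F}_n\right)$. Finally, by the conditional Jensen inequality and the tower property,
\[
E[|I_2(\infty)|] = E\big[\,|E[Y\mid\mathcal{F}_\infty]|\,\big]\le E\big[E[|Y|\mid\mathcal{F}_\infty]\big] = E[|Y|]<\infty,
\]
which gives both assertions of the lemma.

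The main obstacle is the integrability step. The finite differential entropy hypothesis by itself only controls the signed quantity $E[-\log f_0(\theta_1)]$, whose integrand can be arbitrarily large and positive on the region where $f_0$ is large, so it does not directly yield $Y\in L^1$. It is precisely the uniform bound $f_0\le M$ that caps the positive part of $\log f_0$ and thereby upgrades the one-sided entropy bound to the two-sided integrability needed to close the martingale argument; without it the martingale $E[Y\mid\mathcal{F}_n]$ need not be $L^1$-bounded and the convergence could fail.
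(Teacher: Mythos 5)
Your proof is correct, and it takes a genuinely different route from the paper's. You identify $I_2(n)$ as the Doob martingale $E[Y\mid\mathcal{F}_n]$ for $Y=-\log p_0(\theta)$ --- a correct identification, since under the non-adaptive dyadic policy integrating $-\log p_0$ against $p_n$ is exactly conditioning on $X_{1:n}$ --- and then close the argument with L\'evy's upward theorem. The paper instead verifies the martingale property by hand: it expands $E[I_2(n+1)\mid X_{1:n}]$ via the decomposition \eqref{eq:Hn}, applies the one-step entropy recursion of Lemma~\ref{lem:Eentropy_onestep}, and uses the facts that under the dyadic policy $X_{n+1}\sim\Bin\left(k,\tfrac12\right)$ independently of $X_{1:n}$ and $E[Z_{n+1}]=H\left(\Bin\left(k,\tfrac12\right)\right)$; it then establishes $L^1$-boundedness by using $f_0\le M$ to obtain the pointwise bound $I_2(n)\ge -k\log M$, which combined with $E[I_2(n)]=H_0$ gives $\sup_n E[|I_2(n)|]\le H_0+2k\log M$, and finally invokes the almost-sure martingale convergence theorem. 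The hypothesis $f_0\le M$ enters the two arguments in parallel ways: the paper uses it to cap the integrand of $I_2(n)$ from above, while you use it to cap $(\log f_0)^+$ so that the standing finite-entropy assumption upgrades to $Y\in L^1$ (a step you handle correctly, and which is genuinely needed, as you note). What your route buys is a stronger conclusion with less computation: closedness makes uniform integrability automatic, you obtain $L^1$ convergence in addition to almost-sure convergence, and you identify the limit explicitly as $E[Y\mid\mathcal{F}_\infty]$ --- none of which the paper's argument yields. What the paper's route buys is self-containedness: it reuses machinery (the variables $Z_j$ and the entropy recursion) already developed for Theorem~\ref{thm:dyadic_conv}, and it needs only the basic $L^1$-bounded martingale convergence theorem rather than the recognition of $I_2(n)$ as the conditional expectation of a single fixed integrable random variable.
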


\begin{proof}
We prove almost sure convergence using the martingale convergence theorem (see Theorem 35.5 in \cite{Billingsley}). First, let us calculate the expected value of $Z_j$ as follows.
\begin{equation}
\begin{split}
E(Z_j) &= \sum_{j=0}^k \left(k-\log{k \choose j}\right) {k \choose j} 2^{-k}.
\end{split}
\end{equation}

Therefore, $E(Z_j) = H\left(\Bin\left(k,\frac{1}{2}\right)\right)$ since
\begin{equation}
H\left(\Bin\left(k,\frac{1}{2}\right)\right) = -\sum_{j=0}^k {k \choose j} 2^{-k} \log\left({k \choose j} 2^{-k}\right)=\sum_{j=0}^k \left(k - \log{k \choose j}\right) {k \choose j} 2^{-k}.
\end{equation}

Now, let us verify that $I_2(n)$ is a martingale. According to \eqref{eq:Hn},
\begin{subequations}
\begin{align}
E[I_2(n+1)|X_{1:n}]&=E\left[H(p_{n+1})+\sum_{j=1}^{n+1} Z_j\Bigg| X_{1:n}\right]\\
&=H(p_n)-H(X_{n+1}\|X_{1:n})+\sum_{j=1}^n Z_j+E [Z_{n+1}| X_{1:n}]\label{eq:martingale_1}\\
&=I_2(n)-H(X_{n+1}\|X_{1:n})+E[Z_{n+1}| X_{1:n}]\\
&=I_2(n)-H\left(\Bin\left(k,\frac{1}{2}\right)\right)+E[Z_{n+1}]\label{eq:martingale_2}\\
&=I_2(n)\label{eq:martingale_3},
\end{align}
\end{subequations}
where \eqref{eq:martingale_1} is true by \eqref{eq:Eentropy_onestep} in Lemma \ref{lem:Eentropy_onestep} and the fact that $Z_{1:n}$ is $\sigma(X_{1:n})$-measurable. \eqref{eq:martingale_2} holds because we have proved under the dyadic policy, $X_{n+1}|X_{1:n}\sim \Bin\left(k,\frac{1}{2}\right)$, which is independent of $X_{1:n}$, and $Z_{n+1}$ is also independent of $X_{1:n}$. \eqref{eq:martingale_3} holds because we have proved $E[Z_{n+1}]=H\left(\Bin\left(k,\frac{1}{2}\right)\right)$.

Next, we want to show that $E[|I_2(n)|]<\infty$. Let us fix $X_{1:n}=x_{1:n}$ and expand $I_2(n)$ in as
\begin{equation}
\label{eq:I2}
\begin{split}
I_2(n) &=- \frac{1}{p_0(C)}\sum\limits_{\mathcal S \in E_n} \int\limits_{C_\mathcal S} f_0(u_1)\dots f_0(u_k)\log \left(f_0(u_1)\dots f_0(u_k)\right)\, du_{1:k}\\
&=- \frac{1}{p_0(C)}\sum\limits_{\mathcal S \in E_n} \sum\limits_{i=1}^k \left(\int\limits_{C_{s^{(i)}}} f_0(u_i)\log (f_0(u_i)) \,du_i \prod_{j\neq i}^k  \int\limits_{C_{s^{(j)}}}f_0(u_k)\, du_j\right)\\
&=- \frac{1}{p_0(C)}\sum\limits_{\mathcal S \in E_n} \sum\limits_{i=1}^k 2^{-n(k-1)} \int\limits_{C_{s^{(i)}}} f_0(u_i)\log (f_0(u_i)) \,du_i.
\end{split}
\end{equation}

Now consider the integral $\int_{C_{s^{(i)}}} f_0(u_i)\log (f_0(u_i)) \,du_i$. Since $f_0(u_i)\le M$, we can obtain an upper bound for $\int_{C_{s^{(i)}}} f_0(u_i)\log (f_0(u_i)) \,du_i$ as
\begin{equation}
\label{eq:I2_upper}
\int_{C_{s^{(i)}}} f_0(u_i)\log (f_0(u_i)) \,du_i \le\log M \int_{C_{s^{(i)}}} f_0(u_i) \,du_i=2^{-n} \log M.
\end{equation}

Substituting \eqref{eq:sum} and \eqref{eq:I2_upper} into \eqref{eq:I2}, we have
\begin{equation}
\label{eq:C}
I_2(n)\ge -k \log M.
\end{equation}

Furthermore, define $I_2^+(n)=\max(I_2(n),0), I_2^-(n)=\max(-I_2(n),0)$ and we have
\begin{equation}
E[|I_2(n)|]=E[I_2^+(n)]+E[I_2^-(n)]=E[I_2(n)]+2E[I_2^-(n)]\le H_0+2k\log M,
\end{equation}
where the last equation follows from the fact that $E[I_2(n)]=I_2(0)=H_0$ since $I_2$ is a martingale and $I_2^-(n)\le k\log M$ by \eqref{eq:C}. Therefore, using the martingale convergence theorem, $I_2(n)$ converges to a random variable $I_2(\infty)$ almost surely with $E[|I_2(\infty)|]\le H_0+2k\log M$.

\end{proof}

From the proof above we can see that if $f_0$ is uniform over $(0,1]$, $f_0(u_i)=1$ for all $u_i\in (0,1]$ and thus the term $I_2$ is $0$. Therefore, in this case, $H(p_n)=- \left(nk - \sum_{j=1}^n \log{k \choose X_j}\right)$.

\section{Definition of the Sequential Bifurcation Policy}
\label{sec:bifurcation}

In this appendix, we define the sequential bifurcation policy used as a benchmark in Figure~\ref{fig:numQs}.
This policy is based on the sequential bifurcation policy of
\cite{BettonvilKleijnen1997}, but adapted slightly to the setting considered in this paper.

We define the sequential bifuration (SB) policy as follows.
At each point in time n, SB maintains a disjoint collection of intervals
$\mathcal{D}_n = \{ D_{n,1}, ...., D_{n,m_n} \}$.
At time $0$, $\mathcal{D}_0 = \{ \mathbb R\}$, and for each $n$, SB obtains $\mathcal{D}_{n+1}$ and $A_{n+1}$ recursively as follows.  First, SB chooses the interval $D^*_{n}$ in $\mathcal{D}_n$ with the largest mass under the prior, i.e.,
\begin{equation}
D^*_{n} \in \argmax_{D \in \mathcal{D}_n} \int_D f_0(u)\, du.
\end{equation}
Then, SB obtains $A_{n+1}$ by splitting $D^*_n$ at its conditional median under the posterior, and taking the left-hand portion.  SB then creates $\mathcal{D}_{n+1}$ by adding to $\mathcal{D}_{n}\setminus D^*_n$ those intervals $A_{n+1}$ and $D^*_n \setminus A_{n+1}$ shown by $X_{n+1}$ to have at least one object.

This version of the sequential bifurcation policy differs slightly from the policy presented in \cite{BettonvilKleijnen1997} in that
(1) it is designed for the continuum rather for a discrete domain;
(2) it is designed for the case with known $k$, while running it for unknown $k$ (as does \cite{BettonvilKleijnen1997}) would require an additional query of the number of objects in $\mathbb R$ at the start;
(3) it is generalized for the case of a non-uniform prior distribution.

\bibliography{two-targets,screening,20Qbib}

\begin{thebibliography}{10}
\providecommand{\url}[1]{#1}
\csname url@samestyle\endcsname
\providecommand{\newblock}{\relax}
\providecommand{\bibinfo}[2]{#2}
\providecommand{\BIBentrySTDinterwordspacing}{\spaceskip=0pt\relax}
\providecommand{\BIBentryALTinterwordstretchfactor}{4}
\providecommand{\BIBentryALTinterwordspacing}{\spaceskip=\fontdimen2\font plus
\BIBentryALTinterwordstretchfactor\fontdimen3\font minus
  \fontdimen4\font\relax}
\providecommand{\BIBforeignlanguage}[2]{{%
\expandafter\ifx\csname l@#1\endcsname\relax
\typeout{** WARNING: IEEEtran.bst: No hyphenation pattern has been}%
\typeout{** loaded for the language `#1'. Using the pattern for}%
\typeout{** the default language instead.}%
\else
\language=\csname l@#1\endcsname
\fi
#2}}
\providecommand{\BIBdecl}{\relax}
\BIBdecl

\bibitem{Du2000}
D.~Du and F.~Hwang, \emph{Combinatorial Group Testing and its
  Applications}.\hskip 1em plus 0.5em minus 0.4em\relax World Scientific Pub
  Co., 2000.

\bibitem{Ulam1976}
S.~Ulam, \emph{Adventures of a Mathematician}.\hskip 1em plus 0.5em minus
  0.4em\relax New York: Charles Scibner's Sons, 1976.

\bibitem{Renyi}
A.~Renyi, \emph{A Diary on Information Theory}.\hskip 1em plus 0.5em minus
  0.4em\relax Akademiai Kiado, 1984.

\bibitem{berlekamp64}
E.~R. Berlekamp, ``Block coding with noiseless feedback,'' Ph.D. dissertation,
  Department of Electrical Engineering,MIT, 1964.

\bibitem{Marini2005}
C.~Marini and F.~Montagna, ``{Probabilistic Variants of Renyi-Ulam Game and
  Many-Valued Logic},'' \emph{Task Quarterly}, vol.~9, no.~3, pp. 317--335,
  2005.

\bibitem{Pelc1989}
A.~Pelc, ``Searching with known error probabiity,'' \emph{Theoretical Computer
  Science}, vol.~63, no. 185-202, 1989.

\bibitem{JAP}
B.~Jedynak, P.~I. Frazier, and R.~Sznitman, ``{Twenty Questions with Noise:
  Bayes Optimal Policies for Entropy Loss},'' \emph{Journal of Applied
  Probability}, no.~1, pp. 114--136, 2012.

\bibitem{Horstein63}
M.~Horstein, ``Sequential transmission using noiseless feedback,'' \emph{IEEE
  Transactions on Information Theory}, vol.~9, no.~3, pp. 136--143, Jul 1963.

\bibitem{castro2009active}
R.~Castro and R.~Nowak, ``Active sensing and learning,'' \emph{Foundations and
  Applications of Sensor Management}, 2007.

\bibitem{WaeberFH13}
R.~Waeber, P.~I. Frazier, and S.~G. Henderson, ``Bisection search with noisy
  responses.'' \emph{SIAM J. Control and Optimization}, vol.~51, no.~3, pp.
  2261--2279, 2013.

\bibitem{TsiligkaridisSH13}
T.~Tsiligkaridis, B.~M. Sadler, and A.~O. {Hero III}, ``Collaborative 20
  questions for target localization,'' \emph{CoRR}, vol. abs/1306.1922, 2013.

\bibitem{Stinson2000}
D.~R. Stinson, T.~V. Trung, and R.~Wei, ``Secure frameproof codes, key
  distributiong patterns, group testing algorithms and related structures,''
  \emph{Journal of Statistical Planning and Inference}, vol.~86, pp. 595--617,
  May 2000.

\bibitem{Eppstein2007}
D.~Eppstein, M.~T. Goodrich, and D.~S. Hirschberg, ``Improved combinatorial
  group testing algorithms for real-world problem sizes,'' \emph{SIAM Journal
  on Computing}, vol.~36, pp. 1360--1375, January 2007.

\bibitem{Harvey2007}
N.~J.~A. Harvey, M.~P{\u a}tra{\c s}cu, Y.~Wen, S.~Yekhanin, and V.~W.~S. Chan,
  ``Non-adaptive fault diagnosis for all-optical networks via combinatorial
  group testing on graphs,'' \emph{IEEE International Conference on Computer
  Communications}, pp. 697--705, May 2007.

\bibitem{Porat2008}
E.~Porat and A.~Rothschild, \emph{Automata, Languages and Programming}.\hskip
  1em plus 0.5em minus 0.4em\relax Berlin Heidelberg: Springer, 2008, vol.
  5125, ch. Explicit Non-adaptive Combinatorial Group Testing Schemes, pp.
  748--759.

\bibitem{li2014group}
T.~Li, C.~L. Chan, W.~Huang, T.~Kaced, and S.~Jaggi, ``Group testing with prior
  statistics,'' in \emph{IEEE International Symposium on Information Theory
  (ISIT)}.\hskip 1em plus 0.5em minus 0.4em\relax IEEE, 2014, pp. 2346--2350.

\bibitem{Kauffman1996}
S.~Kauffman, \emph{At Home in the Universe: The Search for the Laws of
  Self-Organization and Complexity}.\hskip 1em plus 0.5em minus 0.4em\relax
  Oxford University Press, 1996.

\bibitem{Buzas2013}
J.~S. Buzas and G.~S. Warrington, ``Optimized random chemistry,'' February
  2013, arXiv:1302.2895 [math.PR].

\bibitem{Chung2001}
F.~Chung, R.~Graham, and T.~Leighton, ``Guessing secrets,'' \emph{The
  Electronic Journal of Combinatorics}, vol.~8, p.~13, 2001.

\bibitem{villemonteix2009informational}
J.~Villemonteix, E.~Vazquez, and E.~Walter, ``An informational approach to the
  global optimization of expensive-to-evaluate functions,'' \emph{Journal of
  Global Optimization}, vol.~44, no.~4, pp. 509--534, 2009.

\bibitem{russo2014learning}
D.~Russo and B.~Van~Roy, ``Learning to optimize via information-directed
  sampling,'' in \emph{Advances in Neural Information Processing Systems
  (NIPS)}, 2014, pp. 1583--1591.

\bibitem{braun2014info}
G.~Braun, S.~Pokutta, and Y.~Xie, ``Info-greedy sequential adaptive compressed
  sensing,'' \emph{arXiv preprint arXiv:1407.0731}, 2014.

\bibitem{Cover2006}
T.~M. Cover and J.~A. Thomas, \emph{Elements of Information Theory},
  2nd~ed.\hskip 1em plus 0.5em minus 0.4em\relax Hoboken, NJ: Wiley, 2006.

\bibitem{Eppstein2012}
M.~J. Eppstein and P.~D.~H. Hines, ``A ``random chemistry'' algorithm for
  identifying collections of multiple contingencies that initiate cascading
  failure,'' \emph{IEEE Transactions on Power Systems}, vol.~27, pp.
  1698--1705, February 2012.

\bibitem{wang2014group}
C.~Wang, Q.~Zhao, and C.-N. Chuah, ``Group testing under sum observations for
  heavy hitter detection,'' \emph{arXiv preprint arXiv:1407.2283}, 2014.

\bibitem{screening_WSC_2012}
P.~Frazier, B.~Jedynak, and L.~Chen, ``Sequential screening: A bayesian dynamic
  programming analysis,'' in \emph{Proceedings of the 2012 Winter Simulation
  Conference}, 2012.

\bibitem{BettonvilKleijnen1997}
B.~W.~M. Bettonvil and J.~P.~C. Kleijnen, ``{Searching for important factors in
  simulation models with many factors: sequential bifurcation},''
  \emph{European Journal of Operational Research}, vol.~96, no.~1, pp.
  180--194, 1997.

\bibitem{AliFleHasFua12}
K.~Ali, F.~Fleuret, D.~Hasler, and P.~Fua, ``A real-time deformable detector,''
  \emph{IEEE Transactions on Pattern Analysis and Machine Intelligence},
  vol.~34, no.~2, 2012.

\bibitem{Mor09}
D.~Mortlock, ``Astronomy: The age of the quasars,'' \emph{Nature}, vol. 514,
  pp. 43--44, 2009.

\bibitem{MerRodAloSchDef09}
A.~Merchan-Perez, J.~Rodriguez, L.~Alonso-Nanclares, A.~Schertel, and
  J.~DeFelipe, ``Counting synapses using fib/sem microscopy: A true revolution
  for ultrastructural volume reconstruction,'' \emph{Frontiers in
  Neuroanatomy}, vol.~3, no.~18, 2009.

\bibitem{lempitsky2010learning}
V.~Lempitsky and A.~Zisserman, ``Learning to count objects in images,'' in
  \emph{Advances in Neural Information Processing Systems (NIPS)}, 2010, pp.
  1324--1332.

\bibitem{idrees2013multi}
H.~Idrees, I.~Saleemi, C.~Seibert, and M.~Shah, ``Multi-source multi-scale
  counting in extremely dense crowd images,'' in \emph{IEEE Conference on
  Computer Vision and Pattern Recognition (CVPR)}.\hskip 1em plus 0.5em minus
  0.4em\relax IEEE, 2013, pp. 2547--2554.

\bibitem{barinova2012detection}
O.~Barinova, V.~Lempitsky, and P.~Kholi, ``On detection of multiple object
  instances using hough transforms,'' \emph{IEEE Transactions on Pattern
  Analysis and Machine Intelligence}, vol.~34, no.~9, pp. 1773--1784, 2012.

\bibitem{FrWeorDp}
P.~Frazier, \emph{Wiley Encyclopedia of Operations Research and Management
  Science}.\hskip 1em plus 0.5em minus 0.4em\relax Wiley, 2010, ch. Learning
  with Dynamic Programming.

\bibitem{Poisson1837}
S.~D. Poisson, \emph{Recherches sur la Probabilit\`{e} des jugements en
  mati\'{e} criminelle et en mati\'{e}re civile}.\hskip 1em plus 0.5em minus
  0.4em\relax Paris: Bachelier, 1837.

\bibitem{harremoes2003entropy}
P.~Harremoes, C.~Vignat \emph{et~al.}, ``An entropy power inequality for the
  binomial family,'' \emph{JIPAM. J. Inequal. Pure Appl. Math}, vol.~4, no.~5,
  2003.

\bibitem{Billingsley}
P.~Billingsley, \emph{Probability and Measure}, anniversary~ed.\hskip 1em plus
  0.5em minus 0.4em\relax Hoboken, NJ: Wiley, 2012.

\bibitem{SznJed10}
R.~Sznitman and B.~Jedynak, ``Active testing for face detection and
  localization,'' \emph{IEEE Transactions on Pattern Analysis and Machine
  Intelligence}, vol.~32, no.~10, pp. 1914--1920, 2010.

\bibitem{JedFraSzn12}
B.~Jedynak, P.~Frazier, and R.~Sznitman, ``Twenty questions with noise: Bayes
  optimal policies for entropy loss,'' \emph{IEEE Transactions on Pattern
  Analysis and Machine Intelligence}, vol.~1, pp. 114--136, 2012.

\bibitem{GemGem84}
S.~Geman and D.~Geman, ``Stochastic relaxation, gibbs distributions, and the
  bayesian restoration of images,'' \emph{IEEE Transactions on Pattern Analysis
  and Machine Intelligence}, vol.~6, pp. 721--741, 1984.

\bibitem{SznRicTayJedHag13}
R.~Sznitman, R.~Richa, R.~Taylor, B.~Jedynak, and G.~Hager, ``Unified detection
  and tracking of instruments during retina microsurgery,'' \emph{IEEE
  Transactions on Pattern Analysis and Machine Intelligence}, vol.~35, no.~5,
  pp. 1263--1273, 2013.

\bibitem{SznLucFraJedFua13}
R.~Sznitman, A.~Lucchi, P.~Frazier, B.~Jedynak, and P.~Fua, ``An optimal policy
  for target localization with application to electron microscopy,'' in
  \emph{Proceedings of the 30th International Conference on Machine Learning
  (ICML)}, 2013, pp. 1--9.

\end{thebibliography}
\bibliographystyle{IEEEtran}

\end{document}